\DeclareMathAlphabet{\mathpzc}{OT1}{pzc}{m}{it}%
 \newcommand{\eps}{\varepsilon}%
\newcommand{\N}{\mathbb{N}}%
 \newcommand{\prob}{\mathbb{P}}%
 \newcommand{\E}{\mathbb{E}}%
 \renewcommand{\P}{\mathbb{P}}%
 \newcommand{\osf}{{\sf 1}}%
 \renewcommand{\N}{{\mathbb N}}%
 \newcommand{\loc}{\mathbb L}%
\newcommand{\ind}{\mathbb I}%
\newcommand{\eN}{\varepsilon_{_N}}%
\newcommand{\rN}{r_{_N}}%
\newcommand{\sub}[2]{\ensuremath{{#1}_{_{#2}}}}%
 \newcommand{\EE}[1]{\ensuremath{\E\left[ #1 \right] }}%
\newcommand{\crr}[2]{\ensuremath{\textrm{cor}_i\left( #1, #2 \right)}}%
\newcommand{\corr}[2]{\ensuremath{\textrm{cor}\left( #1, #2 \right)}}%
 \newcommand{\PP}[1]{\ensuremath{\P\left( #1 \right) }}%
\newcommand{\jp}{\ensuremath{\hat{\jmath}}}%%
\newcommand{\jt}{\ensuremath{\tilde{\jmath}}}% 
\newcommand{\achr}{\boldsymbol{\vdash}}%
\newcommand{\bchr}{\boldsymbol{\dashv}}%             
\newcommand{\chr}{\boldsymbol{\vdash}\!\!\boldsymbol{\dashv}}%             
\newcommand{\cacb}{\boldsymbol{\bullet}\!\!\!\boldsymbol{-}\!\!\!\boldsymbol{\bullet}}%
\newcommand{\cachr}{\bullet\!\!\boldsymbol{-}}%%
\newcommand{\cabchr}{\boldsymbol{\bullet}\!\!\!\boldsymbol{\dashv}}%%
\newcommand{\ou}[2]{\genfrac{}{}{0pt}{3}{#1}{#2}}%
\newcommand{\acb}{\boldsymbol{-}\!\!\bullet}%
\renewcommand{\subsection}{\secdef \subsct\sbsect}
\newcommand{\subsct}[2][default]{\refstepcounter{subsection}
\vspace{0.15cm}
{\flushleft\bf \arabic{section}.\arabic{subsection}~\bf #1  }
\nopagebreak\nopagebreak}
\newcommand{\sbsect}[1]{\vspace{0.1cm}\noindent
{\bf #1}\vspace{0.1cm}}
\newtheorem{theorem}{Theorem}[figure]
\newtheorem{lemma}[theorem]{Lemma}
\newtheorem{prop}[theorem]{Proposition}
\newtheorem{definition}[theorem]{Definition}
\newtheorem{remark}[theorem]{Remark}
\newcommand{\TMI}{\ensuremath{\ind''}}
\newcommand{\OMI}{\ensuremath{\ind'}}%
\newtheoremstyle{thm}{1.5ex}{1.5ex}{\itshape\rmfamily}{}
{\bfseries\rmfamily}{}{2ex}{}
\newtheoremstyle{rem}{1.3ex}{1.3ex}{\rmfamily}{}
{\itshape\rmfamily}{}{1.5ex}{}
\theoremstyle{rem}
\def\thebibliography#1{\section*{References}
  \list%
  {\arabic{enumi}.}%                          *** style of referencenumber ***
    {\settowidth\labelwidth{[#1]}\leftmargin\labelwidth
    \advance\leftmargin\labelsep
    \parsep0pt\itemsep0pt
    \usecounter{enumi}}
    \def\newblock{\hskip .11em plus .33em minus .07em}
    \sloppy                   % \clubpenalty4000\widowpenalty4000
    \sfcode`\.=1000\relax}
\begin{document}%
\begin{center}%
\LARGE
An ancestral recombination graph for diploid populations with skewed offspring distribution %
\end{center}
\begin{center}%
  \Large
  Matthias Birkner$^1$, Jochen Blath$^2$, Bjarki Eldon$^{2,*}$ 
  \end{center}
\noindent   $^1$Institut f\"ur Mathematik, Johannes-Gutenberg-Universit\"at Mainz, 55099 Mainz, Germany\\
\noindent    $^2$Institut f\"ur Mathematik, Technische Universit\"at Berlin, 10623 Berlin, Germany \\
  \date{\today}%

\clearpage\pagebreak\newpage

Running title:\\
  An ancestral recombination graph admitting simultaneous multiple mergers \\

  Keywords:\\
  ancestral recombination graph, diploidy, skewed offspring
  distribution, simultaneous multiple merger coalescent processes,
  correlation in coalescence times, linkage disequilibrium, ratios of coalescence times.\\

  Corresponding author:\\ 
  Bjarki Eldon\\
  Institute f\"ur Mathematik,  Technische Universit\"at Berlin, Strasse des 17.\ Juni 136, 10623 Berlin, Germany \\
  office: +49 303 142 5762 \\
  eldon@math.tu-berlin.de

\clearpage\pagebreak\newpage

\begin{abstract} 
  A large offspring number diploid biparental multilocus population
  model of Moran type is our object of study.  At each timestep, a
  pair of diploid individuals drawn uniformly at random contribute
  offspring to the population.  The number of offspring can be large
  relative to the total population size.  Similar `heavily skewed'
  reproduction mechanisms have been considered by various authors
  recently, cf.\ e.g.\ Eldon and Wakeley (2006, 2008), and reviewed by
  Hedgecock and Pudovkin (2011).  Each diploid parental individual
  contributes exactly one chromosome to each diploid offspring, and
  hence ancestral lineages can only coalesce when in distinct
  individuals. A separation of timescales phenomenon is thus observed.
  A result of M\"{o}hle (1998) is extended to obtain convergence of
  the ancestral process to an ancestral recombination graph
  necessarily admitting simultaneous multiple mergers of ancestral
  lineages. The usual ancestral recombination graph is obtained as a
  special case of our model when the parents contribute only one
  offspring to the population each time.
  
  Due to diploidy and large offspring numbers, novel effects appear.
  For example, the marginal genealogy at each locus admits
  simultaneous multiple mergers in up to four groups, and different
  loci remain substantially correlated even as the recombination rate
  grows large.  Thus, genealogies for loci far apart on the same
  chromosome remain correlated.  Correlation in coalescence times for
  two loci is derived and shown to be a function of the coalescence
  parameters of our model.  Extending the observations by Eldon and
  Wakeley (2008), predictions of linkage disequilibrium are shown to
  be functions of the reproduction parameters of our model, in
  addition to the recombination rate.  Correlations in ratios of
  coalescence times between loci can be high, even when the
  recombination rate is high and sample size is large, in large
  offspring number populations, as suggested by simulations, hinting
  at how to distinguish between different population models.

\end{abstract}

\clearpage\pagebreak\newpage

Diploidy, in which each offspring receives two sets of chromosomes,
one from each of two distinct diploid parents, is fairly common among
natural populations.  Mathematical models in population genetics tend
to assume, however, that all individuals in a population are haploid,
simplifying the mathematics.  Mendel's Laws describe the mechanism of
inheritance as composed of two main steps, equal segregation (First
Law), and independent assortment (Second Law).  The First Law
proclaims gametes are haploid, i.e.\ carry only one of each pair of
homologous chromosomes.  Most models in population genetics are thus
models of chromosomes, or gene copies.  Mendel's Second Law proclaims
independent assortment of alleles at different genes, or loci, into
gametes.  Linkage of alleles on chromosomes, resulting in non-random
association of alleles at different loci into gametes, is of course an
important exception to the Second Law.

Coalescent processes \citep{K82,K82b,H83a,T83} describe the ancestral
relations of chromosomes (or gene copies) drawn from a natural
population.  The coalescent was initially derived from a
\cite{C74} haploid exchangeable population model.  Related ancestral
processes take into account population structure \citep{N90,H97},
selection \citep{KN97,NK97,EGT10}, and recombination between linked loci
\citep{H83b,G91,GM97}.  The coalescent has proved to be an important
advance in theoretical population genetics, and a valuable tool for
inference of evolutionary histories of populations.

Ancestral recombination graphs (ARG) \citep{H83b,G91,GM97} trace
ancestral lineages of gene copies at linked loci, in which linkage is
broken up by recombination.  An ARG is a branching-coalescing graph,
in which recombination leads to branching of ancestral chromosomes,
and coalescence to segments rejoining.  Coalescence events in an ARG
may not lead to coalescence of gene copies at individual loci.  An
example ARG for two linked loci is given below, labelled as $ARG(1)$,
with notation borrowed from \cite{D02}.  The labels $a$ and $b$ refer
to the two alleles (types) at locus 1 and 2, respectively.  A single
chromosome with two linked alleles is denoted by $(ab)$, while
chromosomes carrying ancestral alleles at only one locus are denoted
$(a)$ and $(b)$.  When coalescence occurs at either
  locus, the number of alleles at the corresponding locus is reduced
  by one.  The absorbing state, either $(ab)$ or $(a)(b)$, is reached when
alleles at both loci have coalesced.

\begin{displaymath}%
  \begin{split}%
    ARG(1):&\quad  \boldsymbol{(ab)}\boldsymbol{(ab)} \overset{r}{\to} \boldsymbol{(a)}\boldsymbol{(b)}\boldsymbol{(ab)}\overset{c}{\to}\boldsymbol{(ab)(b)}\\
    & \quad \overset{r}{\to} \boldsymbol{(a)(b)(b)}\overset{c}{\to} \boldsymbol{(a)(b)} \\\\
ARG(2):&\quad \boldsymbol{(ab)}\boldsymbol{(ab)} \overset{r}{\to} \boldsymbol{(a)}\boldsymbol{(b)}\boldsymbol{(ab)}  \overset{r}{\to} \boldsymbol{(a)}\boldsymbol{(b)}\boldsymbol{(a)}\boldsymbol{(b)}\\
& \quad \overset{c}{\to}\boldsymbol{(a)}\boldsymbol{(b)} \\
\end{split}\end{displaymath}%
%%%
In $ARG(1)$, the first transition is a recombination, denoted by
$\overset{r}{\to}$, followed by a coalescence $(\overset{c}{\to})$, in
which the two alleles at locus 1 coalesce.   Graph $ARG(1)$
serves to illustrate two important concepts we will be concerned with,
namely correlation in coalescence times between alleles at different
loci, and the restriction to binary mergers of ancestral lineages.

Correlation in coalescence times between types at different loci
follows from linkage.  Alleles at different loci can become associated
due to a variety of factors, including changes in population size,
natural selection, and population structure.  Within-generation
fecundity variance polymorphism induces correlation between a neutral
locus and the locus associated with the fecundity variance
\citep{T09}.  Sweepstake-style reproduction
\citep{H82,H94,B94,A88,PW90,A04,HP11}, in which few individuals
produce most of the offspring, has also been shown to induce
correlation in coalescence times between loci \citep{EW08}.
Understanding genome-wide correlations in coalescence times becomes
ever more important as multi-loci genetic data becomes ubiquitous.

The ARG exemplified by $ARG(1)$ is characterised by admitting only
binary mergers of ancestral lineages, i.e.\ exactly two lineages
coalesce in each coalescence event.  The restriction to binary mergers
follows from bounds on the underlying offspring distribution, in which
the probability of large offspring numbers becomes negligible in a
large population \citep{K82,K82b}.  Sweepstake-style reproduction, in
which few individuals contribute very many offspring to the
population, have been suggested to explain the `shallow' gene
genealogy observed for many marine organisms
\citep{H82,H94,A88,PW90,B94,A04,HP11}.  Large offspring number models
are models of extremely high variance in individual reproductive
output.  Namely, individuals can have very many offspring, or up to
the order of the population size with non-negligible probability
\citep{schweinsberg03,EW06,SW08,S03,BB09}.  Such models do predict shallow gene
genealogies, and can be shown to give better fit to genetic data
obtained from Atlantic cod \citep{A04} than the Kingman coalescent
\citep{BB08, BBS11, E11, birkner12}.  Different large offspring number models
will no doubt be appropriate for different populations, and the
identification of large offspring number population models for each
population is an open problem.  For the sake of simplicity and
mathematical tractability, the simple large offspring number model
considered by \cite{EW06} will be adapted to our situation.

The coalescent processes derived from large offspring number models
belong to a large class of multiple merger coalescent processes
introduced by \cite{DK99}, \cite{P99}, and \cite{S99}.  Multiple
merger coalescent processes ($\Lambda$-coalescents), as the name
implies, admit multiple mergers of ancestral lineages in each
coalescence event, in which any number of active ancestral lineages
can coalesce, and at most one such merger occurs each time.  In
simultaneous multiple merger coalescent processes \citep{MS01,S00},
any number of multiple mergers can occur each time, i.e.\ distinct
groups of active ancestral lineages can coalesce each time.  The
ancestral recombination graph derived from our diploid large offspring
number model admits simultaneous multiple mergers of ancestral
lineages, as exemplified in $ARG(2)$.  The last transition in $ARG(2)$
is a simultaneous multiple merger, in which the two types at each
locus coalesce to separate ancestral chromosomes.

In order to investigate correlations in coalescence
  times among loci due to skewed offspring distribution, we
  \emph{formally} derive an ancestral recombination graph, or a
  coalescent process for many linked loci, from our diploid large
  offspring number model.  The key to the proof of convergence to an
ancestral recombination graph from our diploid model lies in resolving
the separation of timescales phenomenon we observe.  Following
Mendel's Laws, the two chromosomes of an offspring come from distinct
diploid parents.  Chromosomes can therefore only coalesce when in
distinct individuals.  The ancestral process will consist of two
phases, a dispersion phase occurring on a `fast' timescale, and a
coalescence and recombination phase occurring on a `slow' timescale.
In the dispersion phase, chromosomes paired together in diploid
individuals disperse into distinct individuals.  Coalescence and
recombination will only occur on the slow timescale.  Similar
separation of timescales issues arise in models of populations
structured into infinitely many subpopulations (demes) \citep{TV09}.
When viewing the diploid individuals in our model as `demes', our
scenario departs from those describing structured populations by
allowing only active ancestral lineages residing in {\em separate}
`demes' to coalesce.  A simple extension of a result of \cite{Moe98}
yields convergence in our case.

The limiting process we formally obtain is an ancestral recombination
graph for many loci admitting \emph{simultaneous} multiple mergers of
ancestral chromosomes (lineages).  In simultaneous multiple merger
coalescent processes, so-called $\Xi$-coalescents, different groups of
active ancestral lineages can coalesce to different ancestors at the
same time.  Such coalescent processes were first studied as more
abstract mathematical objects by \cite{S00}, and derived from general
single-locus population models by several authors
\citep{MS01,S03,SW08,BBMST09}.  A $\Xi$-coalescent with necessarily up
to quadruple simultaneous multiple mergers arises at each marginal
locus (ie.\ considering each locus separately) in our model, since
four parental chromosomes are involved in each reproduction event.
This structure is intrinsically owed to our diploidy assumptions.

Formulas for the correlation in coalescence times between two alleles
at two loci are obtained using our ancestral recombination graph
(ARG). As predicted by J.E.\ Taylor (personal communication), these
correlations will not necessarily be small even for loci separated by
high recombination rate.  This is a novel effect not visible in
classical models.  The correlation structure will of course depend on
the underlying coalescent parameters introduced by the large offspring
number model we adopt. An approximation of the expected value of the
statistics $r^2$, commonly used to quantify linkage disequilibrium, is
also investigated using our ARG.  In addition, we employ our ARG to
investigate correlations in ratios of coalescence times between loci
for samples larger than two at each locus, using simulations.
\smallskip

\textbf{ A diploid population model with multilocus recombination and skewed offspring distribution}

\textbf{\textsl{ The forward population model}}

Consider a population consisting of $N \in \mathbb{N} \equiv \{1, 2,
\ldots \}$ diploid individuals, meaning that each individual contains
two \emph{chromosomes}.  Each chromosome is structured into  
$L\in \mathbb{N}$ loci.  We assume  Moran-type dynamics: 
At each timestep (`generation'), either a
\emph{small} or a \emph{large} reproduction event occurs.
%Reproduction events can either be $\emph{small}$ or $\emph{large}$.
In a \emph{small} reproduction event, a single individual chosen
uniformly at random from the population dies, and two other distinct
individuals are chosen as \emph{parents}.  A diploid \emph{offspring}
is then formed by choosing one chromosome from each parent (see
Figure~\ref{figure0}).  The parents always persist.  A small
reproduction event occurs with probability $1 - \varepsilon_{_N}$, in
which $\varepsilon_{_N} \in (0,1)$ depends on $N$.  In a
$\emph{large}$ reproduction event, a fraction $\psi \in (0,1)$ of the
population perishes, meaning that $\lfloor \psi N \rfloor $
individuals die ($\lfloor x \rfloor$ for $x \geq 0$ denotes the
largest integer smaller than $x$).  Two distinct individuals are then
chosen uniformly from the remaining $N - \lfloor \psi N \rfloor$
individuals to act as parents of $\lfloor \psi N \rfloor$ offspring,
and each offspring is formed independently by choosing one
(potentially recombined) chromosome from each parent (see Figure~1).
The population size always stays constant at $N$ diploid individuals.
Individuals that neither reproduce nor die simply persist.

\begin{figure}%
\includegraphics[height=3in,width=3in,angle=-90]{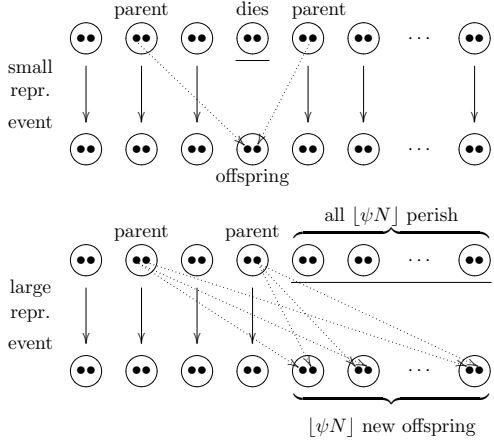}%
\caption{Illustration of `small' and `large' reproduction events
  without recombination.  The dotted arrows indicate the copying of
  parental chromosomes into offspring chromosomes.  The solid arrows
  indicate individuals that persist.  }
   %\begin{center}
     %Figure 1.  Illustration of a `small' reproduction event without recombination. 
   %\end{center}%
\label{figure0}%
\end{figure}%   

Given the two parents, genetic types of the offspring individuals will
then be obtained as follows.  Each parent generates a large number of
potential offspring chromosomes, of which a fraction $1 - r_{_N}$ are
exact copies of the original parental chromosomes, and a fraction
$r_{_N}$ are $\emph{recombinants}$. Each chromosome is structured into
$L$ loci. Recombination occurs only between loci, and never within.
If recombination between a pair of chromosomes in a parent occurs
between loci $\ell$ and $\ell + 1 \in \{1, \dots, L\}$ (where we say
that $X \in \{1, \dots, L-1\}$ is the {\em crossover point}), the two
chromosomes exchange types at all loci from $\ell + 1$ to $L$.  Only
one crossover point is allowed in each recombination event.  Let
$\rN^{(\ell)}$ denote the probability of recombination between loci
$\ell$ and $\ell + 1$ (i.e., the probability that the potential
crossover point $X$ equals $\ell$).  An offspring chromosome is a
recombinant with probability $\rN = \rN^{(1)} + \cdots + \rN^{(L -
  1)}$.  Given that recombination happens, we thus have
$$
\mathbb{P}\{X = \ell\} =  \frac{\rN^{(\ell)}}{\rN^{(1)} + \cdots + \rN^{(L - 1)} },  \quad 1 \le \ell \le L - 1.
$$
%Recombinant chromosomes formed from two or more recombination events will be ignored.  
Each pair of recombined chromosomes is formed independently of all other pairs.
From this large pool of chromosomes, each new
offspring is randomly assigned (independently of all other offspring in the case of a large reproduction event), one potentially recombined chromosome generated by each parent.  In addition, the reproduction mechanism in different generations is assumed to be independent.

%The two offspring chromosomes are selected 
%independently from each parent.  The pairing of chromosomes in
%individuals now determines which chromosomes can coalesce, since only
%chromosomes located in different individuals can merge.  This is in
%direct contrast with models of population structure, in which only
%genetic copies in the same subpopulation (deme) can coalesce.   

%\subsubsection{Ancestral relationships - notation}
%\label{ssn:ancestrynotation}

\textbf{\textsl{ Ancestral relationships - notation}}

Now we switch from the forward population model to its ancestral
process, running backwards in time.  Our sample will consist of $n \in
\{1, \dots, 2N\}$ chromosomes, each subdivided into $L$ loci. Hence,
we need to keep track of the ancestry of $nL$ segments
(types/alleles).  This implies that the different segments could end
up on up to $nL$ distinct chromosomes in $nL$ distinct ancestral
individuals.  The required notation will now be introduced, and our
discourse will therefore necessarily become a little bit technical.
However, we believe that a precise description of the objects we are
working with is essential.  The key to understand our notation is that
we are working with enumerated chromosomes, and ordered loci on
chromosomes.
%%Working with sets will thus be convenient, as we now explain.

At present (that is, time step $m=0$), assume that we consider an
even number $n$ of chromosomes carried by $n/2$ individuals. The
chromosomes are enumerated from 1 to $n$, attaching consecutive
numbers to chromosomes found in the same individual. Our ancestral
process will keep track of the chromosomal ancestral information, that
is, which locus is ancestral to which set of sampled chromosomes. That
is, in each generation $m \in \N_0$ (backward in time), we will record
all chromosomes which are {\em active} in the sense that they carry at
least one locus which is ancestral to the same locus of at least one
chromosome in generation 0.  Denote the number of active chromosomes
in generation $m\in \N_0$ by $\beta(m) \in \N$.  The number $\beta(m)$
of active chromosomes can both increase, due to recombination, and
decrease, due to coalescence, going back in time.

Now we explain our notation for the loci.     For each chromosome $j \in
[n] := \{1, \dots, n\}$, denote by $\loc^{(j)}_\ell(m)$ locus $\ell \in [L]$
on chromosome $j$ at time $m$. The subsets $\loc^{(j)}_\ell(m)$ of
$[n]$ contain all the numbers of chromosomes at present (time step $0$) to
which locus $\ell$ on active chromosome number $j$ at %generation 
time step $m$
is {\em ancestral}.  With this convention, and for each $m \in \N$ and
$\ell \in [L]$, the collection
$$
\{\loc^{(j)}_\ell(m), j=1, \dots, \beta(m)\}
$$ which describes the configuration of segments (i.e.\ which have
coalesced and which have not) at locus $\ell$ at time $m$, is a
partition of $[n]$, i.e.
\begin{displaymath}%
  \loc^{(j)}_\ell(m) \cap \loc^{(\hat{\jmath})}_\ell(m) = \emptyset \quad \mbox{ for } \quad j \neq \jp;
  \end{displaymath}%
  and
  \begin{displaymath}%
 \bigcup_{j=1}^{\beta(m)} \loc^{(j)}_\ell(m) = [n].
\end{displaymath}%
Thus, with our notation we can correctly describe the configuration of
segments among chromosomes at any given time.  By $C^{(j)}(m)$ we
denote chromosome number $j$ at time $m$. At time $m = 0$,
$$
C^{(j)}(0):=\left\{ \loc^{(j)}_1(0), \dots, \loc^{(j)}_L(0)\right\}:= \big\{ \{j\}, \ldots , \{j\}\big\}. %, \quad 1 \le j \le n.
$$

\newcommand{\set}[1]{\ensuremath{ \left\{ #1 \right\} }}%%

For $m > 0$, consider the $j$-th active chromosome at generation $m$,
where $j \in [\beta(m)]$.  The corresponding ancestral
information at generation $m$ is encoded via %a family 
an ordered list of subsets of
$[n]$, setting
\begin{equation}%%
\label{eq:partrepr}%
\begin{split}%
C^{(j)}(m) &:=\left\{ \loc^{(j)}_1(m), \dots, \loc^{(j)}_L(m)\right\}, \\
 &  \loc^{(j)}_\ell(m) \subset [n], \quad  \ell \in [L].
\end{split}%
\end{equation}%

Chromosomes are carried by diploid {\em individuals}. Keeping track of
the grouping of active chromosomes into individuals will be important,
since by our diploid reproduction mechanism, chromosomal lineages can
only coalesce when in {\em distinct} individuals (see Example B
below).  In analogy with our previous nomenclature for our ancestral
process, an {\em active} individual will carry at least one (and at
most two) active chromosome(s). Let $b(m)$ denote the number of active
individuals at generation $m$ where $\beta(m)/2 \le b(m) \le \beta(m)$
for all $m$.  The ordered list of active chromosomes and the number of
active individuals (called a `configuration') at time $m \ge 0$ is
denoted by
\begin{equation}
\label{eq:config}
\xi^{n,N}(m) := \left\{ C^{(1)}(m), \ldots,  C^{(\beta(m))}(m); \, b(m) \right \}. %\quad m \ge 0.
\end{equation}
%%%%%%%%%%%%%%%%%%%%%%%%%%%%%%%%%%%5
%%%%%%%%%%% notation for individuals
An individual number $i$ at generation $m$ is denoted by $\ind_i(m)$,
for $i \in [b(m)]$. An active individual is \emph{single-marked}, if
carrying one active chromosome, and is \emph{double-marked}, if
carrying two active chromosomes.  Specifying the arrangement of
chromosomes in individuals completes our description of the
(prelimiting) ancestral process.  However, since all active
individuals are single-marked in the limiting process, our description
of the arrangement of chromosomes in individuals is given in Section
1.1.1 in the Appendix.
%
%% moved to Appendix - may not be crucial to have in main text
%Analogous to the notation and convention of \cite{MS03}, we assume that in every configuration $\xi^{n,N}(m)$ from \eqref{eq:config}, the order of chromosomes in
%for all  up the ancestral information of active chromosomes into  
%individuals $\ind_i$ for $i \in [b(m)]$ we have 
%\begin{equation}%
%   \label{eq:indgrouping}
%   \begin{split}%
%  \ind_i(m) & =  \left\{ C^{(2i - 1)}(m), C^{(2i)}(m) \right\}\\
%  & \quad \textrm{if  $1 \le i \le \beta(m) - b(m)$;}   \\
%  \ind_i(m)  & = \left\{ C^{(\beta(m) - b(m) + i)}(m), \emptyset \right\} \\
%   & \quad \textrm{if  $\beta(m) - b(m) + 1 \le i \le b(m) $}.\\
%    \end{split}%
%  \end{equation}%
%  where w.l.g.\ we order pairs of chromosomes per diploid individuals, say
%  $\left\{C^{(1)}, C^{(2)}\right\}$, so that $C^{(1)}(m) < C^{(2)}(m)$
%according to the order of the smallest element of $\loc^{(\cdot)}(m) \in
%  C^{(\cdot)}(m)$.  A
%  \emph{double-marked} individual carries two marked chromosomes, i.e.\ is of the form,
%  $\left(\ind_i(m) = \left\{ C^{(2i - 1)}(m), C^{(2i)}(m)\right\}\right)$
%  whereas
%   and
%  a \emph{single-marked} individual $\left(\ind_i = \left\{ C^{(j)},
%      \emptyset \right\}\right)$ carries only one marked chromosome.
%
  That is, each configuration $\xi^{n,N}(m)$ begins with the
  $2(\beta(m)-b(m))$ ordered consecutive chromosomes of the
  $\beta(m)-b(m)$ double marked individuals, followed by the
  $2b(m)-\beta(m)$ chromosomes contained in single-marked individuals.
  With this convention, the set of single- and double marked
  individuals and the grouping of chromosomes into individuals at
  generation $m$ is uniquely determined by a configuration $\xi^{n,N}(m)$ of form  \eqref{eq:config}.
  For notational convenience, the time index $m$ will be omitted if there is no ambiguity.\\

For a given sample size $n$,  the set of all possible 
  ancestral configurations $\xi^{n,N}$
%  =\left\{ C^{(1)}, \ldots, C^{(\beta)};    b \right \}$ 
  will be denoted by $\mathscr{A}_n$.  The subset
  $\mathscr{A}_n^{\texttt{sm}} \subset \mathscr{A}_n$ of all
  configurations $\xi^{n,N}=\left\{ C^{(1)}, \ldots, C^{(\beta)}; b
  \right \}$ with $b=\beta$, i.e.\ configurations consisting only of
  single-marked individuals, will play an important r\^{o}le later
  on.  Indeed, all configurations in the limiting model  will be confined to the set 
  $\mathscr{A}_n^{\texttt{sm}}$,  and the pairing of chromosomes in individuals will become irrelevant. 
  
  The mapping $\mathsf{cd}$ (`complete
  dispersion') $$\mathsf{cd} : \mathscr{A}_n \to
  \mathscr{A}_n^{\texttt{sm}}$$ breaks up the pairing of chromosomes
  into diploid double-marked individuals. More precisely, we define
\begin{equation}
\label{eq:cd}%
\mathsf{cd}\Big(\left\{ C^{(1)}, \ldots,  C^{(\beta)}; b \right \}
\Big) := \left\{ C^{(1)}, \ldots,  C^{(\beta)}; \beta \right \}.
\end{equation}
Configurations in $\mathscr{A}_n^{\texttt{sm}}$ describe 
configurations in which all active individuals are single marked, i.e.\ carry only one active chromosome.

\medskip

The effects of recombination and coalescence on the ancestral
configurations in the case of two typical situations will now be illustrated.  Example A will illustrate 
recombination, and Example B will illustrate  coalescence of two chromosomes. 

\medskip

{\bf Example A.} Suppose the most recent previous event in the history of a given configuration $\xi^{n,N}(m)$  was a {\em small reproduction event} (at time $m+1$), and suppose that the resulting offspring individual is currently part of our configuration at time $m$, but neither of its parents is, and that the offspring individual is single-marked, i.e.\ carries one active chromosome. We obtain $\xi^{n,N}(m+1)$ as follows:
%
%Depending on the presence of recombination, the previous configuration $\xi(m-1)$ can then be %obtained as follows.
\begin{itemize}
\item If there is no recombination during the reproduction event, then the configuration in the
  previous generation remains unchanged, i.e.\ $\xi^{n,N}(m+1)=\xi^{n,N}(m)$.
\item If there is recombination, say at a crossover point $X \in
  \{1, \dots, L-1\}$, suppose the (single) offspring chromosome
  is $$C^{(j)}(m) = \left\{ \loc_1^{(j)}(m), \ldots,
    \loc_L^{(j)}(m)\right\}.$$ Necessarily, the two parental
  chromosomes will be part of the configuration $\xi^{n,N}(m+1)$,
  residing in the same double-marked individual.  More precisely, the
  two parental chromosomes, say $C^{(\jt)}(m+1)$ and $C^{(\jt + 1)}(m+1)$, are determined by (for $\ell \in [L]$)
\begin{displaymath}%
\loc_\ell^{(\jt)}(m+1) = 
\begin{cases}
\loc_\ell^{(j)}(m): & 1 \le \ell \le X, \\
\emptyset: & X + 1 \le \ell \le L,
\end{cases} 
\end{displaymath}%
and
\begin{displaymath}
\loc_{\ell}^{(\jt + 1)}(m+1) = 
\begin{cases}
\emptyset: &  1 \le \ell \le X, \\
\loc_\ell^{(j)}(m): & X + 1 \le \ell \le L.
\end{cases} 
\end{displaymath}%
in which $\emptyset$ denotes loci not carrying any ancestral segments.   
%and 
%\begin{displaymath}
%\loc_{\ell}^{(\tilde j + 1)}(m+1) = 
%\begin{cases}
%\emptyset: & \quad 1 \le \ell \le \alpha, \\
%\loc_\ell^{(j)}(m): &\quad \alpha + 1 \le \ell \le L.
%\end{cases} 
%\end{displaymath}%
The offspring chromosome is of course not part of $\xi^{n,N}(m+1)$.
This transition can be partially trivial (a `silent recombination'
event), if the crossover point is not in an `active' area, i.e.\ if
$\loc_\ell^{(j)} = \emptyset$ for  $X + 1 \le \ell \le L$ (or for 
all $1 \le \ell \le X$).   
By way of example, with $L=3$, if chromosome $C^{(j)} = \big\{\set{j},
  \set{j}, \set{j}\big\}$ was a recombinant, and the crossover point
occurred between loci 2 and 3,   the two parental chromosomes are given by 
$C^{(\tilde{\jmath})} =  \big\{\set{j}, \set{j}, \emptyset\big\}$  and   $C^{(\tilde{\jmath} + 1)} =  \big\{\emptyset, \emptyset, \set{j} \big\}$.
\end{itemize}

{\bf Example B.}  Suppose the most recent previous event in the history of a given configuration $\xi^{n,N}(m)$ of chromosomes at generation $m$ is a {\em small reproduction event} at time $m+1$, leading to a coalescence of lineages. This is the case e.g.\ if both a single-marked offspring individual with active chromosome $C^{\jp}(m)$ is in our configuration $\xi^{n,N}(m)$, as well as its single marked parent (say with currently active chromosome $C^{j}(m)$), from which it actually obtained its active chromosome. Then, to obtain the configuration $\xi^{n,N}(m+1)$, the offspring  chromosome $C^{(\jp)}(m)$ is deleted, and the resulting ancestral chromosome $C^{(j)}(m+1)$ is given by the family of the union of the sets $\loc_\ell^{(j)}$ and $\loc_\ell^{(\jp)}$,
\begin{multline}%
C^{(j)}(m+1) = \left\{\loc_1^{(j)}(m) \cup \loc_1^{(\jp)}(m), \ldots , \right.\\ \left. \loc_L^{(j)}(m) \cup \loc_L^{(\jp)}(m) \right \}.
\end{multline}%
All other chromosomes in $\xi^{n,N}(m+1)$ are copied from $\xi^{n,N}(m)$.  
Again, taking $L=3$, if  chromosomes  $C^{(j)} = \big\{ \set{j},\set{j}, \set{j} \big\}$ and   $C^{(k)} = \big\{\set{k},\set{k}, \set{k}\big\}$  coalesce, the resulting   ancestral chromosome is given by $C^{(j)} =  \big\{\set{j, k}, \set{j, k}, \set{j, k} \big\}$.

%\subsubsection{Scaling and classification of transitions}
%\label{ssn:scaling}

\textbf{\textsl{ Scaling and classification of transitions}}

In order to obtain a non-trivial scaling limit for $\{\xi^{n,N}(m)\}$
as $N \to \infty$, 
%\texttt{considerations of the relation between a haploid
%version of our model and $\Lambda$ coalescents suggest one should}, 
the limit theorem of \citep{MS01} (cf also the special case considered 
in \citep{EW06}) suggests one should, for
some constant $c > 0$, choose probability $1 - c/N^{2}$ for the small
reproduction events, $c/N^2$ for the large reproduction events, i.e.,
setting
\begin{equation}
\label{eq:largereprodratescaling}
\varepsilon_N= c/N^{2}, 
\end{equation}
and speed up %generation 
time by $N^2$.  For the recombination rate to be non-trivial in the limit 
(i.e.\ neither 0 nor infinitely large),
 we require that
all recombination values $\rN^{(\cdot)}$ scale in
units of $N$, i.e.\ for each crossover point  $\ell \in [L]\setminus \{L\}$,  
\begin{equation}%
\label{eq:recombratescaling}%%
\rN^{(\ell)} := \tfrac{r^{(\ell)}}{N}, \quad 0 < r^{(\ell)} < \infty. 
\end{equation}%%
Thus, even though our timescale is in units of $N^2$ timesteps,
recombination is scaled in units of $N$ timesteps.  
%%This discrepancy
%%follows from the number of offspring chromosomes being only two in a
%%small reproduction event.  Hence, one draws an offspring chromosome
%%with probability of order $O(1/N)$ when a small reproduction event
%%occurs.  
On the level of single lineages the probability of recombination is of
the order $O\left(N^{-2}\right)$. Indeed, after a small reproduction
event, the probability of drawing an offspring is $1/N$.  The
probability that the offspring carries a recombined chromosome is of
order $O\left(1/N\right)$.

Given the cornucopia of possible transitions from $\xi^{n,N}(m)$ to
$\xi^{n,N}(m+1)$, it will be important to identify those transitions
which are expected to be visible in the limiting process.

\medskip

All possible transitions fall  into the following three regimes:
\begin{itemize}%
\item 
Those transitions which happen at probability of order $O(N^{-2})$ per generation, which will be visible in the
limit (since time will be scaled by $N^2$). They will be called {\em effective transitions} and will appear at a finite positive rate in the limit. 
\item
Further, there are transitions which happen less frequently, 
typically with probability of order $O(N^{-3})$ or smaller per generation, which will thus become negligible as $N \to \infty$ and hence be invisible in the limit. These will be called {\em negligible transitions}.
\item Finally, there are transitions which happen much more frequently
  (with probability of order $O(N^{-1})$ or even $O(1)$ per generation).
  At first sight, one might think that their presence might lead to
  chaotic behaviour in the limit.  However, this will not be
  case. Instead, these transition will happen `instantaneously' in the
  limit, and result in a projection of the states of our process from
  $\mathscr A_n$ into the subspace $\mathscr A_n^{\texttt{sm}}$, which
  will be the limiting statespace. This will be proved below. Such
  transitions will be called {\em projective} or {\em instantaneous}
  transitions.  The identity transition is a special case of a
  projective transformation.
\end{itemize}

In the Appendix (section \ref{sn:transitions}), a full classification
of all transitions into the above groups is provided.

%\subsubsection{Projective and effective transitions}

\textbf{\textsl{ Instantaneous and effective transitions}}

The most important transitions and their effect for the limiting
process will now be described in detail.
%In the appendix, we will provide a full classification of all transitions into the above groups and also give the formal arguments, resting on a Lemma of M\"ohle \cite{Moe98}, which prove that the genealogy described below indeed is the limiting genealogy for our rescaled population model.
%Consider a sample of size $n$ from our population model at time $m$ and total population size $N$. 
Consider the following most recent events in the history of a set of
lineages, i.e.\ events occurring at time $m+1$, from the perspective
of the ancestral process $\xi^{n,N}(m)$ at time $m$:

\begin{itemize}

\item {\bf Event 1 (silent)}: A small reproduction event occurs, but the
  offspring is not active. This is the most likely event, and is of
  the order $O(1)$, but does not affect our ancestral configuration
  process $\xi^{n,N}(m)$, i.e.\ $\xi^{n,N}(m+1)=\xi^{n,N}(m)$. This
  event leads to an identity transition (a trivial {\em instantaneous
    transition}).

\item {\bf Event 2 (dispersion)}: A small reproduction event occurs, the offspring
  is active in our sample but neither parent is, and recombination
  does not occur. This is a relatively frequent event which occurs
  with a probability of the order $O(N^{-1})$ per generation (since
  the probability that the offspring is in the sample is $b(m)/N$).
  If the offspring carries only one active chromosome, we again see an
  identity transition, i.e\ $\xi^{n,N}(m+1)=\xi^{n,N}(m)$.  If the
  offspring carries two active chromosomes, i.e.\ is a double-marked
  individual, the two active chromosomes will disperse to two separate
  individuals, who will then become single-marked individuals.
  Formally, for $\xi=\left\{ C^{(1)}, \ldots, C^{(\beta)}; b \right \}
  \in \mathscr{A}_n$ with at least one double-marked individual $(b <
  \beta)$, define the map $\mathsf{disp}_i(\cdot): \mathscr A_n \to
  \mathscr A_n $ dispersing the chromosomes paired in individual $i$,
\begin{multline}%%
\label{eq:splitdef}
%\begin{multline*}
\mathsf{disp}_i(\xi) = \!
 \left\{ C^{(1)}, \ldots, C^{(2i-2)}, C^{(2i+1)}, 
C^{(2i+2)}, \right. \\ \left. \ldots,  C^{(2(\beta-b))}, C^{(2i-1)}, C^{(2i)}, C^{(2(\beta-b)+1)}, \right. \\ \left. \ldots, C^{(\beta)}; b+1 \!\right \}
%\end{multline*}
\end{multline}
if $1 \le i \leq \beta-b$ and $\mathsf{disp}_i(\xi) := \xi$ otherwise.
Recall that the $i-th$ double-marked individual has chromosomes
labelled $2i-1$ and $2i$.  For $\xi^{n,N}(m)$, if the $i$-th double
marked individual is affected, we have the transition
$\xi^{n,N}(m+1)=\mathsf{disp}_i(\xi^{n,N}(m))$.

The dispersion events will happen instantaneously as $N \to \infty$
(recall we are speeding time up by $N^2$), and thus will, in the
limit, lead to an immediate complete dispersion of all chromosomes
paired in double-marked individuals. If in the course of events, a new
double-marked individual emerges due to pairing of active chromosomes
in the same diploid individual, a dispersion of the chromosomes will
occur immediately. {\bf Event 2} will hence result in a permanent {\em
  instantaneous transition}, mapping our current state $\xi \in
\mathscr{A}_n$ into the subspace $\mathscr{A}_n^{\texttt{sm}}$ by
means of the map $\mathsf{cd}$ defined in \eqref{eq:cd}.  Our limiting
process will thus live, with probability one for each given $t >0$, in
$\mathscr{A}_n^{\texttt{sm}}$, even if we start with a configuration
from $\mathscr{A}_n \setminus \mathscr{A}_n^{\texttt{sm}}$ at time
$t=0$.

\item {\bf Event 3 (recombination)}: A small reproduction event occurs, a
  single-marked offspring but neither parent is in our sample, and
  recombination affecting the active chromosome at a crossover point
  $x$.  This event has probability of the order $O(N^{-2})$ per
  generation, and will thus be visible with finite positive rate in
  the limit.  It is an {\em effective transition}, which can be
  described formally as follows.  Define the recombination operation
  $\mathsf{recomb}$ acting on chromosome $j$ and crossover point
  $x$ for a configuration $\xi \in \mathscr{A}_n^{\texttt{sm}}$
  as
\begin{multline} 
\mathsf{recomb}_{j,x}(\xi) := 
\left\{ C^{(1)}, \ldots, C^{(j-1)}, \tilde{C}^{(j,1)},  \right. \\
\tilde{C}^{(j,2)}, \left. C^{(j+1)}, \dots, C^{(\beta)}; 
\beta+1 \right \},
\end{multline} 
where 
\begin{displaymath}%
\tilde{C}^{(j,1)}=\left\{ \tilde{\loc}^{(j,1)}_1, \dots, 
\tilde{\loc}^{(j,1)}_L \right\} 
\end{displaymath}
with
\begin{displaymath}%
\quad \tilde{\loc}_\ell^{(j,1)} = \begin{cases}
\loc_\ell^{(j)} & : \quad 1 \le \ell \le x - 1, \\
\emptyset  & : \quad x \le \ell \le L,
\end{cases} 
\end{displaymath}%
and 
\begin{displaymath}
\tilde{C}^{(j,2)}=\left\{ \tilde{\loc}^{(j,2)}_1, \dots, 
\tilde{\loc}^{(j,2)}_L \right\} 
\end{displaymath}%
with
\begin{displaymath}%
\tilde{\loc}_\ell^{(j,2)} = 
\begin{cases}
\emptyset: & \quad 1 \le \ell \le x - 1, \\
\loc_\ell^{(j)}: &\quad x \le \ell \le L
\end{cases} 
\end{displaymath}%
(if one of $\tilde{C}^{(j,1)}$, $\tilde{C}^{(j,2)}$ equals $\left\{
  \emptyset, \dots, \emptyset \right\}$, we define
$\mathsf{recomb}_{j,\alpha}(\xi) := \xi$, giving rise to a {\em silent} recombination event).

\item {\bf Event 4 (pairwise coalescence)}: A small reproduction event
  occurs, one single-marked parent and a single-marked offspring are
  in the sample, the active chromosome is inherited from the parent in
  the sample, and recombination does not occur. This event occurs with
  probability of order $O(N^{-2})$ and will therefore be visible in
  the limit with finite positive rate, hence gives rise to an {\em
    effective transition}.  It will lead to a binary coalescence of
  lineages and can formally be described as follows.   The ancestral chromosome $\tilde{C}^{(j_1)}$  
  formed by the coalescence of chromosomes $j_1$ and $j_2$ is given by 
    \begin{align}% 
\tilde{C}^{(j_1)} = \left\{\loc_1^{(j_1)} \cup \loc_1^{(j_2)}, \ldots , \loc_L^{(j_1)} \cup \loc_L^{(j_2)} \right \}
\end{align}%  
if $1 \le j_1 < j_2 \le \beta$.   
     Define the
  binary coalescence operation $\mathsf{pairmerge}$ acting on
  chromosomes $j_1$ and $j_2$ $(1 \le j_1 < j_2)$ in a configuration
  $\xi \in \mathscr{A}_n^{\texttt{sm}}$ as
\begin{multline} 
\mathsf{pairmerge}_{j_1,j_2}(\xi) := 
\left\{ C^{(1)}, \ldots, \tilde{C}^{(j_1)}, \dots, 
 \right. \\ C^{(j_2-1)},  \left.  C^{(j_2+1)}, \dots, C^{(\beta)}; 
\beta-1 \right \} 
\end{multline}%
if $1 \le j_1 < j_2 \le \beta$ (otherwise, we put
$\mathsf{pairmerge}_{j_1,j_2}(\xi):=\xi$).

\item {\bf Event 5 (multiple merger coalescence)}: A large reproduction event occurs,
  neither parent but (possibly several) single marked offspring are in
  our sample, and recombination does not occur. This is again an event
  with probability of order $O(N^{-2})$ per generation and therefore
  will be visible in the limit with finite positive rate, hence gives
  rise to an {\em effective transition}. The offspring chromosomes
  will be assigned their parental chromosomes independently and
  uniformly at random, since due to an immediate `complete dispersion'
  via {\bf Event 2} each offspring individual will carry precisely one
  active chromosome. Now  we formally define the multiple coalescence
  operation $\mathsf{groupmerge}$ for $\xi \in
  \mathscr{A}_n^{\texttt{sm}}$ and pairwise disjoint subsets
  $J_1,J_2,J_3,J_4 \subset [\beta]$ in which either at least one $|J_i|\geq 3$
  or at least two of the $|J_i| \ge 2$.  This transition is, thus, really
  different from a $\mathsf{pairmerge}$ transition. Let 
%(at most two of $J_i$ empty and $|J_j|\ge 2$ if $J_j\neq\emptyset$), 
 $J_j$ denote the set of offspring chromosomes derived from
parental chromosome $j$. Then 
\begin{multline} 
\mathsf{groupmerge}_{J_1,J_2,J_3,J_4}(\xi) := 
\left\{ \tilde{C}^{(1)}, \tilde{C}^{(2)}, 
\tilde{C}^{(3)},  \right. \\  \tilde{C}^{(4)},  
\left.   C^{(j)}, j \in [\beta] \setminus (J_1 \cup J_2 \cup J_3 \cup J_4) ; 
\tilde{\beta} \right \} 
\end{multline}
with $\left((x)^+ := \max(x, 0)\right)$ $$\tilde{\beta} = \beta - \sum_{j = 1}^4(|J_j|-1)^+  $$
and the four parental chromosomes, at least one of which is involved in a merger,  are given by  $(1 \le i \le 4)$, 
\[
\tilde{C}^{(i)} 
= \left\{ \bigcup_{j \in J_i} \loc_1^{(j)}, \ldots , 
\bigcup_{j \in J_i} \loc_L^{(j)} \right \}.
\]
The chromosome(s) $C^{(j)}$ appaering in 
$\mathsf{groupmerge}_{J_1,J_2,J_3,J_4}(\xi)$ denote the chromosomes in
$\xi$ that are not involved in a merger.

\item{\bf All other events}: Will either not affect our ancestral process, or have a probability of order smaller than $N^{-2}$ so that they will be absent in the limit after rescaling. A complete classification of these events will be given in the Appendix (section~\ref{sn:transitions}).

\end{itemize}%

\textbf{\textsl{ The limiting dynamics and state space}}

  The expected dynamics of the limiting
continuous time Markov chain $\{\xi(t), t \ge 0\}$, taking values in
$\mathscr{A}_n$, as $N\to \infty$, will now briefly be discussed.

\begin{itemize}%
\item Complete dispersion ({\bf Event 2}) of the sampled chromosomes
  is the first event to occur (between times $t=0$ and $t=0^+$).  By
  $\ind_i$ we denote individual number $i$ (see section 1.1.1 in
  Appendix).  At time $t = 0$ when $\xi(0) \in \mathscr{A}_n$ we
  assume all $n$ sampled chromosomes are paired in double-marked
  individuals ($n$ even);
\begin{multline}%
  \label{eq:initcond}%%
  \xi(0) = \left\{ \ind_i: \ind_i = \left\{ C_0^{(2i - 1)}, C_0^{(2i)} \right\}, \right. \\  1 \le i \le n/2   \Big\}. 
\end{multline}%
Immediately (at time $0^+$), the chromosomes disperse into single-marked individuals,
\begin{equation}%
   \label{eq:initcond.split}%
  \begin{split}%
\xi(0^{+})  & =  \mathsf{cd}(\xi(0))\\ 
& =\left\{ \ind_i: \ind_i = \left\{ C_0^{(i)}, \emptyset \right\}, \, 1 \le i \le n   \right\}  \\
& =  \left\{ C_{0^{^+}}^{(1)}, \ldots, C_{0^{^+}}^{(n)}; n  \right\} \in \mathscr{A}_n^{\texttt{sm}}.
 \end{split}%
\end{equation}

\item Throughout the evolution of the process, whenever double marked individuals appear (e.g.\ from a coalescence of lineages event), {\bf Event 2} will immediately change our configuration 
to the corresponding `all dispersed'-configuration, i.e., for each $t>0$, 
$$
\xi(t^+)= \mathsf{cd}(\xi(t)) \in \mathscr{A}_n^{\texttt{sm}}.
$$
Such `flickering' states will not affect any quantities of interest of
our genealogy, so we can assume that they will be removed from the
limit by choosing the c\`adl\`ag modification of $\{\xi(t), t \ge
0\}$, taking only values in $\mathscr{A}_n^{\texttt{sm}}$ for all
$t>0$ (this modification does not affect the finite-dimensional
distributions of $\{\xi(t), t \ge 0\}$).

\item Recombination ({\bf Event 3})  appears in the limiting process at total rate 
 $r = r^{(1)} + \cdots + r^{(L - 1)}$, where a certain recombination involving 
 a given crossover point $\ell$  appears with rate $r^{(\ell)}$ 
 on any lineage. 
 Indeed, from our scaling considerations, we have for the
 probability of not seeing a recombination at $\ell$ in a small
 resampling event for more than $N^2t$ scaled time units for a given single-marked individual
 satisfies $(r_N^{(\ell)} = r^{(\ell)}/N)$
\[
\Big(1- \Big(1-\frac {c}{N^2}\Big) r_N^{(\ell)} \frac 1N\Big)^{N^2 t} \to e^{-r^{(\ell)}t},
\] 
as $N \to \infty$ (recall (\ref{eq:recombratescaling}); the
probability for any given individual to be the child in a small
reproduction event is $1/N$), hence the waiting time for
this event to happen is exponential with rate $r^{(\ell)}$.

\item Coalescences appear according to the effective transitions described by {\bf Event 4} and {\bf Event 5}. From the point of view of a given pair of active chromosomes in different individuals, a single pairwise coalescence will occur at rate $1 + c \tfrac{\psi^2}{4} C_{\beta;2;\beta-2}$ with 
$C_{\beta;2;\beta-2}$ from (\ref{eq:Xirate}) (with $r=1$, $s=\beta-2$), 
where the $1$ comes from a pairwise coalescence 
according to a small reproduction event, and the $c \tfrac{\psi^2}{4} C_{\beta;2;\beta-2}$ from a large merger event (the rates can be easily derived from considerations similar to the recombination rate $r$ above), recalling that both coalescing chromosomes have to `successfully flip a $\psi$-coin' in order to take part in the large coalescence event, and then are uniformly distributed into four groups according to the
choice of any of the four potential parental chromosomes. 

Given large coalescence events (involving at least three individuals,
or at least two simultaneous pairwise mergers) happen with overall
rate $c \tfrac{\psi^2}{4}$ times the corresponding coalescence rate of
a $\Xi$-coalescent, obtained from the number of individuals taking
part in the merger independently with probability $\psi$.  The
participating individuals are then being distributed uniformly into
four groups according to the chosen parental chromosome. The
corresponding rate is given in the third line of \eqref{eq:DefG} (cf
also \eqref{eq:Xirate}).
\end{itemize}

\textbf{\textsl{ The limiting ancestral process}}

According to the above consideration, it is now plausible to consider
the following limiting Markov chain as the ancestral limiting
process. This fact will be proved below, with most computations
provided in the Appendix.  The $m$-th falling factorial is given by
$(a)_m := a(a-1)\cdots(a-m+1)$, $(a)_0 := 1$.  The operations
$\mathsf{pairmerge}$, $\mathsf{recomb}$ and $\mathsf{groupmerge}$ for
elements of $\mathscr{A}_n^{\texttt{sm}}$ were defined above in the
section on scaling.  Now we define the generator of the
continuous-time ancestral recombination graph derived from our model.

\begin{definition}[Limiting multilocus diploid ancestral recombination graph]
\label{def:limit}
The %c\`adl\`ag strong Markov process 
continuous-time Markov chain $\{\xi(t), t \ge 0\}$ with values in $\mathscr{A}_n^{\texttt{sm}}$, initial condition $\xi(0):= \mathsf{cd}(\xi)$ for $\xi \in \mathscr{A}_n$ and transition matrix $G$, with entries for elements $\xi', \xi \in \mathscr{A}_n^{\tt sm}, \,\xi'\neq\xi,$ is given by %(for $\xi'\neq\xi$) 
$(J := \left(J_1, \ldots, J_4\right))$, 
\begin{equation}%
\label{eq:DefG}%
\begin{split}%
& G(\xi,\xi')  \\
= & \begin{cases} 
1 + c \frac{\psi^2}{4} C_{\beta;2;\beta-2}%4(\psi/4)^2(1-3\psi/4)^{\beta-2} 
& \text{if} \; 
\xi'= \mathsf{pairmerge}_{j_1,j_2}(\xi) \\
r^{(\ell)} & \text{if} \; \xi'= \mathsf{recomb}_{j,\ell}(\xi) \\
c \frac{\psi^2}{4} C_{\beta;|J|} 
& \text{if} \; \xi'= \mathsf{groupmerge}_{J}(\xi) \\
0 & \text{for all other $\xi'\neq\xi$}
\end{cases}%
\end{split}%
\end{equation}
(where in the penultimate line we only consider cases where either at least 
one $|J_i|\geq 3$ or at least two of the $|J_i| \ge 2$), with 
$$C_{\beta; |J|} := C_{\beta;|J_1|,|J_2|,|J_3|,|J_4|;\beta-(|J_1|+|J_2|+|J_3|+|J_4|)}$$ and $(s = b - k_1 - \cdots - k_r \ge 0, \quad x \wedge y := \min(x,y))$
\begin{multline}
\label{eq:Xirate}%%
C_{b;k_1,\dots,k_r;s} = \frac{4}{\psi^2}  
\sum_{l=0}^{s \wedge (4-r)} {s \choose l} \frac{(4)_{r+l}}{4^{k_1+\cdots+k_r+l} }  \\
 \cdot (1-\psi)^{s-l} (\psi)^{k_1+\cdots+k_r+l}
\end{multline}%
\end{definition}
For the diagonal elements, one has of course
\begin{align} 
\label{eq:DefGdiag}
G(\xi,\xi) = - \sum_{\xi'\neq\xi, \, \xi' \in \mathscr{A}_n^{\tt sm}} 
\hspace{-1em}G(\xi,\xi').
\end{align}

The %complicated-looking 
rates in \eqref{eq:Xirate} are the transition
rates of the $\Xi$-coalescent (a \emph{simultaneous} multiple merger
coalescent) with
$$\Xi=\delta_{(\psi/4,\psi/4,\psi/4,\psi/4,0,0,\dots)},$$ when
$r$ distinct groups of ancestral lineages merge. The number of
lineages in each group is given by $k_1,\dots,k_r$, given $\beta$
active ancestral lineages.  The number $s=\beta-(k_1+\cdots+k_r) \ge
0$ gives the number of lineages (ancestral chromosomes) unaffected by
the merger (cf.\ \cite{S00}, Thm.~2).  The particular form of $\Xi$
given above follows from the fraction $\psi$ of the population
replaced by the offspring of the two parents in a large reproduction
event, and our assumption that each parent contributes exactly one
chromosome to each offspring.   We have the following convergence result.

\begin{theorem}
\label{thm:conv}
Let $\{\xi^{n,N}(m), m \ge 0\}$ be the ancestral process of a sample of $n$ chromosomes
in a population of size $N$ and 
assume the scaling relations 
(\ref{eq:largereprodratescaling}, \ref{eq:recombratescaling}). 
Then, starting from $\xi^{n,N}(0) \in \mathscr{A}_n$, we have that
\[
\{\xi^{n,N}(\lfloor N^2 t \rfloor)\} \to \{\xi(t)\}, \quad \mbox{ as } \quad N \to \infty,
\]
in the sense of the finite-dimensional distributions on the interval
$(0,\infty)$. The initial value of the limiting process is given by
$$\xi(0)  =  \mathsf{cd}(\xi^{n,N}(0)) \in \mathscr{A}_n^{\texttt{sm}}.$$
\end{theorem}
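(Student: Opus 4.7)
My plan is to apply a two-timescale argument extending M\"ohle's (1998) convergence lemma, exploiting the fact that dispersion events occur with probability $O(N^{-1})$ per generation while effective (recombination and coalescence) events occur with probability $O(N^{-2})$. First I would write the one-step transition matrix of $\{\xi^{n,N}(m)\}$ on the finite state space $\mathscr{A}_n$ as
\[
\Pi_N \;=\; A_N \;+\; \tfrac{1}{N^2}\, B_N \;+\; R_N,
\]
where $A_N$ collects the trivial transition (Event~1) together with the dispersion transitions (Event~2), $B_N$ collects the effective transitions coming from Events~3--5 with leading-order rates $r^{(\ell)}$, $1+c\psi^2 C_{\beta;2;\beta-2}/4$ and $c\psi^2 C_{\beta;|J|}/4$ respectively, and $R_N = o(N^{-2})$ absorbs the contribution of all remaining (negligible) transitions classified in the Appendix.

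Next I would analyse the fast dynamics encoded in $A_N$. Since each dispersion event occurs with probability $\Theta(N^{-1})$ per generation and the number of double-marked individuals is bounded by $n/2$ and strictly decreases at each dispersion, the waiting time until the process first enters $\mathscr{A}_n^{\tt sm}$ is $O(N)$ generations. Consequently, for every $t>0$,
\[
A_N^{\lfloor N^2 t \rfloor} \longrightarrow P, \qquad N \to \infty,
\]
where $P(\xi,\cdot):=\delta_{\mathsf{cd}(\xi)}$ is the projection onto $\mathscr{A}_n^{\tt sm}$ along the fibres of $\mathsf{cd}$ defined in \eqref{eq:cd}. This is the central input required to apply the extension of M\"ohle's two-timescale lemma appropriate to our setting (where the ``fast'' matrix $A_N$ itself tends to the identity but $A_N^{\lfloor N^2 t\rfloor}$ stabilises to a projection).

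Combining the previous two steps via the extended lemma then yields
\[
\Pi_N^{\lfloor N^2 t \rfloor} \longrightarrow P \exp\bigl(t\, P B P\bigr), \qquad N \to \infty,
\]
with $B=\lim_N B_N$, proving convergence of one-dimensional marginals starting from any $\xi^{n,N}(0)\in\mathscr{A}_n$; the finite-dimensional distributions on $(0,\infty)$ then follow from the Markov property and continuity of the limiting semigroup. The remaining task is to verify that $PBP$ coincides on $\mathscr{A}_n^{\tt sm}$ with the generator $G$ of Definition~\ref{def:limit}. This is a direct (if lengthy) calculation: the pairwise-merge rate picks up $1$ from small reproductions and $c\psi^2 C_{\beta;2;\beta-2}/4$ from large reproductions in which exactly two sampled lineages successfully flip $\psi$-coins and are assigned the same parental chromosome; the recombination rate $r^{(\ell)}$ follows from \eqref{eq:recombratescaling}; and the group-merge rate arises by assigning each of the $\lfloor\psi N\rfloor$ offspring independently and uniformly to one of the four parental chromosomes, producing the $\Xi$-coalescent kernel with $\Xi=\delta_{(\psi/4,\psi/4,\psi/4,\psi/4,0,\ldots)}$ as in \eqref{eq:Xirate}.

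The principal obstacle is combinatorial bookkeeping in Step~1 rather than any deep analytic difficulty: one must simultaneously track the number of parental choices, the placement of crossover points, the $\psi$-coin flips determining participation in a large reproduction event, the uniform assignment of offspring to parental chromosomes, and the subsequent dispersion of freshly created double-marked individuals, while showing that compound events (double recombinations, recombination concurrent with coalescence, two large events within the same generation, and so forth) contribute only $o(N^{-2})$. The required version of M\"ohle's lemma is mild and can be obtained by standard arguments, but the sheer catalogue of transition types in the diploid multilocus $\Xi$-setting is what makes the proof technically demanding.
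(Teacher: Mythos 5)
Your proposal is correct and follows essentially the same route as the paper's proof: the decomposition $\Pi_N = A_N + N^{-2}B_N + R_N$, the identification of the fast dynamics with the projection $P$ onto $\mathscr{A}_n^{\tt sm}$, a variant of M\"ohle's lemma adapted to the fact that $A_N \to I$ (the paper's Lemma~\ref{lemma:twoscales}, whose hypothesis is the uniform statement $\lim_{C\to\infty}\lim_{N\to\infty}\sup_{r\ge CN}\|A_N^r-P\|=0$ of Proposition~\ref{prop:an}, which your geometric-waiting-time bound on complete dispersion does deliver), and the identification of the limiting generator via the projected matrices. The only cosmetic difference is that the paper deliberately avoids asserting that $B_N$ itself converges, proving only $PB_N^*P \to G$, which is all the lemma requires.
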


A proof can be found in the Appendix.  If $c=0$, the classical
ancestral recombination graph for a diploid population with
recombination in the spirit of \cite{GM97} results.

%\begin{remark}[Initial conditions]
%Of course, our prelimitintg and limiting ancestral processes are also well-defined for all
%initial conditions from $\mathscr{A}_n$. Simply apply the complete split operation to
%the initial condition, i.e., for $\xi \in \mathscr{A}_n$ let $\xi(0):= \mathsf{cs}(\xi) \in \mathscr{A}_n^{\texttt{sm}}$.
%\end{remark}

\textbf{  General diploid Moran-type models: ``random'' $\psi$}

\newcommand{\PsiN}{\ensuremath{\sub{\Psi}{N}}}% 

One of the aims of the present work is to understand the genome-wide
correlations in gene genealogies induced by sweepstake-style
reproduction.  So far, we have discussed this for a very simple
example of a sweepstake mechanism (analog to the one considered in
\cite{EW06}).  More precisely, the {\em fraction} $\psi \in (0,1)$ of
the population replaced by the offspring of a single pair of
individuals in a large offspring number event has hitherto been
assumed to be (approximately) constant.  Along the lines of the
previous discussion, an ancestral recombination graph with a {\em
  randomized} offspring distribution can be derived (a comprehensive
discussion of single-locus haploid Moran models in the domain of
attraction of $\Lambda$-coalescents can be found in a recent article
of \cite{HM12}).  Even though $\psi$ is now considered a random
variable, the population size stays constant at $N$ diploid
individuals.  Allowing $\psi$ to be random may be biologically more
realistic than taking $\psi$ to be a constant.  On the other hand, the
problem of identifying suitable classes of probability distributions
for $\psi$, reflecting the specific biology of  given natural
populations, is still open and an area of active research.

To explain the convergence arguments when $\psi$ is random, let the
random variable $\sub{\Psi}{N}$, taking values in $[N-2]$, denote the
random number of diploid offspring contributed by the single
reproducing pair of parents at each timestep; a new realisation of
$\sub{\Psi}{N}$ is drawn before each reproduction event. Again, we
consider the effect of such a reproduction mechanism on coalescence
events in a {\em sample}.  The probability that two given chromosomes
residing in two single-marked individuals in the sample coalesce in
the previous timestep given the value of $\PsiN$ is
\begin{equation}%
  \label{eq:pairwisecoalprob}%
  \begin{split}
  &  \mathbb{P}(\{\textrm{pair coalescence}\} | \PsiN = k ) \\
 =&  \tfrac{1}{4}\delta_{\{k = 1\}}\tfrac{4}{N(N - 1)}\\
 + &    \tfrac{1}{4}\delta_{\{ k > 1 \}}\left( \tfrac{4k}{N(N - 1)} + \tfrac{k(k - 1) }{N(N - 1) }  \right), \\
   \end{split}
  \end{equation}%
  where the first and second terms on the right-hand side describe the
  case where one parent and one offspring are drawn, the third term
  covers the case where two offspring are drawn, and the $1/4$
  accounts for the probability that the two chromosomes in question
  must descend from the same parental chromosome.  Define 
%the event $A := \{  \textrm{pair coalescence}\}$, and define
  \begin{align}%
  \label{eq:Ppaircoal}%
    \sub{c}{N} & := 4\mathbb{P}(\{  \textrm{pair coalescence}\}) \\
    & = \sum_{k=1}^{N - 2} 
    \mathbb{P}( \{  \textrm{pair coalescence}\} | \PsiN = k )  \mathbb{P}( \PsiN = k ) \notag \\
    & = \EE{\tfrac{\PsiN(\PsiN + 3)}{N(N - 1)} } 
  \end{align}%
  (the factor $4$ facilitates comparison with the haploid case). 
  %The following conditions on the pairwise coalescence probability
  %$\sub{c}{N}$ and the tail probability $\mathbb{P}(\PsiN > Nx)$ are
  %needed for convergence to hold. 
  The sequence of laws
  $\mathcal{L}(\sub{\Psi}{N})$, $N \in \N$, will be assumed to satisfy
  the following three conditions: % . The first condition is
  \begin{equation}
    \label{cond1:cnvanishes}
     \sub{c}{N} \to 0 \quad \text{as} \; N\to\infty, %. 
     \end{equation}%
  \begin{multline} 
    \label{eq:tscaleratiogen}
    \frac{\sub{c}{N}}{\E\left[\PsiN / N\right]} 
    = \frac{1/\E\left[\PsiN / N\right]}{1/\sub{c}{N}}
    = \frac{\E\left[ \PsiN(\PsiN + 3) \right]}{%
      (N-1)\E\left[ \PsiN \right]} \\  \to 0 
    \quad \text{as}\; N\to\infty, 
  \end{multline}
     %For the second condition,  we assume  
     and there exists a probability measure $F$ on $[0,1]$ such that
     \begin{equation}%
    \label{cond2:sagitovs}
     \frac{1}{\sub{c}{N}}\mathbb{P}(\PsiN > Nx) 
    \mathop{\longrightarrow}_{N\to\infty} \int_x^1 \tfrac{1}{y^2} F(dy) 
  \end{equation}%
  for all continuity points $x \in (0,1]$ of $F$.
  
  Condition
  \eqref{cond1:cnvanishes} is necessary for any limit process of the
  genealogies to be a continuous-time Markov chain, 
  condition \eqref{eq:tscaleratiogen} ensures that a separation 
  of time scales phenomenon occurs, 
  and
  \eqref{cond2:sagitovs} 
  fixes the limit dynamics of the large merging events 
  (it is analogous to \cite[necessary
  condition~(13)]{S99} in the haploid case). 
  In the proof of convergence to a limit process we will recall
  equivalent conditions to \eqref{cond2:sagitovs} (see Appendix,
  section ~\ref{subsect:genPsiproofs}).  
  Condition~\eqref{cond1:cnvanishes} implies (see  
  Section~\ref{subsect:genPsiproofs} in Appendix)
  \begin{equation}%% 
    \label{eq:tscale1gen}
    \E\left[\PsiN / N\right] \to 0
    \quad \text{as}\; N\to\infty,
  \end{equation}
  i.e.\ the probability for 
  a given individual to be an offspring in a given reproduction 
  event becomes small. 
 Hence, \eqref{eq:tscale1gen} and \eqref{eq:tscaleratiogen} together
 show that there will be two diverging time-scales: The ``short''
 time-scale $1/\E\left[\PsiN / N\right]$ on which chromosomes paired
 in double-marked individuals disperse into single-marked individuals
 and the ``long'' time-scale $1/\sub{c}{N}$ over which we observe
 non-trivial ancestral coalescences.

%   is due to Sagitov \cite[necessary
%   condition~(13)]{S99}, being familiar from the general haploid case 
%   and fixing the limit dynamics of the large merging events. 

%   \texttt{In addition to condition \eqref{cond1:cnvanishes}, the
%     following condition regarding the relative magnitude of the two
%     timescales $1/\sub{c}{N}$ and $1/\E\left[\Psi_N/N\right]$;}
%   (i.e.\ individual recombination probabilities must become small, 
%   and in fact vanish in the limit).  

%   If condition \eqref{eq:tscaleratiogen} fails to hold, a
%   separation of timescales does not occur. 

  In order to obtain a
  non-trivial genealogical limit process, we will then speed up time
  by a factor of $4/\sub{c}{N}$, i.e., $4/\sub{c}{N}$ reproduction
  events correspond to one coalescent time unit (see
  Thm.~\ref{thm:conv.general} below).  This time rescaling is chosen
  in order for two chromosomes to coalesce at rate 1 in the limit.
  The required scaling relation for the recombination rates is now
  \begin{equation} 
    \label{eq:recombratescaling.general}
    \rN^{(\ell)} \sim \frac{\sub{c}{N}}{4\E\left[\PsiN / N\right]} r^{(\ell)}
    \quad \text{as}\; N\to\infty
  \end{equation}
  with $r^{(\ell)} \in [0,\infty)$ fixed for $\ell=1,\dots,L-1$ 
  (where $f(N) \sim g(N)$ means $\lim_{N\to\infty} f(N)/g(N)=1$). 
  An intuitive explanation for the requirement 
  \eqref{eq:recombratescaling.general} is that since the probability for 
  a given individual to be an offspring in a given reproduction 
  event is $\E\left[\PsiN / N\right]$, after 
  speeding up time by $4/\sub{c}{N}$, on any lineage recombination 
  events between locus $\ell$ and $\ell + 1$ occur as a Poisson 
  process with rate $r^{(\ell)}$. 
  \smallskip

  A simple sufficient condition for \eqref{eq:tscaleratiogen} is the
  following: For any $\varepsilon > 0$,
  \begin{equation}%
    \label{eq:simplecond}%
  N\prob\left(\Psi_N > \varepsilon N\right) \to 0 \quad \textrm{as} \quad N \to \infty.
  \end{equation}%
  % To see why condition \eqref{eq:simplecond} implies
  % \eqref{eq:tscaleratiogen}, 
  Indeed, we have, by assuming  $N > \varepsilon N$,  
  \begin{displaymath}%
    \begin{split}%
    \E\left[\Psi_N^2\right] &  =  \sum_{k = 1}^{\lfloor \varepsilon N \rfloor} k^2 \P\left(\Psi_N = k\right)  +   \sum_{k = \lfloor \varepsilon N \rfloor + 1}^{N}  k^2 \P\left( \Psi_N = k \right) \\ 
    & \le \sum_{k = 1}^{\lfloor \varepsilon N \rfloor}k\varepsilon N \P\left( \Psi_N = k \right)  +  \sum_{k = \lfloor \varepsilon N \rfloor + 1}^N N^2\P\left( \Psi_N = k  \right) \\
    & \le \varepsilon N\E\left[\Psi_N \right]  +  N^2 \P\left( \Psi_N > \varepsilon N \right).
    \end{split}%
    \end{displaymath}%
Dividing by  $N\E\left[\Psi_N\right]$ gives
$$
\frac{ \E\left[ \Psi_N^2 \right] }{N\E\left[N\right]} \le \varepsilon  +  \frac{N\P\left(\Psi_N > \varepsilon N \right) }{\E\left[ \Psi_N \right] },
$$
and, since $\E\left[\Psi_N\right] > 1$, 
$$
\limsup_{N \to \infty} \frac{ \E\left[ \Psi_N^2 \right] }{N\E\left[N\right]}   < \varepsilon  +   \limsup_{N \to \infty}N\P\left(\Psi_N > \varepsilon N \right) = \varepsilon.
 $$
Thus, condition \eqref{eq:tscaleratiogen}  is obtained since we can choose $\varepsilon$ to be as small as we like.   
\smallskip 

  The limiting genealogical process will then be a continuous-time Markov chain 
  on $\mathscr{A}_n^{\tt sm}$ with generator matrix $G$ whose off-diagonal 
  elements are given by (for the values on the diagonal we again have 
  \eqref{eq:DefGdiag})
  \begin{align}
    \label{eq:DefG.general}
    G(\xi,\xi') = 
    \begin{cases} 
      C_{\beta;2}
      & \text{if} \; 
      \xi'= \mathsf{pairmerge}_{j_1,j_2}(\xi) \\
      r^{(\ell)} & \text{if} \; \xi'= \mathsf{recomb}_{j,\ell}(\xi) \\
      C_{\beta;|J| } 
      & \text{if} \; \xi'= \mathsf{groupmerge}_{J_1,J_2,J_3,J_4}(\xi) \\
      0 & \text{for all other $\xi'\neq\xi$}
    \end{cases}
  \end{align}
  where $$C_{\beta;|J|} :=
  C_{\beta;|J_1|,|J_2|,|J_3|,|J_4|;\beta-(|J_1|+|J_2|+|J_3|+|J_4|)},$$
  $k = (k_1, \ldots, k_r)$, $|k| = k_1 + \cdots + k_r$,  and
  \begin{align} 
    \label{eq:Xirate.general}%%
    C_{b;k;s} & = 
     4 \sum_{l=0}^{s \wedge (4-r)} {s \choose l} \tfrac{(4)_{r+l}}{4^{|k| + l}}\notag \\ 
   & \cdot \int_{[0,1]} x^{|k|+l} (1 - x)^{s - l} \frac{1}{x^2} F(dx) \notag \\
   & =   F(\{0\}) \delta_{\{r=1, k_1=2\}} \notag  \\
   & +  4 \sum_{l=0}^{s \wedge (4-r)} {s \choose l}  \tfrac{(4)_{r+l}}{4^{|k| + l}}\notag \\
   & \cdot  \int_{(0,1]} x^{|k|+l} (1 - x)^{s - l} \frac{1}{x^2} F(dx) % \notag 
  \end{align}
  with $F$ from \eqref{cond2:sagitovs}.
  As in the case of constant $\psi$, the third line in  \eqref{eq:DefG.general} gives the  transition rates for a given merger into $r$ $(\leq 4)$
  groups of sizes $k_1,\dots,k_r$ when $\beta$ active ancestral
  lineages are present, with $s=\beta-|k| \ge 0$ lineages
  unaffected by a given merger of the $\Xi$-coalescent with
  $$\Xi=\int_{[0,1]} \delta_{(x/4,x/4,x/4,x/4,0,0,\dots)} \,F(dx),$$ (cf.\ \cite{S00}, Thm.~2).    By way of example,    $C_{2;2;0} = 1$.  
  Now we can state the convergence of our ancestral recombination
  graph process with random $\psi$. The analogue of
  Theorem~\ref{thm:conv} is the following:
  \begin{theorem}
    \label{thm:conv.general}
    Let $\{\xi^{n,N}(m), m \ge 0\}$ be the ancestral process of a sample of $n$ chromosomes
    in a population of size $N$ with offspring laws $\mathcal{L}(\sub{\Psi}{N})$ 
    which satisfy \eqref{cond1:cnvanishes}, \eqref{eq:tscaleratiogen} 
    and \eqref{cond2:sagitovs}, and 
    assume the scaling relation \eqref{eq:recombratescaling.general} for the 
    recombination rates. 
    Then, starting from $\xi^{n,N}(0) \in \mathscr{A}_n$, we have that
    \[
    \{\xi^{n,N}(\lfloor 4t/\sub{c}{N} \rfloor)\} \to \{\xi(t)\}, \quad \mbox{ as } \quad N \to \infty,
    \]
    in the sense of the finite-dimensional distributions on the interval $(0, \infty)$. The process
    $\{\xi(t)\}$ is the Markov chain with generator matrix 
    \eqref{eq:DefG.general} and initial value $\xi(0)$ given by
    $$\xi(0)  =  \mathsf{cd}(\xi^{n,N}(0)) \in \mathscr{A}_n^{\texttt{sm}} .$$ 
  \end{theorem}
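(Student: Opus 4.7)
The plan is to adapt the proof of Theorem~\ref{thm:conv} to the case of a random fraction $\PsiN$, using an extension of M\"ohle's (1998) separation of timescales lemma for discrete-time Markov chains. Write the one-step transition kernel $\Pi_N$ of $\{\xi^{n,N}(m)\}$ on $\mathscr{A}_n$ as
\[
\Pi_N \;=\; \Pi_N^{\mathrm{fast}} \;+\; \sub{c}{N}\,B_N \;+\; o(\sub{c}{N}),
\]
where $\Pi_N^{\mathrm{fast}}$ groups together the identity (\textbf{Event 1}) and dispersion (\textbf{Event 2}) transitions, and $B_N$ collects the effective transitions (recombination, pairwise coalescence and group mergers, \textbf{Events 3}--\textbf{5}), with all remaining transitions being negligible of order $o(\sub{c}{N})$. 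By \eqref{eq:tscaleratiogen} and \eqref{eq:tscale1gen}, $\Pi_N^{\mathrm{fast}}$ is stochastic up to order $\sub{c}{N}$, with dispersion probability per step of order $\E[\PsiN/N] \gg \sub{c}{N}$, so iterating $\Pi_N^{\mathrm{fast}}$ projects any configuration in $\mathscr{A}_n\setminus\mathscr{A}_n^{\texttt{sm}}$ into $\mathscr{A}_n^{\texttt{sm}}$ on a timescale of order $1/\E[\PsiN/N]$, which is much shorter than the coalescent timescale $1/\sub{c}{N}$. Hence $(\Pi_N^{\mathrm{fast}})^{k} \to P$ as $k\to\infty$ in $N$, where $P$ is the projection matrix on $\mathscr{A}_n$ acting by the complete dispersion map $\mathsf{cd}$ of \eqref{eq:cd}.

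Next I would compute the asymptotic rates in $B_N$ by conditioning on the value of $\PsiN$. Recombination on a fixed lineage at crossover point $\ell$ occurs in one step with probability $(1 + o(1))\E[\PsiN/N]\,\rN^{(\ell)}$, so after the time rescaling $m = \lfloor 4t/\sub{c}{N}\rfloor$ and using \eqref{eq:recombratescaling.general}, the recombination rate converges to $r^{(\ell)}$. Pairwise coalescence of two chromosomes residing in distinct single-marked individuals has one-step probability $\sub{c}{N}/4 + o(\sub{c}{N})$ by the very definition of $\sub{c}{N}$ in \eqref{eq:Ppaircoal}, giving rate $1$ in the limit. For a general group merger into groups of sizes $k_1,\ldots,k_r$ with $s$ unaffected lineages ($r\le 4$, $|k| + s = \beta$), I would express the conditional probability given $\PsiN=k$ by a direct combinatorial argument: $r$ of the four parental chromosomes are assigned to carry one of the groups (with multiplicity $(4)_{r+l}$ when $l$ additional unaffected lineages are drawn from the offspring), each of the $|k|+l$ relevant lineages chooses the correct parental chromosome with probability $1/4$, and $k$ offspring are produced with combinatorial weight captured by $\binom{s}{l}(k/N)^{|k|+l}(1-k/N)^{s-l}$. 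Summing over $l$, dividing by $\sub{c}{N}$, and passing to the limit via the change of variables \eqref{cond2:sagitovs} converts $\sub{c}{N}^{-1}\E[\,\cdot\,]$ into $\int_{(0,1]}\cdot\, x^{-2} F(dx)$, producing the integral term of \eqref{eq:Xirate.general}; the atom $F(\{0\})\delta_{\{r=1,k_1=2\}}$ arises from the contribution of $\{\PsiN=1\}$, which can only produce pair coalescences.

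Combining these two steps, the generalized M\"ohle lemma (see e.g. the argument in the Appendix reference) gives
\[
\Pi_N^{\lfloor 4t/\sub{c}{N}\rfloor} \;\longrightarrow\; P\,\exp(tG)
\]
entrywise as $N\to\infty$, with $G$ given by \eqref{eq:DefG.general}. This implies convergence of finite-dimensional distributions on $(0,\infty)$: for any $0<t_1<\cdots<t_k$ and any initial state $\xi^{n,N}(0)\in\mathscr{A}_n$, the joint law of $(\xi^{n,N}(\lfloor 4t_i/\sub{c}{N}\rfloor))_{i=1}^k$ converges to the joint law of $(\xi(t_i))_{i=1}^k$ started from $\mathsf{cd}(\xi^{n,N}(0))$.

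The main obstacle I expect is the group-merger asymptotics: one must verify that the one-step probability of each merger pattern $(k_1,\ldots,k_r)$ can be written as an expectation in $\PsiN$ whose scaled limit is exactly the Sagitov-type integral against $x^{-2}F(dx)$, and in particular that condition \eqref{cond2:sagitovs} suffices to justify the interchange of limit and expectation uniformly over all patterns. The four-parent (four-colour) structure inherent to diploidy requires bookkeeping the factor $(4)_{r+l}/4^{|k|+l}$ correctly, and an additional care is needed to separate out the $\{\PsiN=1\}$ contribution, which degenerates into the pure pair-coalescence rate and feeds the $F(\{0\})$ atom. Once this computation is done, plugging everything into the M\"ohle framework is routine, and the case $c=0$ (deterministic $\psi$) of Theorem~\ref{thm:conv} is recovered by taking $F=\delta_\psi$ and undoing the time rescaling $4/\sub{c}{N}\mapsto N^2$.
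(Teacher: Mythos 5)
Your outline follows essentially the same route as the paper: decompose $\sub{\Pi}{N}$ into a fast (identity/dispersion) part and a slow (effective) part, show the fast part iterates to the projection $P$ induced by $\mathsf{cd}$ on the time-scale $1/\E[\PsiN/N]$, compute the one-step probabilities of the effective transitions as expectations over $\PsiN$, and feed everything into a M\"ohle-type two-time-scale lemma (the paper's Lemma~\ref{lemma:twoscales} with Remark~\ref{rem:twoscales.general} allowing general scales $a_N=N/\E[\PsiN]$ and $b_N=1/\sub{c}{N}$). The one step you explicitly leave open --- justifying that $\sub{c}{N}^{-1}\E[\,\cdot\,]$ converges to $\int_{(0,1]}\cdot\,x^{-2}F(dx)$ uniformly over merger patterns --- is exactly where the paper does its real work: it introduces the size-biased law $\P(\sub{\widetilde\Psi}{N}=k)=\tfrac{k(k+3)}{\E[\PsiN(\PsiN+3)]}\P(\PsiN=k)$, shows via integration by parts and the method of moments that $\sub{\widetilde\Psi}{N}/N \Rightarrow F$, and then reads off each scaled transition probability as a bounded continuous functional of $\sub{\widetilde\Psi}{N}/N$ (plus $O(1/N)$ corrections). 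Rewriting the expectations in terms of $\sub{\widetilde\Psi}{N}$ is the device that makes the interchange of limit and expectation automatic; without it you would need a separate uniform-integrability argument.

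Two smaller inaccuracies. First, the atom $F(\{0\})$ in \eqref{eq:Xirate.general} is not the contribution of $\{\PsiN=1\}$ alone: it collects all sub-macroscopic offspring numbers (e.g.\ $\PsiN\equiv\lfloor N^\gamma\rfloor$ gives $F=\delta_0$ and only pair mergers in the limit, even though $\PsiN\to\infty$). Second, the fixed-$\psi$ model of Theorem~\ref{thm:conv} is recovered not with $F=\delta_\psi$ but with the mixture $F=\tfrac{4}{4+c\psi^2}\delta_0+\tfrac{c\psi^2}{4+c\psi^2}\delta_\psi$ (and a deterministic time change by $(4+c\psi^2)/4$); the $\delta_0$ component is what carries the rate-$1$ binary coalescences coming from small reproduction events. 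Neither slip affects the architecture of the argument, but both would need correcting in a full write-up.
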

  The proof is given in  Section~\ref{subsect:genPsiproofs} in Appendix.

   %\begin{remark} 
    While $\sub{c}{N} \ge 1/N^2$ by definition, in principle any decay 
    behaviour of $\sub{c}{N}$ that is consistent with 
    $\liminf_{N\to\infty} N^2 \sub{c}{N} \ge 1$, and hence any therefrom 
    derived scaling relation between coalescent time scale and 
    model census population size, is possible via a suitable 
    choice of the family $\mathcal{L}(\sub{\Psi}{N})$, $N \in \N$.

    For an extreme example, let $\sub{\Psi}{N} \equiv \lfloor N^\gamma \rfloor$
    for some $\gamma \in (0,1)$, then 
    $\sub{c}{N} \sim N^{-2(1-\gamma)}$ and \eqref{cond2:sagitovs} 
    is satisfied with $F=\delta_0$. 
    
    % {\tt [Should we work out the class of examples is given by 
    % \begin{align} 
    %   \P(\sub{\Psi}{N}=k) = \frac{1}{z_N} 
    %   k^{\gamma}, \; k=1,\dots,N-2
    % \end{align}
    % with $\gamma \in \R$? Or what is a ``canonical class''?]}
  %%\end{remark}

  The relation with the ``fixed $\psi$'' model 
  is as follows: For Theorem~\ref{thm:conv},
  we used the simple mixture distribution for $\PsiN$: 
  \begin{equation}
    \label{eq:PsiNsimplemixture}
    \P(\PsiN = \lfloor \psi N \rfloor ) = 1 -  \P(\PsiN = 1) = \frac{c}{N^2}
  \end{equation}
  for $\PsiN$, in which $\psi \in (0,1)$ and $c > 0$ are both constants. 
  Our choice \eqref{eq:PsiNsimplemixture} of law for $\PsiN$ gives, using \eqref{eq:pairwisecoalprob}, 
  \begin{displaymath}%
  \begin{split}%
    \sub{c}{N} & = \E\left[ \tfrac{\PsiN(\PsiN + 3)}{N(N - 1)} \right] \\ 
& = \big(1-\tfrac{c}{N^2}\big) \frac{4}{N(N-1)} +  \tfrac{c}{N^2} \frac{\psi N (\psi N + 3) }{N(N-1)}  \\
  & \sim \frac{1}{N^2} (4+c\psi^2).
  \end{split}%
  \end{displaymath}%%
  Define $ 1_{(0,\psi)}(x) = 1$ if $x \in (0,\psi)$, and $
  1_{(0,\psi)}(x) = 0$ otherwise.  Our choice
  \eqref{eq:PsiNsimplemixture} further gives
  \begin{displaymath}%
    \begin{split}
  \P(\PsiN > Nx ) & = 1_{(0,\psi)}(x) \P(\PsiN > Nx )  \\
  & = 1_{(0,\psi)}(x) cN^{-2},
  \end{split}%
  \end{displaymath}%
  and therefore
  \begin{displaymath}%
    \begin{split}%
  & \frac1{\sub{c}{N}} \P(\PsiN > \lfloor Nx \rfloor ) \longrightarrow 
   1_{(0,\psi)}(x) \frac{c}{4 + c \psi^2}\\ 
   & = \int_{(x,1]} y^{-2} \, F(dy)
  \end{split}%
  \end{displaymath}
  with $$F = \tfrac{4}{4 + c \psi^2} \delta_0 + \tfrac{c \psi^2}{4 + c \psi^2} \delta_\psi.$$ 
  Furthermore, $\E\left[\PsiN/N\right] = 1/N+O(1/N^2)$, thus 
  \[
  \frac{\sub{c}{N}}{4 \E\left[\PsiN / N\right]} \sim \frac1N \frac{4 + c\psi^2}{4}
  \]
  and Theorem~\ref{thm:conv} follows from Theorem~\ref{thm:conv.general} 
  (after rescaling time in the limit process $\{\xi(t)\}$ by a factor of 
  $(4+c\psi^2)/4$). 

  %{\tt [M: Did I get all the factors right? I'm tired. Please check sometime.]}

  The constant $C_{b;k} := C_{b;k_1, \ldots, k_r ; s}$
  (\ref{eq:Xirate.general}) depends on the probability measure $F$.
  The form of $F$ will no doubt be different for different
  populations.  We reiterate that resolving the mechanism of
  sweepstake-style reproduction will require detailed knowledge of the
  %biology and ecology
  reproductive behaviour and the ecology 
  of the organism in question, along with
  comparison of model predictions to multi-loci genetic data.  A
  %natural 
  candidate for $F$ may be the beta distribution with
  parameters $\vartheta > 0$ and $\gamma > 0$, in which case the
  constant $C_{b;k}$ in  \eqref{eq:DefG.general} takes the form $(|k| := k_1 + \cdots + k_r)$
  \begin{multline}%
  C_{b;k}  = 4\sum_{\ell}\binom{s}{\ell}(4)_{r + \ell}\left(\tfrac{1}{4}\right)^{|k| + \ell}\\\cdot \frac{B(|k| + \ell + \vartheta - 2, s +    \gamma - \ell)}{B(\vartheta, \gamma)},
  \end{multline}%
$B(\cdot,\cdot)$ being the Beta function.

\textbf{Different scaling regimes}

The mechanism of sweepstake-style reproduction may be different for
different populations, and the frequency of large offspring number
events may also be different.  The particular timescale of the large
reproduction events (we chose $\sub{\varepsilon}{N} = c/N^2)$ results
in a separation of timescales of the limit process.  Resolving the
separation of timescales problem results in the ARG with generator
(\ref{eq:DefG}).  Different scalings of $\sub{\varepsilon}{N}$ result
in different limit processes.  By way of example, if $N^2\eN \to 0$,
large offspring number events are negligible in a large population,
and we obtain the ARG associated with the usual Wright-Fisher
reproduction, which can be read off Equation~(\ref{eq:DefG}) by taking
$c = 0$.  One other scaling regime may seem reasonable, namely 
taking large offspring number events to be more frequent than 
in Assumption~\eqref{eq:largereprodratescaling}, but not too frequent.  In mathematical notation,
$N^2\eN \to \infty$ and $N\eN \to 0$.  The ancestral process in this
regime is again characterised by instantaneous separation of marked
chromosomes into single-marked individuals, followed by coalescence
and recombination occurring on the slow timescale.  The probability of
recombination is proportional to $N\eN$ since the slow timescale must
be in units proportional to $1/\eN$.  Hence, small reproduction events
become negligible in the limit, and the generator of the limit process
is given by
\begin{align}%%
\label{eq:DefGb}
G(\xi,\xi') = 
\begin{cases} 
 \frac{\psi^2}{4} C_{\beta;2;\beta-2}%4(\psi/4)^2(1-3\psi/4)^{\beta-2} 
& \text{if} \; 
\xi'= \mathsf{pairmerge}_{j_1,j_2}(\xi) \\
r^{(\ell)}/r & \text{if} \; \xi'= \mathsf{recomb}_{j,\alpha}(\xi) \\
 \frac{\psi^2}{4} C_{\beta;|J|} 
& \text{if} \; \xi'= \mathsf{groupmerge}_{J}(\xi) \\
0 & \text{for all other $\xi'\neq\xi$}
\end{cases}
\end{align}%
in which $C_{\cdot;\cdot;\cdot}$ is given by
Equation~(\ref{eq:Xirate}).  The requirement $N\eN \to 0$ is needed to
prevent unreasonably high rate of recombination.

\textbf{Haploid analogs}

%\begin{remark}   
%\label{rem:scale}%
A haploid version of the above model, where only one parent 
contributes offspring at each timestep, is a specific example of a $\Lambda$-coalescent,  where
$$
\Lambda(dx)=\delta_0(dx) +  c\psi^2\delta_\psi(dx), \quad \psi \in (0,1), \quad c \in [0,\infty), 
$$ see e.g.\ \cite{EW06} and \cite{BB09}.  More precisely, as the population
  size $N$ tends to infinity, assume probability $1 - c/N^2$ for the
  small reproduction events, $c/N^2$ for the large reproduction events
  (i.e., choose $\varepsilon_N= c/N^2$), and speed up generation time
  by $N^2$.
Again, by randomising $\psi$ and/or switching
to different scaling regimes, it is possible to obtain any given
$\Lambda$-coalescent as limiting genealogy. 
%(cf.\ \cite{MS01}), for
%example so-called Beta$(2-\alpha, \alpha)$-coalescents for $\alpha \in
%(1,2)$ \citep{schweinsberg03}.  
%\end{remark}

\textbf{Two-sex extensions}

Recent studies of the spawning behaviour of Atlantic cod
  indicate that cod adopts a lekking behaviour, in which males compete
  for females, and  females exercise mate choice
  \citep{nordeide00}.  Direct microsatellite DNA analysis indicates
  that although multiple paternity is sometimes detected, the
  reproductive success is highly skewed among the males, i.e.\ most of
  the successfully fertilized eggs can be attributed to a single male
  \citep{hutchings99}.  Our model thus seems a good approximation to
  the actual reproduction mechanism of cod.  Modifications to allow
  two distinct genders, and multiple paternity, %should also  be
  are in principle 
  straightforward.

\textbf{More general recombination models}

Our model can easily be enriched to allow also more general recombination 
events involving more than one crossover point at a time. 
Furthermore, by letting the number $L$ of loci tend to infinity, 
a continuous model, where $[0,1]$ represents a whole chromosome 
(as in \cite{GM97}),  
can be accomodated into our framework.

\textbf{Correlations in coalescence times}

\textbf{\textsl{The marginal process}}

Every marginal process (marginal with respect to one fixed locus under
consideration) of our ancestral recombination graph is a
$\Xi$-coalescent (see \cite{S00} for notation and details) with
$$
\Xi=\delta_0+c \frac{\psi^2}{4}\delta_{\big(\tfrac\psi4,\tfrac\psi4,\tfrac\psi4,\tfrac\psi4,0,0,\dots\big)}
%{\{\psi/4, \psi/4,\psi/4,\psi/4\}}
.$$ For $r=0$, all marginals are identical (realization wise), in particular
times to the most recent common ancestor for different loci have
correlation $1$.  However, in contrast to the classical setting, for
$r \to \infty$ one expects that the loci will not completely
decorrelate, but instead keep positive correlations, as pointed out to
us by J.E.\ Taylor (personal communication). In particular, one will not
obtain the product distribution. This observation is a potential
starting point for designing tests for the presence of large
reproduction events, by comparing correlations for loci at large
distance (hence with high recombination rate) under a Kingman- and a
$\Xi$-coalescent based ARG.

\textbf{\textsl{Correlation in coalescence times at two loci}}

 Correlations in coalescence times between two loci have been
considered in the context of quantifying association between loci
\citep{M02}.  \cite{EW08} consider correlations in coalescence times
for a haploid population model admitting large offspring numbers, in
which the ancestral process only admits asynchronous multiple mergers
of ancestral lineages.  To illustrate the effects of the reproduction
parameters on the coalescence times, we also consider the probability
that coalescence occurs at the same time at the two loci, as well as
the expected time until coalescence. 

The calculations to obtain the correlations for a sample of size two
at two loci (following the approach and notation of \cite{D02}) are
shown in the Appendix, Section \ref{sn:correlation}.  As we are now
considering the gene genealogy of unlabelled lineages, let us briefly
state the sample space.  Let $a$ and $b$ denote the types at loci $a$
and $b$, respectively.  The three sample states before coalescence at
either locus has occurred can be denoted as $(ab)(ab)$, $(ab)(a)(b)$,
and $(a)(a)(b)(b)$.  By $(ab)(ab)$ we denote the state of two
chromosomes each carrying ancestral material at both loci.
By $(ab)(a)(b)$ we denote the state of one
$(ab)$ chromosome in addition to two chromosomes $(a)$ and $(b)$ carrying ancestral types at locus 1 and 2 only, resp.
The notation $(a)(a)(b)(b)$
denotes the state of four chromosomes each carrying ancestral types at only one locus.
Let $$h(i) := \mathbb{P}\left( \{T_a =
T_b\}|i \right), \quad i \in \{0,1,2\}$$ denote the probability that
coalescence at the two loci occurs at the same time, given that the
process starts in state $i$, in which $i$ refers to the number of
double-marked chromosomes (2, 1, or 0).  As we are working with the
limiting model, all marked individuals are effectively single-marked.
Under the usual (Kingman coalescent-based) ARG, $\lim_{r \to
  \infty}h(i) = 0$ as one would expect.  Our model yields
\begin{equation}%
  \label{eq:hilimrinf}%
\lim_{r \to \infty}h(i) =  \tfrac{c\psi^4}{32 + 8c\psi^2  -  c\psi^4}, \quad i \in \{0,1,2\};
\end{equation}%
indicating that even unlinked loci remain correlated due to
sweepstake-style reproduction. Figure~\ref{figure1} shows graphs of $h(i)$
as a function of $\psi$ for different values of $c$ and $r$.  As
expected, $h(i)$ increases with $\psi$, at a rate which increases with
$c$.

Under the usual ARG, the expected time $\mathbb{E}_i[T_s]$ until
coalescence at either loci, starting from state $i$ is given by
$\mathbb{E}_i[T_s] = (1 + h(i))/2$.   The random variable $T_s$ can be
viewed as the minimum of the time until coalescence occurs at the two
loci.  As $r \to \infty$, the times $T_1$ and $T_2$ until
coalescence at the two loci, resp., become independent and identically
distributed  exponentials ({i.i.d.e.})  with rate $1$, whose minimum
has expected value $1/2$.  Under our model, the mean of $T_s$ is
\emph{not} the minimum of two i.i.d.e.\ with
rate $1 + c\psi^2/4$, another reflection of the correlation in gene
genealogies induced by sweepstake-style reproduction. 
Indeed, our model gives
$$
\lim_{r \to \infty} \mathbb{E}_i[T_s] = \tfrac{1}{2}\left( \tfrac{1}{1 + \chi c\psi^2/4}  \right), \quad i \in \{0,1,2 \}.  
$$
in which $\chi = 1 - \psi^2/8$.

Under our model, $\mathbb{E}_i[T_s]$ decreases with $\psi$, and the
rate of decrease increases with $c$ (Figure~\ref{figure2}).  The same
pattern holds for the expected time $\mathbb{E}_i[T_l]$ until
coalescence has occurred at both loci (Figure~\ref{figure3}).  As $r
\to \infty$, $\mathbb{E}_i[T_l]$ associated with the usual ARG
approaches the expected value $(3/2)$ of the maximum of two i.i.d.e.\
with rate 1.  Under our model,
\begin{multline*}
\lim_{r \to \infty}\mathbb{E}_i[T_l] = \frac{3}{2}\frac{1}{1 + \tfrac{c\psi^2}{4} }\frac{1}{1 + \tfrac{c\psi^2}{4} - \tfrac{c\psi^4}{32} } \\ +  \frac{  c\psi^2(6 - \psi^2)  }{ (c\psi^2 + 4)(4 + c\psi^2 - c\psi^4/8)  }
\end{multline*}
while the maximum of two i.i.d.e.\ with rate $\lambda$ has expected value  $3/(2\lambda)$.

The correlation $\crr{T_1}{T_2}$ between $T_1$ and $T_2$ when starting
from one of the three possible sample states $i \in \{0,1,2 \}$ (see
Appendix) increases with $\psi$, and more so if $c$ is large
(Figure~\ref{figure4}).  One obtains the following limit relations
between $h(i)$ and $\crr{T_1}{T_2}$ for $i \in \{0,1,2\}$:
    \begin{displaymath}%
      \begin{split}%
        \lim_{r \to \infty}\crr{T_1}{T_2} & = \lim_{r \to \infty}h(i), \quad \textrm{(see Eq. \eqref{eq:hilimrinf})} ; \\
        \lim_{r \to 0}\crr{T_1}{T_2} & = \lim_{r \to 0}h(i), \quad \textrm{(see Eq.\ \eqref{eq:hir0})};\\
         \lim_{c \to \infty}\crr{T_1}{T_2} & = \lim_{c \to \infty}h(i), \quad \textrm{(see Eq.\ \eqref{eq:limhi})}. \\
        \end{split}%
      \end{displaymath}%

Quantifying the association between alleles at different loci can give insight into the evolutionary history of 
populations.    Let  $f_{a}$ and $f_b$ denote the frequencies of  alleles $a$ at locus 1, and $b$ at 
locus 2, and let $f_{ab}$ denote the  frequency of chromosome $ab$ 
in the total population.   The statistic $D_{ab} := f_{ab} - f_af_b$ 
measures the deviation from independence, since if the two loci were evolving independently,  
$f_{ab} = f_af_b$.   A related quantity is the  $r^2$ statistic, defined as 
	$$r^2 :=  \frac{D^2}{f_a(1 - f_a)f_b(1 - f_b)}$$ \citep{hill68},  assuming  $f_a, f_b \notin \{0,1\}$.    
  In applications, one would like to  compare  observed values of  $r^2$ calculated from data to
  the expected value $\mathbb{E}\left[r^2\right]$, obtained under an appropriate population model.  
  Calculating the expected value of $r^2$ is not straightforward,  since $r^2$ is a ratio of correlated
  random variables.    The expected value of $r^2$ is, instead, approximated by the ratio 
  $\mathfrak{D} =  \mathbb{E}[D^2]/\mathbb{E}[f_a(1 - f_a)f_b(1 - f_b)]$ \citep{OK71}.

      A prediction $\mathfrak{D}$ of linkage disequilibrium in the
      population can be framed in terms of correlations in coalescence
      times between two loci for a sample of size two, assuming a
      small mutation rate \citep{M02}.  The prediction rests on
      approximating the expected value $\EE{r^2}$ of the squared
      correlation statistic $r^2$ \citep{HR68} of association between
      alleles at two loci by the ratio of expected values
      \citep{OK71}. Following e.g.\ \cite{D02} one can obtain
      expressions for correlations in coalescence times between two
      loci for a sample of size two (see Appendix).  Under our
      model, one obtains the limit results
\begin{displaymath}
  \begin{split}%
\lim_{r \to \infty }\mathfrak{D} & = 0, \\
\lim_{c \to \infty}\mathfrak{D} & = \tfrac{\psi^3 - 16\psi^2 + 56\psi - 80 }{\psi^3 - 10\psi^2 + 88\psi - 176}. 
\end{split}%
\end{displaymath}
When $\psi$ is small but $c$ large, one obtains   
$$
\mathfrak{D} = \tfrac{5 - 7\psi/2 }{11 - 11\psi/2 } + O(\psi^2).
$$
Under the usual ARG, $\lim_{r \to 0}\mathfrak{D} = 5/11$.  Thus, even
in the presence of a \emph{high} recombination rate, if large offspring
number events are frequent enough, one may only see evidence of
\emph{low} recombination rate in data.  Further, the prediction
$\mathfrak{D}$ can be substantially higher than Kingman-coalescent based
predictions if $c$ is large, and the recombination rate is not too small
(Figure~\ref{figure7}).

For particular examples of probability measures $F$ 
from Equation~\eqref{eq:Xirate.general} associated with the generator derived from our random
offspring distribution model
one can compute the quantities considered above in
relation to fixed $\psi$.  One such example distribution can be the
Beta$(\vartheta,\gamma)$ distribution.    One obtains for $i \in \{0,1,2\}$,
$$
\lim_{r \to \infty}h(i) =  \frac{4\gamma(1 + 2\vartheta + \gamma) }{8\gamma(1 + \gamma) + 10\gamma\vartheta  + 7\vartheta(1  +          \vartheta)}.
$$
Define $\tilde{h}(i) := \lim_{r \to \infty}h(i)$.  For $i \in \{0,1,2\}$ one obtains
\begin{equation}%
  \label{eq:limrETsTlbeta}%
  \begin{split}%
\lim_{r \to \infty}\mathbb{E}_i[T_s] & =  4\tilde{h}(i) +  \frac{4\gamma(1 + 2\vartheta + \gamma)}{8\gamma(1 + \gamma) + 10\gamma\vartheta  + 7\vartheta(1  +          \vartheta) }, \\
\lim_{r \to \infty}\mathbb{E}_i[T_l] & = \frac{3}{2} - \frac{1}{2}\tilde{h}(i) + \frac{3\gamma}{2(8\gamma(1 + \gamma) + 10\gamma\vartheta  + 7\vartheta(1  + \vartheta))}. \\
\end{split}%
\end{equation}%
The form of the relation shown in \eqref{eq:limrETsTlbeta} between
$h(i)$ and $\mathbb{E}_i[T_s]$ and $\mathbb{E}_i[T_l]$ resembles the
one obtained for the Kingman coalescent-based ARG, with the addition
of a `correction' term due to simultaneous multiple mergers.

\textsl{\textbf{  Variance of pairwise differences}} %%

  %%  \label{fig:pairvarrho0}%
  %%     \label{fig:pairwnn0}%
The expected variance of pairwise differences was employed by
\cite{W97} to estimate the recombination rate in low offspring number
(Wright-Fisher) populations, under the usual ancestral recombination
graph.     Let the random variable  $K_{ij}$ denote the number of differences  between 
sequences $i$ and $j$, with $K_{ii} = 0$.    The average number $\pi$ of pairwise differences 
for $n$ sequences is 
$$
	\pi = \frac{2}{n(n - 1)}\sum_{i < j}K_{ij}.
$$
The (empirical) variance $S_\pi^2$ of pairwise differences is defined as
$$
	S_\pi^2 =  \frac{2}{n(n - 1)}\sum_{i < j}\left(K_{ij} - \pi\right)^2.
$$
   In the Appendix we derive the expected variance of pairwise
differences $\EE{S_\pi^2}$ under the ancestral recombination graph
described by the generator $G$ \eqref{eq:DefG} derived from our large
offspring number model.  Under our model, $\EE{S_\pi^2}$ is a function
of the parameters $c$ and $\psi$, in addition to being a function of
$r$ and $\theta$ (Figures~\ref{fig:pairvarrho0} and
\ref{fig:pairwnn0}).  In Figure~\ref{fig:pairvarrho0}, $\EE{S_\pi^2}$,
when only two loci are considered, is graphed as a function of the
recombination rate, and in Figure~\ref{fig:pairwnn0} as a function of
sample size.  Figures~\ref{fig:pairvarrho0} and \ref{fig:pairwnn0}
show that $\EE{S_\pi^2}$ is primarily influenced by the mutation rate
$(\theta)$, when the values of $c$ and $\psi$ are fairly modest.
However, $\EE{S_\pi^2}$ can be quite low when both $c$ and $\psi$ are
large, even when $\theta$ is also large (Figure~\ref{fig:pairwnn0}).
When $c$ and $\psi$ are both large, two sequences are more likely to
coalesce before a mutation separates them.

  The variance of pairwise differences alone will not suffice to yield
  estimates of $r$ if both $c$ and $\psi$ are unknown.  To jointly
  estimate the four parameters ($c$, $\psi$, $r$, $\theta$) of our
  model one probably needs to employ computationally-heavy likelihood
  and importance sampling methods in the spirit of \cite{FD01}.
  However, given knowledge of $c$ and $\psi$, one can, in principle,
  use the variance of pairwise differences to quickly obtain estimates
  of the recombination rate.

\textsl{\textbf{ Correlations in ratios of coalescence times }}%%

  The behaviour of the correlations in ratios of coalescence times for
  sample sizes  larger than two is investigated using Monte Carlo simulations.

  Let $L_i$ denote the total length of branches ancestral to $i$
  sequences at one locus, let $L$ denote the total length of the
  genealogy at the same locus, and define $R_i := L_i/L$. Thus, $R_1$ is
  the total length of external branches to the total size of the
  genealogy.  The idea behind estimating the expected value $\EE{R_i}$
  is as follows. Assuming the infinitely many sites mutation model,
  let $S_i$ denote the total number of mutations in $i$ copies, $S$
  the total number of segregating sites, and define $V_i := S_i/S$.
  The key idea behind deriving the coalescent was to separate the
  (neutral) mutation process from the genealogical process.  The same
  principle also applies to predicting patterns of genetic variation
  using the coalescent: first one constructs the genealogy, and then
  superimposes mutations on the genealogy.  The shape of the
  genealogy is thus a deciding factor in the genetic patterns one
  predicts.  The relative lengths $R_i$ of the different types
  of branches should therefore predict the relative number $V_i$ of mutations of
  each class.  This idea is exploited by \cite{E11} to estimate
  coalescence parameters in the large offspring number models
  introduced by \cite{schweinsberg03} and \cite{EW06}.  Namely, the claim is
  \begin{equation}%
      \label{eq:limEriVialpha}%
  \lim_{n \to \infty}\EE{R_i} = \lim_{n \to \infty}\EE{V_i} = f(\varpi, i)
  \end{equation}%
  where $n$ denotes the sample size, $\varpi$ denotes the coalescence
  (reproduction) parameters.  Indeed, %\cite{E11} hypothesizes, based
  it follows from the results of   \cite{BBS07,BBS08}, that  ($1 < \alpha < 2$)
  \begin{displaymath}
  \lim_{n \to \infty}\EE{R_i} = \lim_{n \to \infty}\EE{V_i} = \tfrac{\Gamma(i + \alpha - 2)(\alpha - 1)(2 - \alpha)}{\Gamma(\alpha)i!}
  \end{displaymath}
  when associated with the Beta$(2 - \alpha, \alpha)$ coalescent
  derived by \cite{schweinsberg03} from a population model in which the offspring
  law is stable with index $\alpha$.  A key feature of expression
  \eqref{eq:limEriVialpha} is the absence of mutation rate in the
  function $f(\varpi, i)$; thus given large number of DNA sequences
  (possibly in the thousands), one hopes to be able to obtain
  estimates of the coalescence parameters $\varpi$ without having  to jointly
   estimate the mutation rate.  In our model, there are four
  parameters to estimate, namely mutation and recombination rates,
  along with the coalescence parameters $c$ and $\psi$.  Even though
  full likelihood methods exist \citep{BB08,BBS11}, applying them to
  large datasets consisting of thousands of sequences may represent a
  challenge.

  Estimates of $\EE{R_i}$ as functions of the sample size $n$, and the
  coalescence parameters $c$ and $\psi$ are shown in
  Table~\ref{tab:ERi}.  In nearly all cases the estimates
  $\overline{R}_i$ decreased as sample size increased; the exception
  was $\overline{R}_1$ when $(c,\psi) = (1000, 0.5)$
  (Table~\ref{tab:ERi}).  When both $c$ and $\psi$ are large enough,
  we observe a non-monotonic behaviour in $\overline{R}_1$ as sample
  size increases (results not shown).  The non-monotonic behaviour may
  be related to the property of the marginal haploid process (the
  point-mass part obtained as $c \to \infty$) of a single locus of not
  coming down from infinity \citep{schweinsberg00inf}, i.e.\ when one
  starts with an infinite number of lineages (sample size), the number
  of lineages stays infinite.  For such processes that don't come
  down from infinity, the ratio $R_1$ should go to one, i.e.\ the gene
  genealogy should become completely star-shaped (see
  e.g.\ \cite{E11}).   As both $c$ and $\psi$ increase, one expects the
  deviation from Kingman-coalescent based predictions to increase.  By
  way of example, for sample size $50$ the vector $(\EE{R_1}, \ldots,
  \EE{R_4})$ is estimated to be approx.\ $(0.24,0.12,0.08,0.06)$ when
  associated with the Kingman coalescent $(c = 0)$, while being
  approx.\ $(0.58,0.20,0.09,0.05)$ when $(c,\psi) = (1000, 0.5)$.  In
  all cases the estimate $\widehat{R}_i$ of the standard deviation of
  $R_i$ decreases as sample size increases, indicating convergence.

  The rationale behind comparing the statistics in
  Tables~(\ref{tab:corXYa}--\ref{tab:corXYb}) is as follows.  As
  sequencing technologies advance, and the genomic sequences of more
  organisms become available, a case in point being the recently
  published genomic sequence of Atlantic cod \citep{S11}, genomic scans
  of thousands of individuals will become more common.  Given DNA
  sequence data for many loci, one could calculate correlations for
  counts and ratios of counts of mutations, and compare them to
  predictions based on different ancestral recombination graphs.
  Similarly for the single-locus statistics (Table~\ref{tab:ERi}), the
  idea is that the correlations of the coalescence time statistics
  ($L_i$ and $R_i$) should reflect correlations of mutation counts
  $(S_i)$.  In particular, under the usual ARG one expects (see
  Tables~\ref{tab:corXYa}--\ref{tab:corXYb})
  $$
  \lim_{r \to \infty}\text{cor}\left(L^{(1)}_i,L^{(2)}_j\right) = \lim_{r \to \infty}\text{cor}\left(R^{(1)}_i, R^{(2)}_j \right) = 0,
  $$
  where the superscript refers to locus number one and two, respectively, 
  while under an ARG admitting simultaneous multiple mergers one expects
   \begin{displaymath}
     \begin{split}%
  \lim_{r \to \infty}\text{cor}\left(L^{(1)}_i,L^{(2)}_j\right) & =  f(i,j,\varpi)\\
   \lim_{r \to \infty}\text{cor}\left(R^{(1)}_i, R^{(2)}_j \right) & = g(i,j,\varpi)
  \end{split}\end{displaymath}
  where $f$ and $g$ are %unknown 
  functions of the particular
  statistics indicated by $i$ and $j$ as well as the vector $\varpi$ of
  coalescence (reproduction) parameters.

  In general, the results reported in
  Tables~\ref{tab:corXYa}--\ref{tab:corXYb} indicate that high values
  of \emph{both} $\psi$ and $c$ are required for high correlations
  when recombination rate is high, when associated with our model.  In
  particular, the correlations between 
  %corresponding $R_i$'s (e.g.\
  $R^{(1)}_i$ and $R^{(2)}_i$ 
  (i.e.\ between corresponding $R_i$'s at different loci) 
  can be quite high,
  even when recombination is high, when both $c$ and $\psi$ are large
  enough; another indicator of the genome-wide correlations induced by
  sweepstake-like reproduction.

  A different question concerns the limit behaviour as sample size $n$
  increases. Fix the recombination rate and consider the limits
  \begin{equation}%
    \label{eq:limRiRjntoinf}%%
  \lim_{n \to \infty }\text{cor}\left(R^{(1)}_i, R^{(2)}_j \right), \quad
  \lim_{n \to \infty} \text{cor}\left(V^{(1)}_i, V^{(2)}_j \right) 
  \end{equation}%
  Under the usual ARG, one expects the limits in
  \eqref{eq:limRiRjntoinf} to be only functions of the recombination
  rate (and $i$ and $j$). If the ARG also admits simultaneous multiple
  mergers, one expects the limits in \eqref{eq:limRiRjntoinf} also to
  be functions of $\varpi$.  Considering unlinked loci, one would be
  interested in the limits
  \begin{equation}%
     \label{eq:limRiRjrntoinf}%%
    \lim_{r \to \infty}\lim_{n \to \infty }\text{cor}\left(R^{(1)}_i, R^{(2)}_j \right), \quad
  \lim_{r \to \infty}\lim_{n \to \infty} \text{cor}\left(V^{(1)}_i, V^{(2)}_j \right) 
    \end{equation}%
    Resolving the limits \eqref{eq:limRiRjrntoinf} for different ARG's
    promises not only to yield insights into genome-wide correlations,
    but also to provide tools for inference; e.g.\ to distinguish
    between different population models.

    The C program written to perform the simulations was checked by comparing
    correlation in coalescence times for sample size two at two loci
    to analytical results.  The program is available upon request.

    \textbf{Comparison with} \cite{EW08}   \\
      \cite{EW08} consider correlations in coalescence times, and the prediction
      $\mathfrak{D}$ of linkage disequilibrium, under a modified
      Wright-Fisher sweepstake-style reproduction model, and observe
      correlations in coalescence times between loci despite high
      recombination rate.  Our work differs from theirs in important
      ways.  To begin with, we treat diploidy in detail, in which each
      offspring receives its two chromosomes from two distinct
      diploid parents.  This leads to a separation of timescales of
      the ancestral process.  We formally derive an
      ancestral recombination graph which admits simultaneous multiple
      mergers of ancestral lineages, which naturally arise in diploid
      models.  
      Eldon and Wakeley observed correlations in coalescence times when
      considering only sample size two at each locus in a model that 
      contains diploid individuals only implicitly, it is not
      a priori obvious that the correlations would still hold for large sample
      sizes. 
      We confirm this using our formally obtained ARG, that  
      allows us also to investigate correlations in
      coalescence times, and in ratios of coalescence times, for
      sample sizes larger than two at each locus.   In addition,
      one can  apply our ARG to inference problems. Indeed, we show
      how the variance of pairwise differences can, in principle, be
      used to obtain estimates of the recombination rate.  Finally, we
      obtain a large class of ARGs by randomizing the offspring
      distribution; thus one is not restricted to the simple case of
      fixed $\psi$.

      Furthermore, since the estimate $\mathfrak{D}$ of the expected
      value of $r^2$ can be expressed in terms of correlations in
      coalescence times, Eldon and Wakeley consider $\mathfrak{D}$
      under their modified Wright-Fisher model.  However,
      $\mathfrak{D}$ is based on approximating an expected value of a
      ratio of correlated random variables by the ratio of expected
      values of the corresponding random variables, and is also
      derived for a sample of size two at two loci. Thus,
      $\mathfrak{D}$ may not be the ideal quantity to quantify
      association between loci for large sample sizes.  A more natural
      way may be to investigate correlations in coalescence times for
      samples larger than two the way we do.

%\section{Discussion}%
%\label{sec:discussion}%
%%%%

\textbf{Discussion}

Understanding the genome-wide effects of sweepstake-like reproduction on
gene genealogies was our main aim.  To this end, we derived ancestral
recombination graphs for many loci arising from population models
admitting large offspring numbers.  High variance in individual
reproductive success, or sweepstake-style reproduction, has been suggested
to explain the low genetic diversity observed in many marine
populations \citep{H82,H94,A88,PW90,B94,A04}.  \cite{HP11} review the
sweepstake-style reproduction hypothesis, and conclude that it provides the
correct framework in which to investigate many natural marine
populations.

Multiple \citep{DK99,P99,S99} and simultaneous \citep{S00, MS01}
multiple merger coalescent models arise from population models
incorporating sweepstakes reproduction by admitting large offspring
numbers \citep{S03,EW06,SW08}.  While multiple merger coalescent
processes describing the ancestral relations of alleles at a single
locus have received the most attention from mathematicians, ancestral
processes for multiple linked loci have hitherto remained unexplored.
We derive an ancestral recombination graph for many loci from a
diploid biparental population model, in which %at most 
one pair of
diploid individuals (parents) contribute offspring to the population
at each timestep.  Thus, each offspring necessarily receives her
chromosomes from distinct individuals, as diploid individuals tend to
do.  Incorporating diploidy into our model the way we do leads to a
separation of timescales problem.  Our limiting object is essentially
a `haploid' process, in which chromosomes either coalesce or
recombine.  By extending a result of \cite{Moe98}, we show that
diploidy, a fundamental characteristic of many natural populations,
can thus be treated as a `black box', since the limiting object does
not depend on the location of chromosomes in individuals.

By adopting a Moran type model, in which only a single pair of
individuals gives rise to offspring at each reproduction event, we
chose mathematical tractability over more biologically realistic
scenarios; in which, for example, many individuals contribute
offspring at each timestep.  It should be straightforward to extend
our model in many ways, for example allowing random number of parents,
or introducing population structure.  Indeed, we do extend our model
in one way, by taking a random offspring distribution.  These
extensions still leave open the question of distinguishing among
different large offspring number models.  Our work on ancestral
recombination graphs incorporating information from many loci is a
step in this direction.

Sweepstake-style reproduction induces correlation in coalescence times
even between loci separated by high rate of recombination.  The
correlation follows from the multiple merger property of our ancestral
recombination graph, since many chromosomes coalesce at the same time
in a multiple merger event.  The correlation remains a function of the
coalescence parameters ($c$ and $\psi$) of our population model.  An
immediate question is the effects on predictions of linkage
disequilibrium (LD).  The approximation $\mathfrak{D}$ by \cite{M02}
predicts low LD when recombination rate is high.  However, when the
rate of large reproduction events is high $(c \to \infty)$,
$\mathfrak{D}$ remains a function of the coalescence parameters.  The
dependence of $\mathfrak{D}$ on coalescence parameters has
implications for the use of LD in inference for populations exhibiting
sweepstake-style reproduction.  Using simulations, \cite{davies07}
found little effect of multiple mergers on the prediction $r^2$ of
linkage disequilibrium, when comparing the exact Wright-Fisher model
with recombination to the usual (continuous-time) ARG.
However, by directly incorporating large offspring number events the
way we do, we can show that large offspring number events do induce
correlation in coalescence times, and hence influence predictions of
linkage disequilibrium.

The genome-wide correlation in coalescence times
(Tables~\ref{tab:corXYa}--\ref{tab:corXYb}) induced by
sweepstake-style reproduction offers hints about how to distinguish
between large offspring number and ordinary Wright-Fisher
reproduction.  We are unaware of any published multi-loci methods
derived to distinguish among different population models.
Full likelihood methods may be preferable to the simple moment-based
methods we consider.  However, likelihood-based inference tends to be
computationally intensive, and more so for large samples.  For large
samples, one should be able to quickly obtain a good idea of the
underlying processes by comparing correlations in ratios of mutation
counts with predictions based on different population models.

%\begin{align}%%
%\label{eq:DefGb}
%G(\xi,\xi') = 
%\begin{cases} 
 %\frac{\psi^2}{4} C_{\beta;2;\beta-2}%4(\psi/4)^2(1-3\psi/4)^{\beta-2} 
%& \text{if} \; 
%\xi'= \mathsf{pairmerge}_{j_1,j_2}(\xi) \\
%r^{(\ell)}/r & \text{if} \; \xi'= \mathsf{recomb}_{j,\alpha}(\xi) \\
 %\frac{\psi^2}{4} C_{\beta;|J|} 
%& \text{if} \; \xi'= \mathsf{groupmerge}_{J}(\xi) \\
%0 & \text{for all other $\xi'\neq\xi$}
%\end{cases}
%\end{align}%
%in which $C_{\cdot;\cdot;\cdot}$ is given by
%Equation~(\ref{eq:Xirate}).  The requirement $N\eN \to 0$ is needed to
%prevent unreasonably high rate of recombination.   

In conclusion, ancestral recombination graphs admitting simultaneous
multiple mergers of ancestral lineages are derived from a diploid population
model of sweepstake-style reproduction, suggested to be common in many
diverse marine populations.  Our calculations show that
sweepstake-style reproduction results in genome-wide correlation of
gene genealogies, even for large sample sizes.  Estimates of linkage
disequilibrium and of recombination rates are confounded by the
coalescence parameters of our population model.  The genome-wide
correlation in gene genealogies induced by sweepstake-style reproduction
implies that examining correlations between loci should provide means
of distinguishing between ordinary Wright-Fisher and sweepstake-style
reproduction.  \medskip

\noindent
{\small 
  We gratefully acknowledge the comments of two anonymous referees 
  which helped to improve the presentation; one referee also spotted 
  an error in our original proof of Theorem~\ref{thm:conv.general}. 

  B.E.\ was supported in part by EPSRC grant EP/G052026/1, and
  by a Junior Research Fellowship at Lady Margaret Hall, Oxford.
  J.B.\ and B.E.\ were supported in part by DFG grant BL 1105/3-1.
  J.B.\ and B.E.\ would like to thank Institut f\"{u}r Mathematik,
  Johannes-Gutenberg-Universit\"{a}t Mainz, for hospitality.
  M.B.\ was in part supported by DFG grant BI 1058/2-1 and through ERC
  Advanced Grant 267356 VARIS.  M.B.\ would like to thank Mathematisch
  Instituut, Universiteit Leiden, for hospitality. }%%

\bibliographystyle{genetics}
\bibliography{recombbibB}

\section{Appendix}

\subsection{Overview of transitions and their probabilities in the finite population model} 
\label{sn:transitions}

\subsubsection{Basic setup and notation}

We will now classify all transitions and their probabilities of our
population model relevant for the ancestral process under the scaling
$\eN = c/N^2$, in which  $N$ denotes the population size.  Fix a
sample size $n$ for this section. Usually we suppress the
dependence on the sample size in the notation below.  Recall the state
space $\mathscr{A}_n$ of our ancestral process (resp.\
$\mathscr{A}_n^{\tt sm}$ for the `effective' limiting model).

Let $\Pi_N$ be the transition matrix of the Markov chain
$\{\xi^{n,N}(m)\}_{m=0,1,\dots}$ on $\mathscr{A}_n$ describing the ancestral
states of an $n$-sample in a population of size $N$.
Our aim is to decompose $\Pi_N$ into
\begin{equation} 
\label{eq:PiNdecomp}
\sub{\Pi}{N} = \sub{A}{N} + \frac1{N^2} \sub{B}{N} + \sub{R}{N} %= \sub{A}{N} + \frac1{N^2} \sub{B^*}{N},
\end{equation}
where the matrix $\sub{A}{N}$ contains all transitions whose
probability is  $O(1)$ or $O(N^{-1})$ per generation, so that
they will happen `instantaneously' in the limit, and either are
identity transitions, or projections from $\mathscr{A}_n$ to
$\mathscr{A}_n^{\tt sm}$ by means of dispersing chromosomes paired in
double-marked individuals.  The matrix $\sub{B}{N}$ contains all
transition probabilities which are positive and finite after
multiplication with $N^2$ and $N \to \infty$, that is, our `effective
transitions'. The remainder matrix $\sub{R}{N}$ carries only
transition probabilities that are of order $O(N^{-3})$ or smaller,
that will thus vanish after scaling.

Once we have established this decomposition, we can apply
Lemma~\ref{lemma:twoscales} below in a suitable way in order to identify the limit given in Definition \ref{def:limit} and establish the convergence result, i.e.\ Theorem \ref{thm:conv}.\\

In Tables~1 -- 3 we will schematically deal with all possible transitions 
that can happen to a current sample over one timestep.    

 Analogous to the notation and convention of \cite{MS03}, we assume that in every configuration $\xi^{n,N}(m)$ from \eqref{eq:config}, the order of chromosomes in
%for all  up the ancestral information of active chromosomes into  
individuals $\ind_i$ for $i \in [b(m)]$ we have
\begin{equation}%
   \label{eq:indgrouping}
   \begin{split}%
  \ind_i(m) & =  \left\{ C^{(2i - 1)}(m), C^{(2i)}(m) \right\}\\
  & \quad \textrm{if  $1 \le i \le \beta(m) - b(m)$;}   \\
  \ind_i(m)  & = \left\{ C^{(\beta(m) - b(m) + i)}(m), \emptyset \right\} \\
   & \quad \textrm{if  $\beta(m) - b(m) + 1 \le i \le b(m) $}.\\
    \end{split}%
  \end{equation}%

For ease of presentation, we denote by 
\begin{itemize}%
\item [ $\ind'$] a single-marked individual carrying one active chromosome;
\item [ $\ind''$] a double-marked individual carrying two active chromosomes;
\item [ $\tilde{\ind}'$] a single-marked individual (parent) whose marked 
chromosome is not passed on in the sample during a given reproduction event;
\item [$\hat{\ind}''$] a double-marked individual (parent) where 
one marked chromosome is passed on and the other not during a given reproduction event.
  \end{itemize}%

The symbols $(A)$, $(B)$ and $(R)$ in the tables denote whether the 
corresponding transitions belong to $\sub{A}{N}$ $(A)$, to $B_N$ $(B)$ or 
the `remainder term' $(R)$ in (\ref{eq:PiNdecomp}) according to the decomposition
mentioned above. After that, we compute all the important probabilities explicitly. The order of the probability of each transition is also noted
  in Tables~1--3.

\subsubsection{Transition type 1: Small or large reproduction event, no offspring in the sample} 
\label{subssnochild}%%

If a reproduction event takes place, say at generation $m$, that does not affect our sample, this will not affect the state of our ancestral process at $m+1$, and we have $\xi^{n,N}(m)=\xi^{n,N}(m+1)$. Hence, we see an identity transformation.
We now compute the probability that our sample is not affected.
Given current state $\xi \in \mathscr{A}_n$ with $b$ individuals and $\beta$ chromosomes 
(hence $\beta-b$ double-marked and $2b-\beta$ single-marked individuals), the probability
that no child is in the sample is 
\[
(1-\eN) \frac{N-b}{N} + 
\eN \frac{{N-b \choose \lfloor \psi N \rfloor}}{{N \choose \lfloor \psi N \rfloor}} 
 = 1-O(N^{-1}).
\]
%Hence, the diagonal elements in $A_N$ are of order $1-O(N^{-1})$.

%
%Equivalently, one can consider the probability of none of the $b$ active marked
%individuals be offspring, given by  
%$$
%(1 - \eN)\frac{\binom{1}{0}\binom{N - 1}{b} }{\binom{N}{b} } + \eN\frac{\binom{\ul }{0}\binom{N - \ul}{b} }{\binom{N}{b} } = 1 - O(N^{-1})
%$$
%since $\eN = c/N^2$.  

\subsubsection{Transition type 2: Small reproduction event, offspring in sample, at most one parent in the sample, no recombination} 

Here, we only need to distinguish whether the offspring is single or
double marked, and whether there is a parent in the sample. For
example, it is immediate to see that the probability of a transition
from a double-marked $(\ind'')$ offspring to two single-marked
$(\{\ind', \ind'\})$ individuals is of order $O(N^{-1})$ when no
parent is in the sample and no recombination
happens. Table~\ref{tab:subssmrnoreconepar} lists all corresponding
events. By way of example, the state labelled
  $\{\ind^\prime, \ind^\prime \}$ denotes that two single-marked
  individuals, each carrying one active chromosome, is reached from the
  sample configuration.  One such configuration is if the sample
  contains one offspring, but neither parent $(\emptyset)$, and the
  offspring is carrying two active chromosomes $(\ind^{\prime \prime}
  )$. 

%%Table 1
\begin{table}[h!]
%\label{tab:trans2}
  \centering
  \caption{Transitions of type 2.} %% {\tt what is the use of the symbol $a$?}}
  %due to a small reproduction event, offspring and at most one parent in the sample, and recombination does not occur. }%
\label{tab:subssmrnoreconepar}
\begin{tabular}{|c||c|c|c|c|} 
\hline
 & \multicolumn{2}{l|}{Parent with marked chromosome(s)} \\
 & \multicolumn{2}{l|}{($\emptyset$ means no parent in sample)} \\
Offspring & $\emptyset$ & $\ind'$ \\ \hline
%Child & $\emptyset$ & \OMI & \OMIni & \TMIoi  \\ \hline \hline
%\TMI & \OMI \OMI, $(A)$ & \OMI \OMI & \TMI \OMI & \TMI \OMI  \\
$\ind''$ & $\{\ind', \ind'\}$ $(A)$ & $\{\ind',\ind'\},\{\ind'',\ind'\}$   \\
 & $(\ast)$ \;\; $O(N^{-1})$ & $O(N^{-2})$, $(B)$  \\ \hline
%\OMI & \OMI, $(A)$ & \OMI, $(B)$ & \TMI, $(B)$ & \TMI \OMI \\ 
$\ind'$ & $\{\ind'\}$ $(A)$ & $\{\ind'\}$, $\{\ind''\}$, $\{\ind',\ind'\}$, $(B)$  \\
& $(\ast\ast)$ \;\; $O(N^{-1})$ & 
$(\dagger)$ \;\; $O(N^{-2})$  \\  \hline
\rule{0ex}{3ex} &   $\tilde{\ind}'$ & $\hat{\ind}''$ \\ \hline 
$\ind''$ 
  &  $O(N^{-2})$, $(B)$ &  $(\ddagger)$ \;\; $O(N^{-2})$, $(B)$   \\ \hline
$\ind'$ & $\{\ind''\}$, $\{\ind', \ind'\}$, $(B)$ &  $\{\ind', \ind''\}$, $\{\ind'\}$, $(B)$   \\ 
 &  $O(N^{-2})$  &  $O(N^{-2})$  \\
\hline
\end{tabular}
\end{table}

\subsubsection{Transition type 3: Small reproduction event, offspring in sample, 
both parents in the sample}

If both parents and offspring are in the sample in a small event, this immediately 
gives a transition probability of order $O(N^{-3})$ or smaller (depending on the presence of
recombination, hence will be irrelevant, and be part of $R_N$.
We omit a detailed table listing the different single- and double marked individuals.
% 
%
%In Table \ref{subssmrnorecbothpar}, both parents are in the sample, 
%recombination does not occur, and offspring is in the sample.  
%The probabilities of all the events
%listed in Table~2 are of order $O(N^{-3})$, and thus will vanish in
%the limit. Note that events  additionally involving recombination are of even smaller probability, and will be ommitted. 
%
%%%Table 2
%\begin{table}[h!]
%\centering
%\caption{ Transitions of type 3. }
%%when a small reproduction event occurs,  both parents  are in  the sample, and recombination does not occur. }%
%\label{subssmrnorecbothpar}\ \\
%\begin{tabular}{|c||c|c|c|c|c|} 
%\hline
% & \multicolumn{5}{l|}{Parents with marked chromosomes} \\
%Offspring &  \OMIni  \OMIni & 
%\OMI \OMIni & \TMIoi \OMIni & \TMIoi \OMI & \TMIoi \TMIoi \\ \hline \hline 
%\TMI & \TMI \TMI & \OMI \TMI & \TMI \TMI & \TMI \OMI & \TMI \TMI \\ \hline
%\OMI & \TMI \OMI & \OMI \OMI & \TMI \TMI & \TMI \OMI & \TMI \TMI \\
% & & \OMI \TMI & \TMI \OMI & \TMI \TMI & \\
%\hline
%\end{tabular}
%\end{table}

\subsubsection{Transition type 4: Small reproduction event, offspring and at most one parent in sample, recombination occurs}

Table~\ref{subssmrrecnopar} lists transitions due to recombination, and when
neither parent is in the sample.  The probability of the presence of
both an offspring and at least one parent in a sample, when
recombination occurs, is of order $O(N^{-3})$, and so will vanish in
the limit. 

%%Table 3
\begin{table}[h!]
  \centering
  \caption{Transitions of type 4, neither parent in sample }%
\label{subssmrrecnopar}\ \\
%in the child}\ \\
\begin{tabular}{|c||c|c|c|c|}
\hline
 & \multicolumn{1}{l|}{Parent} \\
Offspring & $\emptyset$ \\ \hline 
%\TMI & \TMI \OMI \hspace{0.1em}, $O(N^{-2})$, $(B)$ \\
%& \TMI \TMI \hspace{0.1em}, $O(N^{-3})$, $(R)$\\ \hline
$\ind''$ & $\{ \ind'', \ind'\}$, $O(N^{-2})$, $(B)$ \\
 & $\{ \ind'', \ind''\}$, $O(N^{-3})$, $(R)$\\ \hline
$\ind'$ & $\ind''$ , $O(N^{-2})$, $(B)$ \\ \hline
\end{tabular}%
\end{table}%

\subsubsection{Transition type 5: Large reproduction event, offspring in sample,
no parent in sample, no recombination}

Table~\ref{subsslargerepr} lists all possible transitions when a large
reproduction event occurs, no parent is in the sample, and
recombination does not occur. The probabilities of the events listed
in Table~4 are of order $O(N^{-2})$, and so will appear as effective transitions in the limit.

%% Table 4
\begin{table}[h!]
\centering
\caption{Transitions of type 5.}
%when a large reproduction event occurs, and the sample consists of $k_1$ single-marked, and $k_2$ double-marked offspring.   }
\label{subsslargerepr}\ \\
\begin{tabular}{|c||c|c|c|c|}%
\hline
 & \multicolumn{1}{l|}{Parent} \\
Offspring & $\emptyset$ \\ \hline 
$k_1$ \OMI \hspace{0.1em}, $k_2$ \TMI & 
  $\{ \TMI, \TMI\}$\hspace{0.0em}, $O(N^{-2})$, $(B)$\\ 
& $\{ \TMI, \OMI\}$\hspace{0.0em}, $O(N^{-2})$, $(B)$\\
& $\{ \OMI, \OMI\}$\hspace{0.0em}, $O(N^{-2})$, $(B)$\\
& \TMI\hspace{0.0em}, $O(N^{-2})$, $(B)$\\
& \OMI\hspace{0.0em}, $O(N^{-2})$, $(B)$\\\hline
\end{tabular}%
\end{table}%

\subsubsection{Transition type 6: Large reproduction event, offspring in sample, recombination occurs and / or at least one parent in sample}

The probability that a large reproduction event takes place, at least
one child and at least one parent are in the sample is $O(N^{-3})$.
In addition, the probability that a large reproduction event takes
place, at least one child is in the sample and also a recombination
event happens in the sample is $O(N^{-3})$. Hence all such events are negligible.

\subsection{The convergence result}
\label{ssn:resultproof}

\subsubsection{The limit of the projection matrix $\sub{A}{N}$}

%We will now choose our decomposition
%\begin{equation*} 
%\sub{\Pi}{N} = \sub{A}{N} + \frac1{N^2} \sub{B^*}{N},
%\end{equation*}
%according to our above considerations. 

Some care is needed in order to make sure $\sub{A}{N}$ converges in the
right sense to the desired projection matrix.  The only relevant
transitions of order $O(1)$ or $O(N^{-1})$ are transitions of type 1
and 2.  The only one which is not an identity transition is the first
dispersion event of Table~\ref{tab:subssmrnoreconepar}.  For $\xi \in
\mathscr{A}_n$ with $b < \beta$ (i.e.\ at least one marked individual
is double-marked), that is
$$
\xi \mapsto \mathsf{disp}_i(\xi).
$$
This event will become part of $A_N$, and has probability
\begin{align}% 
\label{eq:ANsplit}
\sub{A}{N}(\xi, \mathsf{disp}_i(\xi)) = 
(1-\sub{\varepsilon}{N}) \frac{1}{N} \frac{{N-b-1 \choose 2}}{{N \choose 2}} 
(1-\rN)^2, \qquad 1 \le i \le \beta-b 
\end{align}
(this is the probability of the event $(\ast)$ listed in row~1, column~1 of 
Table~\ref{tab:subssmrnoreconepar}, note that the 
event $(\ast\ast)$ listed in row~2, column~1 there leads to an 
identity transition).
% (the probability listed as $a_{\ref{tab:subssmrnoreconepar};1,1}$ 
% in Table~\ref{tab:subssmrnoreconepar}, note that the transition with 
% probability $a_{\ref{tab:subssmrnoreconepar};2,1}$ there is an identity transition). 
Otherwise, we have 
\[
\sub{A}{N}(\xi, \xi) = 
1 - (1-\varepsilon_N) \frac{\beta-b}{N} \frac{{N-b-1 \choose 2}}{{N \choose 2}} 
(1-r_N)^2
\]
Of course, $A_{_N}$ has to leave elements of the subspace $\mathscr{A}_n^{\tt sm}$
invariant, hence we set,
for $\xi$ with $b=\beta$,
\[
\sub{A}{N}(\xi, \xi') := \osf_{\{\xi=\xi'\}}.
\]

\begin{prop}
\label{prop:an}
With the above settings, $\sub{A}{N}$ is a stochastic matrix for each $N$ and 
\begin{equation} 
\label{eq:ANlimit}%
\lim_{C\to\infty} 
\lim_{N\to\infty} \sup_{r \geq C N} || \sub{A}{N}^r - P || = 0
\end{equation}
for all $C>0$ large enough, where $P$ is the canonical projection from $\mathscr{A}_n$
to $\mathscr{A}_n^{\tt sm}$, i.e.\ 
\[
P(\xi,\xi') = \osf_{\{\xi'=\mathsf{cd}(\xi)\}}. 
\]
\end{prop}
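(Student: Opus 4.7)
The plan is to verify stochasticity by inspection and then analyse $A_N^r$ as the transition kernel of a Markov chain on the finite state space $\mathscr{A}_n$ whose only non-identity transitions are the dispersions $\xi \mapsto \mathsf{disp}_i(\xi)$ introduced in \eqref{eq:splitdef}. The crucial structural observation is that every $\mathsf{disp}_i$ strictly reduces the number $\beta-b$ of double-marked individuals by one, that the subspace $\mathscr{A}_n^{\tt sm}$ is absorbing by construction of $A_N$, and that distinct dispersions commute, as they merely unpair chromosomes without altering the ancestral sets $\loc^{(j)}_\ell$. Consequently, a chain $\{Y_k\}_{k\ge 0}$ with transition matrix $A_N$ started at $\xi$ is absorbed in the \emph{unique} state $\mathsf{cd}(\xi) = P(\xi)$ after exactly $\beta(\xi)-b(\xi) \le n/2$ dispersions, in whatever order these occur.

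For stochasticity: on $\mathscr{A}_n^{\tt sm}$ we have $A_N(\xi,\cdot) = \delta_\xi$ by definition; otherwise \eqref{eq:ANsplit} gives $A_N(\xi, \mathsf{disp}_i(\xi)) = q_N(\xi)$ for $1 \le i \le \beta-b$, where $q_N(\xi) := (1-\varepsilon_N) N^{-1}\binom{N-b-1}{2}\binom{N}{2}^{-1}(1-r_N)^2$, with the diagonal entry equal to $1 - (\beta-b)\,q_N(\xi)$. Since $b \le n$ is bounded and $\varepsilon_N, r_N \to 0$, for all $N$ sufficiently large one has $q_N(\xi) \in (0, 1/N)$ and hence $(\beta-b)q_N(\xi) \in [0,1]$, so $A_N$ is a well-defined stochastic matrix.

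Next I would estimate the hitting time $T := \inf\{k \ge 0 : Y_k \in \mathscr{A}_n^{\tt sm}\}$ uniformly in the starting state. Since $\beta-b \ge 1$ before absorption and $b \le n$ throughout the trajectory, there exist a constant $c_1 > 0$ (depending only on $n$) and an $N_0$ such that the total dispersion probability $(\beta-b)\,q_N(\xi)$ is at least $c_1/N$ for every $\xi \in \mathscr{A}_n \setminus \mathscr{A}_n^{\tt sm}$ and every $N \ge N_0$. Writing $T = T_1 + \cdots + T_K$ with $K \le n/2$ and each $T_j$ stochastically dominated by a geometric random variable with parameter $c_1/N$, a union bound gives
$$
P_\xi(T > CN) \;\le\; \frac{n}{2}\,\bigl(1 - c_1/N\bigr)^{2CN/n} \;\longrightarrow\; \frac{n}{2}\,e^{-2c_1 C/n} \quad \text{as } N \to \infty,
$$
which tends to $0$ as $C \to \infty$.

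Finally, on the event $\{T \le r\}$ the chain sits at $\mathsf{cd}(\xi) = P(\xi)$ for every time $k \ge T$, so
$$
\bigl\| A_N^r(\xi,\cdot) - P(\xi,\cdot) \bigr\|_{TV} \;=\; P_\xi(T > r) \;\le\; P_\xi(T > CN)
$$
whenever $r \ge CN$. Combining this with the tail estimate above and taking the maximum over the finite set $\mathscr{A}_n$ upgrades the pointwise bound to the norm estimate \eqref{eq:ANlimit}, irrespective of the specific matrix norm used, since all norms on $|\mathscr{A}_n| \times |\mathscr{A}_n|$ matrices are equivalent. The only step requiring any thought is the commutativity/monotonicity observation that dispersions resolve $\xi$ deterministically into $\mathsf{cd}(\xi)$ in at most $n/2$ steps; once that is in hand, the hitting-time analysis and the resulting norm bound are routine.
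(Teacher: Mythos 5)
Your proposal is correct and follows essentially the same route as the paper: both arguments dominate the time to complete dispersion by a sum of at most $n/2$ geometric random variables with success probability of order $1/N$, show that this exceeds $CN$ steps with probability vanishing as $C\to\infty$, and identify the absorbing state with $\mathsf{cd}(\xi)$. The only cosmetic differences are that you use an exponential tail bound with a union bound where the paper applies Markov's inequality to the sum of geometrics, and you phrase the final step as the total-variation identity $\|A_N^r(\xi,\cdot)-P(\xi,\cdot)\|_{TV}=P_\xi(T>r)$ instead of the paper's coupling argument.
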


\noindent \emph{Proof of Proposition \ref{prop:an} }%(\ref{eq:ANlimit}).}\ 
The Markov chain with transition matrix $\sub{A}{N}$ can only change state by dispersing the chromosomes paired in a  double-marked individual.  We see
from (\ref{eq:ANsplit}) that 
$$
\sub{A}{N}(\xi, \mathsf{disp}_i(\xi)) \ge \frac{K(n, r, c)}{N}
%\sim \frac 1N \mbox{ as } N\to\infty
$$ 
for some suitable constant $K(n, r, c)$, uniformly in $b$ and $i \le \beta-b$ and $N$ (for all $N$ large enough). Hence, starting from $\xi$ with $\beta-b$ double-marked individuals, the number of
$\sub{A}{N}$-steps required until complete dispersion has occurred is
dominated by the sum of $\beta-b$ independent geometric random variables $\gamma^{(N)}_1+\cdots+\gamma^{(N)}_{\beta-b}$, 
with success probability $K(n, r, c)/N$.
By Markov's inequality,
$$
\sup_{N \in \N}
\P\Big\{\gamma^{(N)}_1+\cdots+\gamma^{(N)}_{\beta-b} \ge C N \Big\} 
\le \frac{1}{CN}\E\big[\gamma^{(N)}_1+\cdots+\gamma^{(N)}_{\beta-b}\big] = \frac{N(\beta-b)}{C \cdot N \cdot K(n, r, c)}
\to 0 \quad \mbox{ as } \;\;
C\to\infty.
$$  
The proof can now be completed with a coupling argument, noting that two Markov chains run according to $A_N$ resp.\ $P$, started in $\xi \in \mathscr{A}_n$ get both stuck in $\mathsf{cd}(\xi)$, and this happens after at most $CN$ steps with high probability (for $C$ large).
%%{\tt check!}

%%where
%%the $\gamma^{(N)}_{k}$ are independent and geometric($k/N$) distributed.
%%%and the $\gamma^{(N)}_{k}$ are independent.  
%%Finally, 
%$$
%\sup_{N \in \N}
%\P\Big\{\gamma^{(N)}_1+\cdots+\gamma^{(N)}_{\beta-b} \ge C N \Big\} \to 0 \mbox{ as } 
%C\to\infty.
%$$  
%Let $\gamma := \max\{\gamma_1^{(N)}, \ldots,
%\gamma_{\beta - b}^{(N)}\}$. One obtains   $\boldsymbol{[?]}$
%$$
%\mathbb{P}\left\{\sum_{i = 1}^{\beta - b}\gamma_i^{(N)} \ge CN\right\} \le \mathbb{P}\big\{(\beta - b)\gamma \ge CN \big\}  =  \mathbb{P}\left\{\gamma \ge \tfrac{CN}{\beta - b} \right\} 
%$$
%and
%$$
%\mathbb{P}\big\{\gamma \le  \lfloor  \tfrac{CN}{\beta - b} \rfloor \big\} = \prod_{i=1}^{\beta - b}\mathbb{P}\left\{\gamma_i^{(N)} \le  \lfloor  \tfrac{CN}{\beta - b} \rfloor \right\}  \to 1
%$$
%as $C \to \infty$.   

\hfill $\qed$ \medskip

%% Put $\widetilde{B}_N = \Pi_N - \sub{A}{N}$, $B_N := N^2 \widetilde{B}_N$. 

\subsubsection{Proof of the convergence result}

With the definition of $A_N$ from the previous section, put
\begin{equation} 
\label{eq:BNstar}
B^*_N := N^2 (\Pi_N - \sub{A}{N}),
\end{equation}
and let $P$ be the canonical projection from $\mathscr{A}_n$
to $\mathscr{A}_n^{\tt sm}$ defined in Proposition \ref{prop:an}.
The following Lemma will identify $G$ as the limit containing all the `effective' transitions
of $B^*_N$ when projecting on the subspace $\mathscr{A}_n^{\tt sm}$.

\begin{lemma}
\label{lemma:hatBNconv}
We have 
\begin{equation}
\label{eq:BNhat}
\widehat{B}_N := P B_N^* P \to G \quad \text{as}\;\; N\to\infty
\end{equation}
with $G$ from (\ref{eq:DefG}).
\end{lemma}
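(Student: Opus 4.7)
Since the state space $\mathscr{A}_n^{\tt sm}$ is finite, matrix convergence is equivalent to entry-wise convergence, and the plan is to prove $\widehat{B}_N(\xi,\xi')\to G(\xi,\xi')$ for each fixed pair $\xi,\xi'\in\mathscr{A}_n^{\tt sm}$. The first step is to unpack the composition $PB_N^{*}P$. For $\xi\in\mathscr{A}_n^{\tt sm}$ the complete-dispersion map $\mathsf{cd}$ is the identity, hence $P(\xi,\cdot)=\delta_\xi$; similarly, by the definition of $A_N$ in Proposition~\ref{prop:an}, $A_N(\xi,\cdot)=\delta_\xi$. Substituting these into $PB_N^{*}P$ collapses the expression to
\[
\widehat{B}_N(\xi,\xi')=N^2\sum_{\zeta\in\mathscr{A}_n:\,\mathsf{cd}(\zeta)=\xi'}\Pi_N(\xi,\zeta),\qquad\xi'\neq\xi,
\]
while the diagonal case $\xi'=\xi$ is handled automatically via the row-sum identity $\sum_{\xi'}\widehat{B}_N(\xi,\xi')=0=\sum_{\xi'}G(\xi,\xi')$. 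Hence it suffices to prove off-diagonal pointwise convergence.

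Next I would invoke the exhaustive classification of one-step transitions already carried out in Tables~1--4 in Section~\ref{sn:transitions} to identify, for each fixed target $\xi'\neq\xi$, those $\zeta\in\mathscr{A}_n$ with $\mathsf{cd}(\zeta)=\xi'$ and $\Pi_N(\xi,\zeta)=\Omega(N^{-2})$. Transitions of order $o(N^{-2})$ sit in $R_N$ and contribute $o(1)$ after rescaling by $N^{2}$; transitions of order $O(1)$ or $O(N^{-1})$ either fix $\xi$ (dispersion is trivial on $\mathscr{A}_n^{\tt sm}$) or are absorbed by the $-A_N$ subtraction inside $B_N^{*}$. What remains are precisely the three effective transition types $\mathsf{pairmerge}$, $\mathsf{recomb}$ and $\mathsf{groupmerge}$, each of which corresponds to at most one $\zeta$ (possibly a transient configuration with a double-marked individual created when a coalescence lands on a sample-resident parent, and projected back by $\mathsf{cd}$).

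For each effective transition I would then compute the finite-$N$ probability directly. For $\xi'=\mathsf{pairmerge}_{j_1,j_2}(\xi)$, the small-event contribution comes from one of $j_1,j_2$ being the offspring slot (probability $2/N$), the other being one of the two chosen parents (probability $2/(N-1)$), the active offspring chromosome being inherited from that parent (probability $1/2$) and from the same chromosome slot as the parent's own active chromosome (probability $1/2$), with no recombination; multiplied by $N^{2}$ this yields $1$. The large-event contribution equals $c/N^{2}$ times the $\Xi$-coalescent binary-merger probability for $\Xi=\delta_{(\psi/4,\psi/4,\psi/4,\psi/4,0,\ldots)}$, which is $(\psi^{2}/4)C_{\beta;2;\beta-2}$ by \eqref{eq:Xirate}. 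For $\xi'=\mathsf{recomb}_{j,\ell}(\xi)$, the only order-$N^{-2}$ contribution is a small reproduction in which $j$ is the offspring slot (probability $1/N$), its chromosome is recombinant with crossover $\ell$ (probability $r^{(\ell)}/N$ by \eqref{eq:recombratescaling}) and neither parent is in the sample; the limit is $r^{(\ell)}$. For $\xi'=\mathsf{groupmerge}_{J_1,\ldots,J_4}(\xi)$ the contribution is $c/N^{2}$ times the probability that a $\psi$-event assigns the chromosomes in $J_i$ to parental chromosome $i$ while each of the $\beta-|J|$ remaining sampled chromosomes is either untouched (factor $1-\psi$) or is an offspring of the $\psi$-event sent to a parental chromosome not yet involved in a merger (factor $\psi/4$); summing over the number $l$ of such spectators yields exactly the combinatorial identity \eqref{eq:Xirate}, so the limit is $(c\psi^{2}/4)C_{\beta;|J|}$.

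The main technical obstacle is the combinatorial identity for $\mathsf{groupmerge}$: one must verify that the finite-$N$ probability of a specific assignment of participating sampled chromosomes to the four parental chromosomes, combined with the $\psi$-thinning selecting which sampled chromosomes actually participate in the $\psi$-event, reduces in the limit to the symmetric $\Xi$-coalescent rate $C_{\beta;|J|}$. This is a direct but notationally heavy computation, and the side condition that at least one $|J_i|\ge 3$ or at least two $|J_i|\ge 2$ ensures this contribution is not double-counted with the $\mathsf{pairmerge}$ row. Once this identity is established, the $\mathsf{recomb}$ and $\mathsf{pairmerge}$ limits are routine consequences of the scaling relations \eqref{eq:largereprodratescaling}--\eqref{eq:recombratescaling}, completing the proof.
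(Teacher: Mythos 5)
Your proposal is correct and follows essentially the same route as the paper's proof: unpack $PB_N^*P$ entrywise on $\mathscr{A}_n^{\tt sm}$, discard the $O(N^{-3})$ remainder using the transition classification of Tables~1--4, compute the finite-$N$ probabilities of the three effective transitions (including the spectator/assignment combinatorics that produces \eqref{eq:Xirate} and the factors $\tfrac12\cdot\tfrac12$ in the small-event pair merger), and settle the diagonal by the row-sum identity inherited from stochasticity of $\Pi_N$, $A_N$ and $P$. The one place you assert rather than compute — the large-event contribution to $\mathsf{pairmerge}$, which in the paper is the explicit sum over $2\le c\le 5$ offspring with $\binom{\beta-2}{c-2}(4)_{c-1}(1/4)^c$ assignments — is exactly the same identity you correctly describe for $\mathsf{groupmerge}$, so nothing essential is missing.
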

\begin{remark} 
We do believe that in fact the sequence of (formally larger) matrices 
$B_N^*$ on $\mathscr{A}_n$ converges as well, but 
the statement about $\widehat{B}_N$ is sufficient 
for our purposes below (see (\ref{lemma:twoscales:eq:BNlimit}) 
in Lemma~\ref{lemma:twoscales}) and 
simpler to prove since it allows to restrict to 
the `completely dispersed' configurations in $\mathscr{A}_n^{\tt sm}$.
\end{remark}

\begin{proof}[Proof of Lemma~\ref{lemma:hatBNconv}]
  We inspect the types of events listed in Tables~\ref{tab:subssmrnoreconepar}-\ref{subsslargerepr} that are marked
  with $(B)$. Events that are marked with $(R)$ have probability of
  order at most $O(N^{-3})$, hence their total contribution to any
  entry of $\widehat{B}_N$ is at most $O(N^{-1})$ (since we are
  following a finite sample, there are only finitely many possible
  one-step events altogether).  It suffices to consider
  $\widehat{B}_N(\xi, \mathsf{cd}(\eta))$ for $\xi =\left\{ C^{(1)}, \ldots,  C^{(\beta)}; \beta \right \}\in \mathscr{A}_n^{\texttt{sm}},
  \eta \in \mathscr{A}_n$ (because $P$ projects to
  $\mathscr{A}_n^{\texttt{sm}}$).  \smallskip

Regarding $\xi'=\mathsf{pairmerge}_{j_1,j_2}(\xi)$: This transition can
happen in a small reproduction event 
(these events are listed at $(\dagger)$ in Table~\ref{tab:subssmrnoreconepar} 
in row~2, column~2, note that events listed at $(\ddagger)$ 
in Table~\ref{tab:subssmrnoreconepar} 
lead to a trivial transition once $P$ is applied)
% (these are terms listed under
% $b_{\ref{tab:subssmrnoreconepar};2,2}$ in
% Table~\ref{tab:subssmrnoreconepar}, note that terms under
% $b_{\ref{tab:subssmrnoreconepar};3,2}$ lead to a trivial transition
% once $P$ is applied) 
or in a large reproduction event as in
%Subsection~\ref{subsslargerepr} 
Table~\ref{subsslargerepr} if the grouping is suitable. Up to
four parental chromosomes are involved in any reproduction event.
Hence, a large reproduction event can lead to a given pair merger in
the sample if up to $5$ individuals in the sample are children.
Thus
\begin{align} 
\widehat{B}_N(\xi,\xi') = & 
N^2 (1-\eps_N)(1-r_N) 2 \times \frac1{N} \frac{1 \cdot (N-b)}{{N-1 \choose 2}} 
\frac12 \frac12 \notag \\
& {} + N^2 \eps_N \sum_{c=2}^5 (1-r_N)^c  {\beta-2 \choose c-2} 
\frac{{N-\beta \choose \lfloor N\psi \rfloor-c}}%
{{N \choose \lfloor N\psi \rfloor}} (4)_{c-1}\big(\tfrac14\big)^c
 + O(N^{-1})
\end{align}
{\small (For the first term on the right note that either $j_1$ or 
$j_2$ can be the child, the two factors of $\tfrac12$ come from the 
requirement that the chromosome in the child we are following is the 
one from the parent in the sample and is also the one we are following 
in the parent. For the second term on the right note that once we 
decide on $c$ children in the sample (${\beta-2 \choose c-2}$ choices 
because $j_1$ and $j_2$ are already chosen), there are $(4)_{c-1}$ ways 
to assign them to the $4$ parental chromosomes. 
For comparison with (\ref{eq:Xirate}) and the first line in 
(\ref{eq:DefG}) observe 
\[
\frac{{N-\beta \choose \lfloor N\psi \rfloor-c}}%
{{N \choose \lfloor N\psi \rfloor}} 
= \frac{(N-\beta)! \lfloor N\psi \rfloor! (N-\lfloor N\psi \rfloor)!}{%
(\lfloor N\psi \rfloor-c)! (N-\beta-\lfloor N\psi \rfloor+c)! N!}
\sim \frac{(N\psi)^c(N(1-\psi))^{\beta-c}}{N^\beta}=\psi^c(1-\psi)^{\beta-c}.
\] 
}

Regarding $\xi'= \mathsf{recomb}_{j,\ell}(\xi)$ (assuming that 
$\alpha$ is such that $C^{(j)}$ can be non-trivially cut into 
two by a recombination event between loci $\ell-1$ and $\ell$): 
This transition can happen in a small reproduction event as 
listed at $(\ast\ast)$ in 
%by the term $b_{\ref{subssmrrecnopar};2,1}$ in 
%Subsection~\ref{subssmrrecnopar} 
Table~\ref{subssmrrecnopar} or in another event that has 
probability $O(N^{-3})$. 
Hence 
\begin{align} 
\widehat{B}_N(\xi,\xi') = 
N^2 (1-\eps_N) \times \frac1{N} \frac{{N-b \choose 2}}{{N-1 \choose 2}} 
\frac{r^{(\ell)}}{N} + O(N^{-1}) = r^{(\ell)} + O(N^{-1}).
\end{align}

Regarding $\xi'= \mathsf{groupmerge}_{J_1,J_2,J_3,J_4}(\xi)$: 
This can only occur through a large reproduction event as listed in 
Subsection~\ref{subsslargerepr}. 
Write $k_i:=|J_i|$, we assume $k_1 \ge \dots \ge k_a \ge 2$ for some 
$a \in [4]$, $k_{a+1}=\dots=k_4=0$ (if $a=1$, $k_1\ge 3$), 
$s:=\beta-(k_1+\dots+k_a)$ is the number of singletons (non-participating 
chromosomes) in the merger. Note that by the structure of the diploid 
model, with $a$ groups merging there can be up to $k_1+\dots+k_a+(4-a)^+$ 
children in the sample (put differently: up to $(4-a)^+$ 
`non merging children'). Then
\begin{equation}
  \begin{split}%
\widehat{B}_N(\xi,\xi') = \,&
N^2 \eps_N \sum_{c'=0}^{(4-a)^+} 
{\beta-k_1-\dots-k_a \choose c'} (1-r_N)^{k_1+\dots+k_a+c'} \notag \\
& \hspace{8em} \times \frac{{N-\beta \choose \lfloor N\psi \rfloor -(k_1+\dots+k_a+c')}}{{N \choose \lfloor N\psi \rfloor}} (4)_{a+c'} 
\big(\tfrac14\big)^{k_1+\cdots+k_a +c'} \\
 &  + O(N^{-1}).
 \end{split}
\end{equation}
\medskip

It remains to check that the diagonal terms behave correctly, i.e.\ that 
as $N\to\infty$, 
\begin{align}
\label{eq:hatBNdiaglim1}
\widehat{B}_N(\xi,\xi) \to G(\xi,\xi) = 
- \sum_{\xi'\neq\xi, \, \xi' \in \mathscr{A}_n^{\tt sm}}\hspace{-1em}G(\xi,\xi'). 
\end{align}
Because $\Pi_N$ and $\sub{A}{N}$ are both stochastic matrices (as is $P$), 
we have 
\begin{align}
\label{eq:hatBNdiaglim2}
\widehat{B}_N(\xi,\xi) = 
- \sum_{\xi'\neq\xi, \, \xi' \in \mathscr{A}_n^{\tt sm}}\hspace{-1em} 
\widehat{B}_N(\xi,\xi')
\end{align}
for each $N$. By inspection and the discussion above, all terms in 
$\Pi_N$ with decay rate $1/N$ 
are accounted for in $\sub{A}{N}$, and all non-diagonal terms in $\Pi_N-\sub{A}{N}$ with 
decay rate $1/N^2$ appear after multiplication with $N^2$ in $\widehat{B}_N$ 
with their correct limits, namely the corresponding terms in $G$, 
while terms with a faster decay rate disappear in the limit. 
Hence (\ref{eq:hatBNdiaglim2}) implies (\ref{eq:hatBNdiaglim1}). 
\end{proof}

% \subsection*{Decomposing the transition matrix}
% Let $P_{n,N}$ be the transition matrix of the Markov chain describing
% the ancestral states of an $n$-sample in a population of size $N$. 
% Our aim is to decompose 
% \begin{equation} 
% P_{n,N} = I + \frac1N A_{n,N} + \frac1{N^2} B_{n,N} + R_{n,N},
% \end{equation}
% where $A_{n,N} \to A_n$, $B_{n,N} \to B_n$ and $||R_{n,N}|| = O(N^{-3})$ 
% as $N\to\infty$.

\subsection{Markov chains with two time-scales --- 
a variation on a lemma of M\"ohle}

Conceptually, our convergence result rests on a separation of time-scales phenomenon. It can be
established with the help of a variant of a well-know result, see Lemma~1 from 
\cite{Moe98}. 

Let $E$ be a finite set. We equip matrices $A=(A(x,y))_{x,y \in E}$ on $E$ 
with the matrix norm $||A||:= \max_{x\in E} \sum_{y\in E} |A(x,y)|$. 
Note that then $||AB|| \leq ||A|| \, ||B||$ and $||A||=1$ if $A$ 
is a stochastic matrix.

\begin{lemma} 
\label{lemma:twoscales}
Assume that for $N \in \N$, $\sub{A}{N}$ is a stochastic matrix on $E$ such that 
\begin{equation} 
\label{lemma:twoscales:assumpAN}
\lim_{C\to\infty} \lim_{N\to\infty} \sup_{r \geq C N} || \sub{A}{N}^r - P || = 0
\end{equation}
for some matrix $P$. Then we have for any $0 < c, K, t < \infty$
\begin{equation} 
\label{lemma:twoscales:eq:bdperturb}
\lim_{N\to\infty} \sup_{||B|| \leq K} 
|| (\sub{A}{N} + cN^{-2}\,B)^{[t N^2]} - (P + cN^{-2}\,B)^{[t N^2]}|| = 0.
\end{equation}

Furthermore, if $(\sub{B}{N})_{N\in\N}$ is a sequence of matrices on $E$ such 
that 
\begin{equation}
\label{lemma:twoscales:eq:BNlimit}
G:=\lim_{N\to\infty} P \sub{B}{N} P \qquad \text{exists}, 
\end{equation}
then 
\begin{equation} 
\label{lemma:twoscales:limit}
\lim_{N\to\infty}  (\sub{A}{N} + cN^{-2}\, \sub{B}{N})^{[t N^2]} = Pe^{ctG} \qquad 
\text{for all}\;\; t > 0.
\end{equation}
\end{lemma}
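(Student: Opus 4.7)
My plan is to tackle the two statements separately, deducing the second from the first together with an algebraic computation that exploits $P^2=P$. The common tool throughout is the full non-commutative expansion of an $m$-th power of a sum of two matrices:
\[
(X+cN^{-2}Z)^m = \sum_{j=0}^{m}(cN^{-2})^j\sum_{\substack{g_0,\dots,g_j\ge 0\\g_0+\cdots+g_j=m-j}}X^{g_0}ZX^{g_1}Z\cdots ZX^{g_j}.
\]
The key a priori bound is that when $\|Z\|\le K$ and $\|X\|\le 1$, the norm of the $j$-th group is at most $\binom{m}{j}(cN^{-2}K)^j\le (ctK)^j/j!$ for $m=[tN^2]$, so the series converges absolutely uniformly in $N$ and can be truncated at some large $J=J(\varepsilon)$ with an $\varepsilon$-error independent of $N$.

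For \eqref{lemma:twoscales:eq:bdperturb} I apply this expansion to both $X_N:=A_N+cN^{-2}B$ and $Y_N:=P+cN^{-2}B$ and, for each fixed $j\le J$, partition the compositions $(g_0,\dots,g_j)$ of $m-j$ into ``good'' tuples where every $g_i\ge CN$ and ``bad'' tuples where some $g_i<CN$. A union bound shows that the bad tuples form at most an $O(C/N)$-fraction of all compositions for fixed $j$, so after multiplication by $(cN^{-2}K)^j$ and summation over $j$ their total contribution to $\|X_N^m-Y_N^m\|$ is $o(1)$. For good tuples I telescope the difference
\[
A_N^{g_0}BA_N^{g_1}\cdots BA_N^{g_j}-P^{g_0}BP^{g_1}\cdots BP^{g_j}
\]
into $(j+1)$ terms of the form $P^{g_0}B\cdots(A_N^{g_i}-P)\cdots BA_N^{g_j}$, each bounded in norm by $K^j\sup_{a\ge CN}\|A_N^a-P\|$, and invoke hypothesis \eqref{lemma:twoscales:assumpAN} (let $N\to\infty$, then $C\to\infty$) to conclude the good-tuple contributions for the two matrices agree in the limit uniformly in $\|B\|\le K$.

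For \eqref{lemma:twoscales:limit} I apply the previous part with $B=B_N$ (using that $\sup_N\|B_N\|<\infty$, which holds in the intended application, since $B_N$ is the bounded ``effective-transition'' part of the decomposition $\Pi_N=A_N+N^{-2}B_N+R_N$). This reduces the task to computing $\lim_N(P+cN^{-2}B_N)^m$. Set $Q_N:=PB_NP$; the crucial algebra is that $P^2=P$ forces $PQ_N=Q_NP=Q_N$, so a product $P^{g_0}B_NP^{g_1}\cdots B_NP^{g_j}$ with every $g_i\ge 1$ collapses to $(PB_NP)^j=Q_N^j$ via the identity $PB_NPB_NP=(PB_NP)^2$ iterated. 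The same good-vs.-bad tuple split as before yields
\[
(P+cN^{-2}B_N)^m = P + \sum_{j=1}^{m}\binom{m-j-1}{j}(cN^{-2})^jQ_N^j + o(1),
\]
and since $\binom{m-j-1}{j}(cN^{-2})^j\to (ct)^j/j!$ and $Q_N^j\to G^j$ for each fixed $j$, dominated convergence in $j$ (justified by the uniform bound $(ctK)^j/j!$) gives the limit $P+\sum_{j\ge 1}(ct)^jG^j/j!=P+(e^{ctG}-I)=Pe^{ctG}$, where the last equality uses $PG=G$, itself inherited from $PQ_N=Q_N$.

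The main obstacle will be the simultaneous bookkeeping required to run the ``good vs.\ bad tuples'' split uniformly in $j$, so that after summation over $j$ all error terms remain $o(1)$ as $N\to\infty$; once this combinatorial estimate is in place, the rest of the argument is a routine manipulation of absolutely convergent series together with the algebraic identities generated by $P^2=P$, $PQ_N=Q_NP=Q_N$.
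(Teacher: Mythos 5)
Your proposal is correct and takes essentially the same route as the paper's proof, which itself mimics Lemma~1 of M\"ohle (1998): the identical non-commutative expansion of the $m$-th power, the identical split of the compositions into those with all exponents at least $C N$ (handled by telescoping against the hypothesis \eqref{lemma:twoscales:assumpAN}) versus those with some small exponent (handled by the counting bound giving an $O(C/N)$ contribution). Your explicit computation of $\lim_{N}(P+cN^{-2}\sub{B}{N})^{[tN^2]}$ via $P^2=P$ and the collapse to powers of $P\sub{B}{N}P$ is precisely the step the paper delegates to M\"ohle, pp.~509--511, and your boundedness caveat on $\sub{B}{N}$ is the same implicit assumption the paper relies on in its application.
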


\begin{remark} 
  \label{rem:twoscales.general} 
  Instead of time scales $N$ and $N^2$ one can allow more 
  generally any $a_N, b_N\to\infty$ with $b_N/a_N\to\infty$, with 
  only notational modifications in the proof. 
\end{remark}
%\begin{remark} (Cf remarks in \cite{Moe98}, p.~494.) \\
%1.\ $P$ is necessarily a projection, i.e.\ satisfies $P^2=P$.\\  
%{\small ($||P^2-P||=||(P-\sub{A}{N}^r+\sub{A}{N}^r)^2-P|| \leq 
%|| (P-\sub{A}{N}^r)^2|| + 2 ||P-\sub{A}{N}^r|| \, ||\sub{A}{N}^r|| + ||\sub{A}{N}^{2r}-P||$ 
%which can be made arbitrarily small by choosing $r$ and $N$ 
%sufficiently large.)}
%\smallskip
%
%\noindent 2.\ $G$ certainly exists if $B_N \to B$ for some fixed 
%matrix $B$.
%\smallskip
%
%\noindent 3.\ Since $P=P^2$, we have $PG=GP=G$ and hence
%$Pe^{tG}=e^{tG}P = P-I+e^{tG}$ for any $t\geq 0$. \\  
%{\small (one obtains the last identity by writing 
%$P-I+e^{tG}=P+\sum_{n=1}^\infty \tfrac{t^n}{n!} G^n 
%=P+P\sum_{n=1}^\infty \tfrac{t^n}{n!} G^n = 
%P(I+\sum_{n=1}^\infty \tfrac{t^n}{n!} G^n)= Pe^{tG}$.)}
%\smallskip
%
%\noindent 4.\ The form presented here is not the most general 
%version. Instead of time scales $N$ and $N^2$ one could allow more 
%generally any $a_N, b_N\to\infty$ with $b_N/a_N\to\infty$, with 
%more or less the same proof. 
%\end{remark}

\begin{proof}[Proof of Lemma~\ref{lemma:twoscales}]
We begin with (\ref{lemma:twoscales:eq:bdperturb}). 
W.l.o.g.\ assume $K=1$, otherwise replace $B$ by $B/K$ and $c$ by $cK$. 
Fix $c, t > 0$ and a matrix $B$ with $||B||\leq 1$, 
abbreviate $m:=[tN^2]$. 
Let $\varepsilon>0$, choose $C_0 < \infty$ and $N_0\in\N$ such that 
\begin{equation} 
|| \sub{A}{N}^r - P || \leq \varepsilon \qquad 
\text{for $N\geq N_0$, $r \geq C_0N$}
\end{equation}
(as guaranteed by (\ref{lemma:twoscales:assumpAN})). Note that 
\begin{align*} 
|| & (\sub{A}{N} + cN^{-2}\,B)^m - (P + cN^{-2}\,B)^m|| \\
& \, \le ||\sub{A}{N}^m - P || + 
\sum_{k=1}^m \Big(\frac{c}{N^2}\Big)^k 
\sum_{m_1,\dots,m_{k+1}\in \N_0 \atop m_1+\cdots+m_{k+1}=m-k} 
\Big|\Big| \sub{A}{N}^{m_1} \prod_{j=2}^{k+1}\big(B\sub{A}{N}^{m_j}\big) - 
P^{m_1} \prod_{j=2}^{k+1}\big(BP^{m_j}\big) \Big|\Big|.
\end{align*} 
Mimicking the proof in \cite{Moe98}, we split the second summand into 
(the ellipses refer to the term inside the large norm brackets 
on the right of the last line of the previous formula)
\[
S_1 := \sum_{k=1}^m \Big(\frac{c}{N^2}\Big)^k 
\sum_{m_1,\dots,m_{k+1}\ge C_0 N \atop m_1+\cdots+m_{k+1}=m-k} ... 
\quad \text{and} \quad 
S_2 := \sum_{k=1}^m \Big(\frac{c}{N^2}\Big)^k \hspace{-2em}
\sum_{\begin{array}{c}\scriptstyle m_1,\dots,m_{k+1}\in\N_0 \\[-0.8ex] 
\scriptstyle m_1+\cdots+m_{k+1}=m-k \\[-0.8ex] 
\scriptstyle \exists\, j \, : \: m_j < C_0 N \end{array}} ... 
\]
As in \cite{Moe98}, p.~509 we have $S_1 \leq 2e^t(t+1) \varepsilon$ 
for all $N$ large enough, our estimate for $S_2$ is a small variation 
of the corresponding estimate in \cite{Moe98}: Note that each of the 
matrix norms appearing in the big sum in $S_2$ is at most $2$, hence 
\begin{align*} 
S_2 \leq & \, 2 \sum_{k=1}^m \Big(\frac{c}{N^2}\Big)^k 
\# \Big\{ (m_1,\dots,m_{k+1}) \in \N_0^k \, : 
\begin{array}{l} m_1+\cdots+m_{k+1}=m-k, \\
\exists\, j \, : \: m_j < C_0 N \end{array} \Big\} \\
\leq & \, 2 \sum_{k=1}^m \Big(\frac{c}{N^2}\Big)^k 
(k+1) \sum_{m_{1}=0}^{C_0 N \wedge (m-k)} {m-m_1-1 \choose k-1} \\
\leq & \, 2 \sum_{k=1}^m \Big(\frac{c}{N^2}\Big)^k (k+1) C_0N {m-1 \choose k-1}
= 2C_0 N \frac{c}{N^2} 
\sum_{k=0}^{m-1} \Big(\frac{c}{N^2}\Big)^k (k+2) {m-1 \choose k} \\
\leq & \, C'\frac1N.
\end{align*} 
{\small (We use in the last estimate that for $|x|<1$, $n\in\N$, 
$\sum_{n=0}^\infty {n \choose k} x^k =(1+x)^n$ and 
$\sum_{n=0}^\infty k {n \choose k} x^k =nx(1+x)^{n-1}$.)}

The derivation of (\ref{lemma:twoscales:limit}) from 
(\ref{lemma:twoscales:eq:bdperturb}) is literally the same as in 
\cite{Moe98}, p.~509-511 (read $c_{_N}=c/N^2$ there).
\end{proof}

%\section{The convergence result}
%
%For a given sample size $n\in \N$, 
%let $(Y^N_m)_{m=0,1,\dots}$ be the ancestral process in the 
%$N$-th model (with transition matrix $\Pi_N$ and initial condition 
%(\ref{eq:initcond})).
%Then $(Y^N_{[N^2 t]})_{t \geq 0}$ converges to $(Y_t)_{t \geq 0}$ 
%in finite-dimensional distributions on $(0,\infty)$, where 
%$(Y_t)_{t \geq 0}$ is the Markov chain on $\mathscr{A}_n$ with 
%generator matrix $G$ from (\ref{eq:DefG}) starting from 
%(\ref{eq:initcond.split}).
%This follows from Lemmas~\ref{lemma:hatBNconv} and \ref{lemma:twoscales}.

\subsection{The convergence result with general random $\sub{\Psi}{N}$}
\label{subsect:genPsiproofs}

In this section we briefly indicate how the proof of
Theorem~\ref{thm:conv} can be modified to yield
Theorem~\ref{thm:conv.general}.  In each reproduction event, a random
number $\sub{\Psi}{N}$ of individuals die and are replaced by the same
number of offspring, and recall Assumptions~\eqref{cond1:cnvanishes},
\eqref{cond2:sagitovs} and \eqref{eq:recombratescaling.general}.  By
``short'' time-scale we refer to the scaling $a_N$ given by
\[
a_N = \frac{N}{\E\left[\PsiN \right]}
\]
and by ``long'' time-scale the scaling $b_N$ given by 
\[
b_N = \frac{1}{\sub{c}{N}} = \frac{N (N-1) }{\E\left[\PsiN (\PsiN + 3) \right]} .
\]

Assumption~\eqref{cond1:cnvanishes} yields $b_N\to\infty$ as
$N\to\infty$, and $b_N/a_N \to \infty$ by 
Assumption~\eqref{eq:tscaleratiogen}. 
To check \eqref{eq:tscale1gen}, 
i.e.\ that indeed $a_N\to\infty$,  
%and  $b_N/a_N\to\infty$. 
%Define $X_N := \PsiN/N$, 
observe that $\PsiN/N$ is a positive random variable, bounded
by $1$.  Condition \eqref{cond1:cnvanishes} is equivalent to $\E\left[
  \big(\PsiN/N\big)^2 \right] \to 0$, which implies $\PsiN/N \to 0$ in probability and
$\E\left[ \PsiN/N \right] \to 0$, hence \eqref{eq:tscale1gen}.  

%%%%
%% \eqref{eq:tscaleratiogen} now assumption
%%%%
%The third term in \eqref{eq:tscaleratiogen} is
%asymptotically equivalent to
%\[
%\frac{\E\left[ X_N^2 \right]}{\E\left[ X_N \right]} 
%= \E\left[ \widehat{X}_N \right]
%\]
%where $\widehat{X}_N$ has the `size-biased' law of $X_N$: 
%$\P(\widehat{X}_N \in dx) = x \, \P(X_N \in dx)/\E\left[ X_N \right]$. 
%Since $X_N$ is bounded with $X_N\to 0$ in probability, the same holds 
%for $\widehat{X}_N$, and we have $\E\left[ \widehat{X}_N \right]\to 0$ 
%as $N\to\infty$, i.e.\ the claim \eqref{eq:tscaleratiogen} holds true.
\smallskip

For use below, we recall implications of 
\eqref{cond2:sagitovs} provided that \eqref{cond1:cnvanishes} holds 
(cf \cite{S99}): 
\begin{align} 
\text{For all} \: j \ge 3\, : \;\; 
\frac{1}{\sub{c}{N}} \E\left[ \left(\frac{\PsiN}{N}\right)^j \right] 
\mathop{\longrightarrow}_{N\to\infty} \int_{[0,1]} x^{j-2} \, F(dx). 
\end{align}
Indeed, integration by parts yields 
\begin{align} 
\label{cond2:sagitovs-v2}
\frac{1}{\sub{c}{N}} \E\left[ \left(\frac{\PsiN}{N}\right)^j \right] 
&= \frac{1}{\sub{c}{N}} \int_{(0,1]} jx^{j-1} \P\left(\frac{\PsiN}{N}>x\right)dx
\notag \\
& \mathop{\longrightarrow}_{N\to\infty} 
 \int_{(0,1]} jx^{j-1} \int_{(x,1]} y^{-2} F(dy)\, dx 
= \int_{(0,1]} \left( \int_{(0,1]} 1_{\{x \le y \}}jx^{j - 1}dx  \right)y^{-2}F(dy)
\notag \\
& \hspace{2.5em} = \int_{(0,1]} y^{j-2}F(dy).
\end{align}
Furthermore for the case $j=2$ one obtains
\begin{align} 
\label{cond2:sagitovs-v2b}
\limsup_{N\to\infty} \frac{1}{\sub{c}{N}} 
\E\left[ \left(\frac{\PsiN}{N}\right)^2 \right] 
= \limsup_{N\to\infty} \frac{\E\left[ \PsiN^2 \right]}{%
\E\left[ \PsiN(\PsiN+3) \right]} \le 1 < \infty.
\end{align}

Let $\sub{\widetilde\Psi}{N}$ have the following %``doubly-biased'' 
reweighted distribution (relative to $\PsiN$): 
\begin{align}
  \label{eq:PsiNweighted}%
\mathbb{P}\big(\sub{\widetilde\Psi}{N} = k \big) 
= \frac{k(k+3)}{\E[\PsiN(\PsiN +3)]} \mathbb{P}\big(\PsiN = k \big), 
\quad k=1,\dots,N-2, 
\end{align}
then 
\begin{align} 
\label{cond2:sagitovs-v3}
\frac{\sub{\widetilde\Psi}{N}}{N} \, \mathop{\longrightarrow}^d \, 
F \quad \text{as}\; N\to\infty.
\end{align}
Indeed, for any $\ell \in \N$ 
\begin{align} 
\label{cond2:sagitovs-v3b}
\E\left[ \left(\frac{\sub{\widetilde\Psi}{N}}{N}\right)^\ell \right] 
& =  \frac{N (N-1)}{\E[\PsiN(\PsiN +3)]} 
\E\left[ \left(\frac{\PsiN}{N}\right)^{\ell+1} \frac{\PsiN+3}{N-1} 
\right] \notag \\
& = \frac1{\sub{c}{N}} \E\left[ \left(\frac{\PsiN}{N}\right)^{\ell+2}
\right] \frac{N}{N-1} 
+ \frac{3}{(N-1)} \frac1{\sub{c}{N}} \E\left[ \left(\frac{\PsiN}{N}\right)^{\ell+1} \right] 
\mathop{\longrightarrow}_{N\to\infty} \int_{(0,1]} y^\ell \, F(dy)
\end{align}
by \eqref{cond2:sagitovs-v2} and \eqref{cond2:sagitovs-v2b}, so 
\eqref{cond2:sagitovs-v3} follows because the moments characterise a 
probability law on $[0,1]$. 
One can check (along the lines of \cite{S99}) that under Assumption\ 
\eqref{cond1:cnvanishes}, both \eqref{cond2:sagitovs-v2} and 
\eqref{cond2:sagitovs-v3} are in fact equivalent to \eqref{cond2:sagitovs}.
\smallskip

The proof of Theorem~\ref{thm:conv.general} is now a relatively 
straightforward adaptation of the proof of Theorem~\ref{thm:conv} 
discussed in Sections~\ref{sn:transitions} and \ref{ssn:resultproof} 
above. Scaling by $N$ is throughout replaced by scaling with $a_N = N/\EE{\PsiN}$ 
and scaling by $N^2$ becomes scaling with $b_N = N(N-1)/\EE{\PsiN(\PsiN + 3)}$:
\begin{enumerate}
\item 
When currently following $b \ge 1$ individuals, the probability that none of 
them is an offspring in the previous reproduction event 
(and hence the sample configuration remains unchanged) is 
\[
\E\left[\frac{ {N-b \choose \sub{\Psi}{N}} }{ {N \choose \sub{\Psi}{N}} }\right]
= \E\left[ \prod_{j=0}^{\sub{\Psi}{N}-1} \frac{N-b-j}{N-j} \right] 
= \E\left[ \prod_{j=0}^{\sub{\Psi}{N}-1} \big( 1 -\frac{b}{N-j} \big) \right]
= 1 - O\left( b \frac{\E\left[\PsiN \right]}{N} \right)
= 1- O\big(a_N^{-1}\big).
\]
This is analogous to transitions discussed in Section~\ref{subssnochild} 
and happens ``all the time'' (leading to the projecting transitions part in the limit). 
\item 
When currently following $b \ge 1$ individuals, say the $i$-th of which 
is double-marked, the probability that the $i$-th individual is the only 
offspring in the sample, and the sample also does not contain a parent, is (we write $(x)_k = x (x-1) \cdots (x-k+1)$ for the $k$-th falling 
factorial) 
\[
\E\left[ \frac{\PsiN (N-\PsiN-2)_{b-1}}{(N)_b} \right] 
\sim \E\left[ \frac{\PsiN}{N} \Big( 1 - \frac{\PsiN}{N} \Big)^{b-1} \right] 
= a_N^{-1} \big( 1 + o(1) \big).
\]
The projection matrix $\sub{A}{N}$ now becomes 
\begin{equation} 
\sub{A}{N}(\xi, \mathsf{disp}_i(\xi)) = 
\E\left[ \frac{\PsiN (N-\PsiN-2)_{b-1}}{(N)_b} \right] 
(1-\rN)^2, \qquad 1 \le i \le \beta-b 
\end{equation}
and $\sub{A}{N}(\xi, \xi) = 1 - (\beta-b) 
\E\left[ \frac{\PsiN (N-\PsiN-2)_{b-1}}{(N)_b} \right] 
(1-\rN)^2$; the analogue of Proposition~\ref{prop:an} is then 
\begin{equation} 
%\label{eq:ANlimit.gen}
\lim_{C\to\infty} 
\lim_{N\to\infty} \sup_{r \geq C a_N} || \sub{A}{N}^r - P || = 0.
\end{equation}
\item 
From now on we can work on the ``projected'' space 
$\mathscr{A}_n^{\tt sm}$. 
The distinction between small and large reproduction events 
is irrelevant in the general case.   Hence, it is more 
suitable to distinguish whether a parent and an offspring are in the 
sample or whether several offspring (but no parent) is in the sample. 
In analogy with \eqref{eq:BNstar} and \eqref{eq:BNhat}, 
we split $\sub{\Pi}{N}$ into ``fast'' and ``slow'' parts and define 
\begin{equation} 
  B^*_N := b_N (\Pi_N - \sub{A}{N}), \quad \widehat{B}_N := P B_N^* P.
\end{equation}
It then remains to check that 
\begin{equation} 
\label{eq:BNhatconv.general}
\widehat{B}_N \to G \quad \text{with $G$ defined in \eqref{eq:DefG.general}}, 
\end{equation}
whence Theorem~\ref{thm:conv.general} follows from 
Lemma~\ref{lemma:twoscales} together with Remark~\ref{rem:twoscales.general}. 
\medskip

\hspace{-3em} We now verify \eqref{eq:BNhatconv.general}: 

\item Recombination events: These give the correct limit, see the discussion 
below \eqref{eq:recombratescaling.general}. 

\item ``Large:'' The probability that exactly $k \ge 2$ individuals
  among $b$ (excluding the parents) is, using
  \eqref{eq:PsiNweighted},
\begin{equation} 
\label{eq:probkoffspr.gen}
\E\left[ \frac{(\PsiN)_k(N-\PsiN-2)_{b-k}}{(N)_b} \right] 
= \E\big[\PsiN(\PsiN +3)\big] 
\E\left[ \frac{(\sub{\widetilde{\Psi}}{N})_k(N-2-\sub{\widetilde{\Psi}}{N})_{b-k}}{\sub{\widetilde{\Psi}}{N} (\sub{\widetilde{\Psi}}{N}+3) (N)_b} \right], 
%%\sim \E\left[ \left(\frac{\PsiN}{N}\right)^k 
%%  \left(1-\frac{\PsiN}{N}\right)^{b-k} \right].
\end{equation}
thus $1/\sub{c}{N}$ times this probability is 
\begin{align} 
& N(N-1) \E\left[ \frac{(\sub{\widetilde{\Psi}}{N})_k (N-2-\sub{\widetilde{\Psi}}{N})_{b-k}}{\sub{\widetilde{\Psi}}{N} (\sub{\widetilde{\Psi}}{N}+3) (N)_b} \right] \notag  = \frac{1}{(N-2)_{b-2}} 
\E\left[ (\sub{\widetilde{\Psi}}{N}-2)_{k-2} 
(N-2-\sub{\widetilde{\Psi}}{N})_{b-k} \right] 
+ O\big(\frac1N\big) \\
& \mathop{\longrightarrow}_{N\to\infty} \int_{(0,1]} y^{k-2} (1-y)^{b-k} \, F(dy)
\end{align}
by \eqref{cond2:sagitovs-v3}.
Furthermore, the probability that at least $2$ 
offspring \emph{and} at least one parent are in the sample is at most
\begin{align} 
b {b-1 \choose 2} \E\left[ \frac{2(\PsiN)_2}{(N)_3}  \right] = 
O\big(\sub{c}{N}/N\big)
\end{align}
hence such events become negligible in the limit.

%\begin{equation} 
%\label{eq:probkoffspr.gen}
%\E\left[ \frac{(\PsiN)_k(N-\PsiN-2)_{b-k}}{(N)_b} \right] 
%\sim \E\left[ \left(\frac{\PsiN}{N}\right)^k 
%  \left(1-\frac{\PsiN}{N}\right)^{b-k} \right].
%\end{equation}
%Note that by integration by parts for any $j \ge 2$ 
%and \eqref{cond2:sagitovs}
%\begin{align} 
%\frac{1}{\sub{c}{N}} \E\left[ \left(\frac{\PsiN}{N}\right)^j \right] 
%&= \frac{1}{\sub{c}{N}} \int_{(0,1]} jx^{j-1} \P\left(\frac{\PsiN}{N}>x\right)dx
%\notag \\
%& \mathop{\longrightarrow}_{N\to\infty} 
% \int_{(0,1]} jx^{j-1} \int_{(x,1]} y^{-2} F(dy)\, dx 
%= \int_{(0,1]} \left( \int_{(0,1]} 1_{\{x \le y \}}jx^{j - 1}dx  \right)y^{-2}F(dy)
%\notag \\
%& \hspace{2.5em} = \int_{(0,1]} y^{j-2}F(dy).
%\end{align}
%Thus binomial expansion in the righthand side of \eqref{eq:probkoffspr.gen} yields 
%\[
%\frac{1}{\sub{c}{N}} 
%\E\left[ \frac{(\PsiN)_k(N-\PsiN-2)_{b-k}}{(N)_b} \right] 
%\mathop{\longrightarrow}_{N\to\infty} 
%\int_{(0,1]} y^{k-2} (1-y)^{b-k} \,F(dy).
%\]
%Note that the rest of the argument in order to replace \eqref{eq:Xirate} 
%by \eqref{eq:Xirate.general} is purely combinatorial (it is only concerned with 
%possible groupings of the $k$ single-marked offspring into up to four groups 
%depending on which of the four parental chromosomes they descend from).

\item \label{item:small} ``Small''  (=a merger of a single pair, which can result either from 
one offspring and one parent in the sample, or from two offspring but no 
parent in the sample): 
Here, the weight of $F(\{0\})$ plays a role. 

%\begin{equation} 
%\label{eq:probkoffspr.gen.small}
%\E\left[ \frac{{2 \choose 1} {\PsiN \choose 1} {N-\PsiN -2 \choose b-2}}{{N \choose b}} \right],
%\end{equation}
%which goes to zero on the `long' timescale (measured in units of $1/c_N$) due to the additional factor $1/N$ co%llected from the requirement to choose a parent, hence yields no contribution to $F$ on $(0,1]$.
%For such events to be `visible' in $F$, they thus have to occur on the `short timescale' $a_N$. There, classic %convergence results require them to be of suitably bounded variance, which translates into the atom $F(\{0\})=1%-F((0,1])$, see e.g.\ \cite{HM12} {\tt check!}

The probability that exactly two given single-marked individuals in a sample of size 
$b$ are offspring (and none are parents) is 
\begin{align} 
\E\left[ \frac{(\PsiN)_2  (N-2-\PsiN)_{b-2}}{(N)_b} \right],
\end{align}
and the probability that among a pair of two given single-marked individuals, one is 
a parent, the other an offspring and no other element of the sample is 
affected by the reproduction event is 
\begin{align} 
\E\left[ \frac{2 (2)_1 (\PsiN)_1  (N-\PsiN -2)_{b-2}}{(N)_b} \right],
\end{align}
thus, $1/\sub{c}{N}$ times the probability that exactly one given pair 
(of single-marked individuals) is involved in a reproduction event is 
\begin{align} 
\frac1{\sub{c}{N}} & 
\E\left[ \frac{\PsiN (\PsiN+3)  (N-\PsiN -2)_{b-2}}{(N)_b} \right] 
= \E\left[ \frac{(N-2 - \sub{\widetilde{\Psi}}{N})_{b-2}}{(N-2)_{b-2}} \right] 
\notag \\
& \hspace{4em} 
\mathop{\longrightarrow}_{N\to\infty} \int_{[0,1]} (1-y)^{b-2} \, F(dy) 
= F(\{0\}) + \int_{(0,1]} (1-y)^{b-2} \, F(dy)
\end{align}
by \eqref{cond2:sagitovs-v3}. 
\item (Combinatorial connections between participation in reproduction
  events and merging of ancestral chromosomes) The rest of the
  argument in order to replace \eqref{eq:Xirate} by
  \eqref{eq:Xirate.general} is purely combinatorial; it is only
  concerned with possible groupings of the $k$ single-marked offspring
  into up to four groups depending on which of the four parental
  chromosomes they descend from.

In both cases considered in (\ref{item:small}) the probability
that the chromosomes actually coalesce is $\frac14$ because they must
descend from the same chromosome in the same parent, or from the
particular chromosome in the particular parent we are following,
respectively.

\end{enumerate}

\subsection{Correlation in coalescence times}%
\label{sn:correlation}

In this section we outline the calculations to obtain the correlation
in coalescence times $T_1$ and $T_2$ of types at two loci (1 and
2). As our sample consists of two unlabelled chromosomes typed at two
loci, we will sometimes find it convenient to denote an unlabelled
chromosome carrying ancestral segments at both loci with the symbol
$\chr$, while chromosomes carrying ancestral segments at only one
locus with the symbols $\achr$ and $\bchr$.  Loci at which types have
coalesced will be denoted by $\cachr$, or $\cabchr$.  The states $\mathfrak{S}$ of
the unlabelled process for a sample of size two at two loci will
also be numbered as follows:
 \begin{center}%
    \begin{tabular}{cc}
      \hline
      $\mathfrak{S}$ & in symbols \\
      \hline
      2 & $(\chr)(\chr)$ \\\\
      1 & $(\chr)(\achr)(\bchr)$\\\\
      0 & $(\achr)(\achr)(\bchr)(\bchr)$\\\\
      $-1$ &  $(\bchr)(\bchr)$ \\\\
      $-2$ & $(\achr)(\achr)$ \\
      \hline
      \end{tabular}%
    \end{center}%
    in which states $\{0,1,2\}$ denote the three possible sample
    states, before coalescence at either loci has occurred.  States
    $\{-1, -2\}$ will be needed when deriving the variance of pairwise
    differences.

    Let $h(i) := \mathbb{P}\left( \{T_1 = T_2\} | i\right)$ denote the
    probability of the event $T_1 = T_2$, when $B$ is in state $i$.
    Excluding large offspring numbers, one readily obtains ($h(i) = 0$
    for $i \neq \{0,1,2\}$)
    \begin{equation}%
      \label{eq:hiKlimr}%
      \begin{split}%
        h(2) & = \tfrac{r + 9}{2r^2 + 13r + 9} \\
        h(1) & = \tfrac{3}{2r^2 + 13r + 9} \\
        h(0) & = \tfrac{2}{2r^2 + 13r + 9} \\
        \end{split}%
      \end{equation}%
      For each $i \in \{0,1,2\}$, the expression for $h(i)$ is the
      same as the one for the correlation between $T_1$ and $T_2$ when
      in state $i$, excluding large offspring numbers.  The expected
      value $w(i) = \mathbb{E}_i[T_s]$ of the time $T_s$ until a
      coalescence event at either locus starting from state $i \in
      \{0,1,2 \}$ is, again excluding large offspring numbers,
\begin{displaymath}%
  \begin{split}%
 w(2) & =  \tfrac{r + 9}{2(2\, r^2 + 13\, r + 9)} + \tfrac{1}{2} = \tfrac{1}{2}\left( 1 + h(2) \right) \\
 w(1) & = \tfrac{3}{2(2\, r^2 + 13\, r + 9)} + \tfrac{1}{2} = \tfrac{1}{2}\left(1 + h(1) \right), \\
 w(0) & = \tfrac{1}{2r^2 + 13r + 9} + \tfrac{1}{2} = \tfrac{1}{2}\left(1 +  h(0) \right) , \\
  \end{split}%
  \end{displaymath}%
  obtained by solving the recursions
  \begin{displaymath}%
    \begin{split}%
      w(2) & = (1 + 2rw(1))/(1 + 2r) \\
      w(2) & = (1 + 2rw(1))/(1 + 2r) \\
     w(1) & = (1 + w(2) +  rw(0))/(r + 3) \\
     w(0) & = (1 + 4w(1))/6\\
    \end{split}%
    \end{displaymath}%
    Let $v(i) := \mathbb{E}_i[T_s^2]$ denote the expected value of
    $T_s^2$ when starting from state $i \in \{0,1,2\}$.  One can
    follow \cite{D02} to obtain the recursions
    \begin{equation}%
      \label{eq:recvi}%
    v(i) =  \frac{2}{q_i^2} + \frac{2}{q_i} \sum_{k \ne i}\frac{\sub{q}{ik}}{q_i}w(k) +  \sum_{k \ne i}\frac{\sub{q}{ik}}{q_i} v(k)
    \end{equation}%
    in which $q_i = \sum_{k \ne i}\sub{q}{ik}$ is the sum of the
    transition rates out of state $i$.  To obtain \eqref{eq:recvi} let
    $J$ denote the exponential waiting time until the first
    transition, and $X_J$ the state of the process immediately after
    the first transition.  The random variables $J$ and $X_J$ are
    independent.  One can write
    \begin{displaymath}%
      \begin{split}%
      \EE{T_s^2 | J, X_J} & = \EE{ (T_s - J + J)(T_s - J + J) | J, X_J }  \\
      & = \EE{ (T_s - J)^2  + 2J(T_s - J) + J^2 | J, X_J } \\
      & = \EE{(T_s - J)^2 | J, X_J} + 2J\EE{T_s - J | X_J} + \EE{J^2}
      \end{split}%
      \end{displaymath}%
      Taking expectations gives \eqref{eq:recvi}.

    The variance $\mathbb{V}_i[T_s]$ of $T_s$ when starting in state
    $i$ is given by
    \begin{displaymath}%
      \begin{split}%
        \mathbb{V}_2[T_s] & =  \frac{r^3 + \tfrac{31\, r^2}{2} + \tfrac{153\, r}{2} + 81}{\left(2\, r + 1\right)\, \left(r + 6\right)\, \left(2\, r^2 + 13\, r + 9\right)} + \frac{1}{2} - \frac{1}{4}(1 + h(2))^2 \\
        \mathbb{V}_1[T_s] & = \frac{r + 9}{\left(r + 6\right)\, \left(2\, r^2 + 13\, r + 9\right)} + \frac{1}{2}   - \frac{1}{4}(1 + h(1))^2 \\
        \mathbb{V}_0[T_s] & =  \frac{r + 8}{\left(r + 6\right)\, \left(2\, r^2 + 13\, r + 9\right)} + \frac{1}{2}    - \frac{1}{4}(1 + h(0))^2 \\
        \end{split}%
      \end{displaymath}%
Hence, $\lim_{r \to \infty}\mathbb{V}_i[T_s] = 1/4$ for $i \in \{2,1,0 \}$, and 
  \begin{displaymath}%
    \begin{split}%
      \lim_{r \to 0}\mathbb{V}_2[T_s] & = 1 \\
      \lim_{r \to 0}\mathbb{V}_1[T_s] & = 2/9 \\
      \lim_{r \to 0}\mathbb{V}_0[T_s] & = 89/324  \\
  \end{split}%
    \end{displaymath}%
    Denote by $T_l$ the time until coalescence has occurred at both
    loci.  The marginal coalescence times are exponential with rate 1,
    when excluding large offspring numbers.  Solving the recursions
    \begin{displaymath}%
      \begin{split}%
        \mathbb{E}_2[T_l] & = (1 + 2r\mathbb{E}_1[T_l])/(1 + 2r) \\
        \mathbb{E}_1[T_l] & = (1 + \mathbb{E}_2[T_l] +  r\mathbb{E}_0[T_l] + 2)/(r + 3) \\
        \mathbb{E}_0[T_l] & = (1 + 4\mathbb{E}_1[T_l] + 2)/6 \\
        \end{split}%
      \end{displaymath}%
      yields 
      \begin{displaymath}%
        \begin{split}%
         \mathbb{E}[T_l^{(2)}] & = \tfrac{3}{2} - \tfrac{r + 9}{2\left(2\, r^2 + 13\, r + 9\right)} = \tfrac{1}{2} \left(3 - h(2) \right)  \\
          \mathbb{E}[T_l^{(1)}] & = \tfrac{3}{2} - \tfrac{3}{2\, \left(2\, r^2 + 13\, r + 9\right)} = \tfrac{1}{2}\left(3 - h(1) \right) \\
          \mathbb{E}[T_l^{(0)}] & =  \tfrac{3}{2} - \tfrac{1}{2\, r^2 + 13\, r + 9} = \tfrac{1}{2}\left(3 - h(0) \right) \\
          \end{split}%
        \end{displaymath}%
           Applying the recursions (\ref{eq:recvi}) yields the variances $\mathbb{V}_i[T_l]$;
           \begin{displaymath}%
             \begin{split}%
               \mathbb{V}_2[T_l] & = \frac{2\, r^3 + \frac{111\, r^2}{4} + \frac{171\, r}{2} - \frac{81}{4}}{{\left(2\, r^2 + 13\, r + 9\right)}^2} + \frac{5}{4} \\
                 \mathbb{V}_1[T_l] & = \frac{4\, r^2 + 17\, r - \frac{45}{4}}{{\left(2\, r^2 + 13\, r + 9\right)}^2} + \frac{5}{4} \\
                  \mathbb{V}_0[T_l] & = \frac{2\, r^2 + 7\, r - 10}{{\left(2\, r^2 + 13\, r + 9\right)}^2} + \frac{5}{4} 
               \end{split}%
             \end{displaymath}%
             with $\lim_{r \to \infty}  \mathbb{V}_i[T_l] = 5/4$ for $i \in \{0,1,2\}$, and
            \begin{displaymath}%
              \begin{split}%
                \lim_{r \to 0} \mathbb{V}_2[T_l] & = 1, \\
                \lim_{r \to 0} \mathbb{V}_1[T_l] & = 10/9, \\
                \lim_{r \to 0} \mathbb{V}_0[T_l] & = 365/324.\\
                \end{split}%
              \end{displaymath}%

              Now we admit large offspring numbers, take $\eN =
              c/N^2$, and $\rN = r/N$.  Ignoring the labelling of
              the chromosomes, the limit process has three `effective'
              sample states, depending on the number of double-marked
              chromosomes $(\chr)$.  Denote the three sample states by
              $\ou{(\chr)}{(\chr)}$, $(\chr)\ou{(\achr)}{(\bchr)}$, and
              $\ou{(\achr)}{(\achr)}\ou{(\bchr)}{(\bchr)}$, in which
              $\achr$ and $\bchr$ denote single-marked chromosomes.
              The states of the limit process are composed of
              single-marked individuals only, and are therefore the
              same as those of the haploid Wright-Fisher process. By
              $\cachr$ denote a chromosome carrying a common ancestor
              at one locus, and $(\cacb)$ denotes the absorbing
              states.  The transition rates are summarized in the following table:
              \begin{displaymath}%
                \begin{matrix}%
                  & \ou{(\chr)}{(\chr)} & (\chr)\ou{(\achr)}{(\bchr)} & \ou{(\achr)}{(\achr)}\ou{(\bchr)}{(\bchr)} & (\cachr)\ou{(\bchr)}{(\bchr)} & (\cabchr)(\bchr) & (\cacb)  \\\hline
                  \\
        \ou{(\chr)}{(\chr)} & & 2r  &  & & & 1 + c\tfrac{\psi^2}{4}  \\\\
     (\chr)\ou{(\achr)}{(\bchr)} &  1 + c\tfrac{\psi^2 }{4}(1 - \tfrac{\psi}{4})    & &r  & & 2 + c\tfrac{\psi^2 }{2}(1 - \tfrac{\psi}{4})   & c\tfrac{\psi^3}{16} \\\\
    \ou{(\achr)}{(\achr)}\ou{(\bchr)}{(\bchr)} &  c\tfrac{3\psi^4}{32} & 4 + c\left(\psi^2 - \tfrac{\psi^3}{2} - \tfrac{\psi^4}{8}\right)   & & 2 + c\left(\tfrac{\psi^2}{2} - \tfrac{\psi^3}{4} - \tfrac{\psi^4}{16}\right) & c\tfrac{\psi^3}{4}\left(1 - \tfrac{\psi}{4} \right)    & c\tfrac{\psi^4}{16}  \\\\
    (\cachr)\ou{(\bchr)}{(\bchr)} & & & & &  2 + c\tfrac{\psi^2}{2}\left(1 - \tfrac{\psi}{4}\right) & 1 + c\tfrac{\psi^2}{4} \\\\
   (\cabchr)(\bchr) &&&&r & & 1 + c\tfrac{\psi^2}{4} \\ \hline
                  \end{matrix}%
                \end{displaymath}%
                %%%
                By way of example, the rate of the transition from 1
                to 2 by coalescence of the chromosomes $\achr$ and
                $\bchr$ is $1 + cC_{3;2;1}$, the transition rate from
                0 to 1 is $4\left(1 + cC_{4;2;2}\right)$, and the
                transition rate from 0 to the absorbing state
                ($(\cacb)$ or $(\cachr)(\acb)$) is $c\left(C_{4;4;0} +
                  C_{4;2,2;0}\right)$.

                As before, let $h(i)$ denote the probability the two
                loci coalesce at the same time.  One obtains limit
                results
                    \begin{equation}%
                      \label{eq:limhi}%%
                      \begin{split}%
                    \lim_{r \to \infty}h(i) & = \tfrac{c\psi^4}{32 + 8c\psi^2 - c\psi^4}, \quad i \in \{0,1,2\} \\
                    \lim_{c \to \infty}h(2) & = 1 \\
                    \lim_{c \to \infty}h(1) & = \tfrac{2}{6  -  \psi }  \\
                    \lim_{c \to \infty}h(0) & =  \tfrac{\frac{56\, \psi^2}{3} - 272\, \psi + 544}{\left(\psi - 6\right)\, \left(3\, \psi^2 + 16\, \psi - 48\right)} - \frac{5}{3}  \\
                    \end{split}%
                    \end{equation}%
                    The first equation in \eqref{eq:limhi} tells us
                    that the loci remain correlated due to multiple
                    mergers even when they are far apart on a
                    chromosome.  When the recombination rate $r$ is
                    quite small, one obtains
                    \begin{equation}%
                      \label{eq:hir0}%%
                      \begin{split}%
                        \lim_{r \to 0}h(2) & = 1 \\
                        \lim_{r \to 0}h(1) & = \tfrac{2\, \left(c\, \psi^2 + 4\right)}{ - c\, \psi^3 + 6\, c\, \psi^2 + 24} \\
                        \lim_{r \to 0}h(0) & = \tfrac{1}{3}\left( \tfrac{8\, c\, \psi^2 + 32}{ - c\, \psi^3 + 6\, c\, \psi^2 + 24} + \tfrac{ - 80\, c\, \psi^3 + 208\, c\, \psi^2 + 832}{ - 3\, c\, \psi^4 - 16\, c\, \psi^3 + 48\, c\, \psi^2 + 192} - 5\right) \\
                        \end{split}%
                      \end{equation}%
                      Let $\mathbb{E}_i[T_s]$, as before, denote the
                      time until coalescence at either loci, starting
                      from state i.  Admitting large offspring numbers, one obtains
                      \begin{displaymath}
                        \begin{split}%
                      \lim_{r \to \infty}\mathbb{E}_i[T_s] & =  \tfrac{16}{32 + 8c\psi^2  -  c\psi^4} , \quad i \in \{0,1,2 \}, \\
                      \lim_{c \to \infty}\mathbb{E}_i[T_s] & = 0,   \quad i \in \{0,1,2 \}, \\
                      \lim_{r \to 0}\mathbb{E}_2[T_s] &  = \tfrac{4}{c\psi^2 + 4} \\
                      \lim_{r \to 0}\mathbb{E}_1[T_s] & = \tfrac{c\, \left(16\, \psi^2 - 2\, \psi^3\right) + 64}{ - c^2\, \psi^5 + 6\, c^2\, \psi^4 - 4\, c\, \psi^3 + 48\, c\, \psi^2 + 96} \\
                      \lim_{r \to 0}\mathbb{E}_0[T_s] & =  \tfrac{16}{3\, \left(c\, \left(6\, \psi^2 - \psi^3\right) + 24\right)} - \tfrac{4\, \left(\psi - 8\right)}{\left(3\, \psi + 16\right)\, \left(c\, \psi^2 + 4\right)} - \tfrac{32\, \left(39\psi - 32\right)}{3\, \left(c\, \left(3\, \psi^4 + 16\, \psi^3 - 48\, \psi^2\right) - 192\right)\, \left(3\, \psi + 16\right)}  \\
                      \end{split}%
                      \end{displaymath}%
                      Let $\mathbb{E}_i[T_l]$, as before, denote the
                      expected value of the time $T_l$ until
                      coalescence has occurred at both loci, when
                      starting from state $i$.  Admitting large
                      offspring numbers, one obtains the limits 
                      \begin{displaymath}%
                        \begin{split}%
                          \lim_{r \to \infty}\mathbb{E}_i[T_l] &  =  \tfrac{c\, \left(48\, \psi^2 - 8\, \psi^4\right) + 192}{\left(c\, \psi^2 + 4\right)\, \left( - c\, \psi^4 + 8\, c\, \psi^2 + 32\right)}, \quad i \in \{0,1,2\},  \\
                           \lim_{c \to \infty}\mathbb{E}_i[T_l] & = 0,   \quad i \in \{0,1,2 \}, \\
                           \lim_{r \to 0} \mathbb{E}_2[T_l] & = \tfrac{4}{c\psi^2  + 4} \\
                            \lim_{r \to 0} \mathbb{E}_1[T_l] & =  \tfrac{c\, \left(32\, \psi^2 - 6\, \psi^3\right) + 128}{ - c^2\, \psi^5 + 6\, c^2\, \psi^4 - 4\, c\, \psi^3 + 48\, c\, \psi^2 + 96} \\
                           \lim_{r \to 0} \mathbb{E}_0[T_l]  & = \tfrac{\left(28\psi^7 - 56\psi^6 - 800\psi^5 + 1600\psi^4\right)\, c^2 + \left( - 608\psi^4 - 3200\psi^3 + 12800\psi^2\right)\, c + 25600}{a}
                         \end{split}
                         \end{displaymath}
                         in which
                         \begin{displaymath}
                           \begin{split}%
a & = 3\, c^3\psi^9 - 2\, c^3\psi^8 - 144\, c^3\psi^7 + 288\, c^3\psi^6 + 12\, c^2\psi^7 - 80\, c^2\psi^6 - 1152\, c^2\psi^5 \\
 & + 3456\, c^2\psi^4 - 288\, c\psi^4 - 2304\, c\psi^3 + 13824\, c\psi^2 + 18432 
                          \end{split}%
                        \end{displaymath}%
                        Considering the variance $\mathbb{V}_i[T_s]$
                        of the time $T_s$ when starting from state $i
                        \in \{0, 1, 2 \}$, and admitting large
                        offspring numbers, one obtains
                        \begin{displaymath}%
                          \begin{split}%
                           \lim_{r \to \infty} \mathbb{V}_i[T_s] & = \tfrac{256}{{\left(c\, \left(8\psi^2 - \psi^4\right) + 32\right)}^2}, \quad i \in \{0,1,2  \}, \\
                           \lim_{c \to \infty}  \mathbb{V}_2[T_s] & =  0,   \quad i \in \{0,1,2  \}, \\
                           \lim_{r \to 0} \mathbb{V}_2[T_s] & =  \tfrac{16}{\left(c\psi^2 + 4\right)^2} \\
                           \lim_{r \to 0} \mathbb{V}_1[T_s] & =  \tfrac{\left(12\, \psi^6 - 128\, \psi^5 + 384\, \psi^4\right)\, c^2 + \left(3072\, \psi^2 - 512\, \psi^3\right)\, c + 6144}{{\left(c\, \psi^2 + 4\right)}^2\, {\left( - c\, \psi^3 + 6\, c\, \psi^2 + 24\right)}^2}
                            \end{split}%
                          \end{displaymath}%

                          Correlations in coalescence times have been
                          employed to quantify linkage disequilibrium
                          (LD) \citep{M02}, in which LD is quantified
                          as the square of the correlation coefficient
                          of types at two loci \citep{HR68}.  A
                          description of how one can quantify linkage
                          disequilibrium as the square of the
                          correlation coefficient of types at two loci
                          can be found in \cite{HC89}.  Assuming a
                          very small mutation rate, \cite{M02} related
                          $\mathfrak{D}$ to covariances in coalescence
                          times. 
                          %, by assuming a very small mutation rate.  
                          Writing $\sub{\textrm{Cov}}{i}(T_1,
                          T_2)$ as the covariance of $T_1$ and $T_2$
                          when starting from state $i \in \{0,1,2\}$,
                          \cite{M02} obtained
            \begin{displaymath}%
              \begin{split}%
            \mathfrak{D} & =  \frac{\sub{\textrm{Cov}}{2}\left[T_1, T_2  \right]  - 2\sub{\textrm{Cov}}{1}\left[ T_1, T_2  \right] + \sub{\textrm{Cov}}{0}\left[ T_1, T_2 \right]}{ \left(\mathbb{E}[T_1]\right)^2  +  \sub{\textrm{Cov}}{0}\left[ T_1, T_2 \right]   } \\
            & = 1 + \frac{\mathbb{E}_2\left[T_1T_2 \right]  - 2\mathbb{E}_1\left[T_1T_2  \right]   }{\mathbb{E}_0\left[T_1T_2   \right] }
            \end{split}
            \end{displaymath}%
            in which $T_1$ and $T_2$ denote the times until
            coalescence at the two loci, respectively, and the
            covariances are conditional on the sample configurations,
            as indicated.  Following e.g.\ \cite{D02} one can obtain
            the covariances under any population model.  Under our population model, 
            $\mathfrak{D} = \mathfrak{D}_1/\mathfrak{D}_2$,  in which
\begin{displaymath}%
  \begin{split}%
    \mathfrak{D}_1 &= 640c\psi^2 - 224c\psi^3 + 32c\psi^4 + 80c^2\psi^4 - 56c^2\psi^5 + 16c^2\psi^6 - c^2\psi^7 \\
    & + r(16c\psi^4 - 32c\psi^3 + 64c\psi^2 + 256) + 1280, \\
    \mathfrak{D}_2 & = 1408c\psi^2 - 352c\psi^3 + 8c\psi^4 + 512r^2 + 176c^2\psi^4 - 88c^2\psi^5 + 10c^2\psi^6 - c^2\psi^7 \\
    & + r(8c\psi^4 - 288c\psi^3 + 832c\psi^2 + 3328) + 2816.  \\
    \end{split}%
  \end{displaymath}%
  One obtains the limit results
  \begin{displaymath}%
    \begin{split}%
    \lim_{r \to \infty}\mathfrak{D} &= 0, \\
    \lim_{c \to \infty}\mathfrak{D} &= \tfrac{ \psi^3 - 16\psi^2  + 56\psi - 80}{ \psi^3 - 10\psi^2 + 88\psi - 176 }.%
    \end{split}%
    \end{displaymath}%
    %%%
    %%%%
    \subsection{Correlations in coalescence times for random $\psi$ }%%

    In this section we consider the simple example of the probability
    measure $F$, evoked in relation to a random offspring
    distribution, taking the beta distribution with parameters
    $\vartheta$ and $\gamma$.  The following transition rates for a
    sample of size two at two loci are obtained:
     \begin{displaymath}%
                \begin{matrix}%
                  & \ou{(\chr)}{(\chr)} & (\chr)\ou{(\achr)}{(\bchr)} & \ou{(\achr)}{(\achr)}\ou{(\bchr)}{(\bchr)} & (\cachr)\ou{(\bchr)}{(\bchr)} & (\cabchr)(\bchr) & (\cacb)  \\\hline
                  \\
        \ou{(\chr)}{(\chr)} & & 2r  &  & & & 1  \\\\
     (\chr)\ou{(\achr)}{(\bchr)} &  \tfrac{\gamma + 3\vartheta/4 }{\vartheta + \gamma}    & &r  & & 2 \tfrac{\gamma + 3\vartheta/4 }{\vartheta + \gamma}    &  \tfrac{\vartheta}{4(\vartheta + \gamma)}  \\\\
    \ou{(\achr)}{(\achr)}\ou{(\bchr)}{(\bchr)} &  \tfrac{3}{8}\tfrac{(1 + \vartheta)\vartheta }{(1 + \vartheta + \gamma)(\vartheta + \gamma) }   & \tfrac{4(1 + \gamma)\gamma + 3\vartheta\gamma + \tfrac{3}{2}(1 + \vartheta)\vartheta }{(1 + \vartheta + \gamma)(\vartheta + \gamma)}   & & \tfrac{2(1 + \gamma)\gamma + \tfrac{3}{2}\vartheta\gamma + \tfrac{3}{4}(1 + \vartheta)\vartheta }{(1 + \vartheta + \gamma)(\vartheta + \gamma)}  &  \tfrac{\vartheta\gamma  +  \tfrac{3}{4}(1 +   \vartheta )\vartheta }{(1 + \vartheta + \gamma)(\vartheta + \gamma)}   & \tfrac{(\vartheta + 1)\vartheta }{4(\vartheta + \gamma + 1)(\vartheta + \gamma) }  \\\\
    (\cachr)\ou{(\bchr)}{(\bchr)} & & & & &  2\tfrac{\gamma + 3\vartheta/4 }{\vartheta + \gamma }  & 1  \\\\
   (\cabchr)(\bchr) &&&&r & & 1  \\ \hline
                  \end{matrix}%
                \end{displaymath}%
%%%%%%%%% \label{eq:hiKlimr}%
                As before, the transition rates given above can be
                employed to derive correlations in coalescence times.
                Here we only consider the probability $h(i)$.  One
                obtains $\lim_{\vartheta \to 0}h(i) = \lim_{\gamma \to
                  \infty}h(i)$ and the limit results are those
                obtained from the usual ARG \eqref{eq:hiKlimr}.

%%%%
\subsection{Variance of pairwise differences}%%
The variance of pairwise differences between DNA sequences has been
employed to estimate recombination rates in low offspring number populations \citep{W97}.  Let the random variable
$\sub{K}{ij}$  denote the number of differences between sequences $i$
and $j$, with $\sub{K}{ii} = 0$.  The average number $\pi$ of pairwise
differences  for $n$ sequences is
\begin{displaymath}%
  \pi = \frac{2}{n(n - 1)}\sum_{i < j}\sub{K}{ij}
  \end{displaymath}%
  Under the infinitely many sites mutation model, $\mathbb{E}[\pi] =
  \theta\mathbb{E}[T]$, in which $T$ is the time until coalescence of
  two sequences.   Under our model,  $\mathbb{E}[T] = 1/(1 + c\psi^2/4)$.  
Define the variance
  $\sub{S}{\pi}^2$ of pairwise differences as
$$
\sub{S}{\pi}^2 = \frac{2}{n(n - 1)}\sum_{i < j} \left(\sub{K}{ij} - \pi  \right)^2
$$
To obtain an estimate of the recombination rate, one needs to compute
the expected value  $\mathbb{E}\left[\sub{S}{\pi}^2\right]$, 
$$
\mathbb{E}\left[\sub{S}{\pi}^2\right] = \frac{2}{n(n - 1)}\sum_{i < j}\mathbb{E}\left[ \left(\sub{K}{ij} - \pi\right)^2 \right] = \mathbb{E}\left[\left(\sub{K}{12} - \pi
  \right)^2\right]. $$
Thus, it suffices to consider $\mathbb{E}\left[\left(\sub{K}{12} - \pi
  \right)^2\right]$. Expanding, one obtains 
\newcommand{\ip}{\ensuremath{\hat{\imath}}}%
\begin{displaymath}%
  \begin{split}%
  \mathbb{E}\left[\left(\sub{K}{12} - \pi \right)^2\right] & = \mathbb{E}\left[ \left( \tfrac{2}{n(n - 1)}\sum_{i < j} (\sub{K}{12} - \sub{K}{ij} )  \right)^2  \right] \\\\
  & = \tfrac{4}{n^2(n - 1)^2}\sum_{i < j}\sum_{\ip < \jp}\mathbb{E}\left[ (\sub{K}{12} -  \sub{K}{ij})(\sub{K}{12} -  \sub{K}{\ip \jp})    \right].
  \end{split}%
\end{displaymath}%
Define the event $\sub{A}{ij}^{(\ell)}$ by
$$
\sub{A}{ij}^{(\ell)} := \left\{ \textrm{sequences $i$ and $j$ differ at locus $\ell$}    \right\}.
$$
Assuming each sequence consists of $L$ loci,  and  $1_{\sub{A}{ij}^{(\cdot)}}$ are  indicator functions, 
$$
\sub{K}{12} - \sub{K}{ij} = \sum_{\ell = 1}^L \left(1_{\sub{A}{12}^{(\ell)} } -  1_{\sub{A}{ij}^{(\ell)} }\right)
$$
yielding, in case $i = i^\prime = 1$, and $j = \jp = 3$,
\newcommand{\lh}{\ensuremath{\hat{\ell}}}%
\begin{displaymath}%
  \begin{split}%
\mathbb{E}\left[ (\sub{K}{12} - \sub{K}{13} )^2 \right] & = \sum_{\ell = 1}^L \sum_{\lh = 1}^L\mathbb{E} \left[\left( 1_{\sub{A}{12}^{(\ell)} } -  1_{\sub{A}{13}^{(\ell)} }\right)\left( 1_{\sub{A}{12}^{(\lh)} } -  1_{\sub{A}{13}^{(\lh)} }\right) \right]  \\\\
& = 2\sum_{\ell = 1}^L\sum_{\lh = 1}^L \mathbb{P}\left( \sub{A}{12}^{(\ell)}\cap \sub{A}{12}^{(\lh)} \right) -  \mathbb{P}\left( \sub{A}{12}^{(\ell)}\cap \sub{A}{13}^{(\lh)} \right).
\end{split}%
\end{displaymath}%
In general, 
\begin{equation}%
  \label{eq:EKiKjKiKjgeneral}%
  \begin{split}%
 & \mathbb{E}\left[\left(\sub{K}{12} - \sub{K}{ij}\right)\left(\sub{K}{12} - \sub{K}{\ip \jp}\right) \right] \\\\ 
= & \sum_{\ell = 1}^L\sum_{\lh = 1 }^L \mathbb{E}\left[ \left(  1_{\sub{A}{12}^{(\ell)} } -  1_{\sub{A}{ij}^{(\ell)} }  \right)  \left(  1_{\sub{A}{12}^{(\lh)} } -  1_{\sub{A}{\ip \jp}^{(\lh)} }  \right)   \right]\\\\
 = &  \sum_{\ell = 1}^L\sum_{\lh = 1 }^L \mathbb{P}\left( \sub{A}{12}^{(\ell)}\cap \sub{A}{12}^{(\lh)}  \right) -  \mathbb{P}\left( \sub{A}{12}^{(\ell)}\cap \sub{A}{\ip \jp}^{(\lh)}  \right)  - \mathbb{P}\left( \sub{A}{12}^{(\lh)}\cap \sub{A}{ij}^{(\ell)}  \right) + \mathbb{P}\left( \sub{A}{ij}^{(\ell)}\cap \sub{A}{\ip \jp}^{(\lh)}  \right).
  \end{split}%
  \end{equation}%
  Now consider the probability $\mathbb{P}( \sub{A}{12}^{(\ell)}\cap
  \sub{A}{12}^{(\lh)} )$ of the event that sequences 1 and 2 differ at
  both loci $\ell$ and $\lh$.  Admitting mutation introduces two new
  states, namely the states $\ou{(\achr)}{(\achr)}$ and
  $\ou{(\bchr)}{(\bchr)}$.  Define
  $$
  g(\mathfrak{S}) := \mathbb{P}\left(\textrm{both loci separated by mutation, starting from state $\mathfrak{S}$} \right)
  $$

    Thus, $\mathbb{P}\left(
      \sub{A}{12}^{(\ell)}\cap\sub{A}{12}^{(\lh)} \right) =
    g(2)$,  $\mathbb{P}\left(
      \sub{A}{12}^{(\ell)}\cap\sub{A}{13}^{(\lh)} \right) =
    g(1)$, and     $\mathbb{P}\left(\sub{A}{12}^{(\ell)}\cap\sub{A}{34}^{(\lh)} \right) =
    g(0)$, for $\ell \ne \lh$.   
Now,
\begin{displaymath}%
  \begin{split}%
  g(2) & = \frac{\theta_1g(-1) + \theta_2g(-2) + 2rg(1) }{\theta_1 + \theta_2 + 1 + c\tfrac{\psi^2}{4} + 2r } \\
  g(-1) & = \frac{\theta_2}{\theta_2  +  1 + c\tfrac{\psi^2}{4}} \\
  g(-2) & =  \frac{\theta_1}{\theta_1  +  1 + c\tfrac{\psi^2}{4}} \\
  g(1) & = \frac{\theta_1g(-1) + \theta_2g(-2) + rg(0) + \left(1 + c\psi^2/4 \right)g(2) }{\theta_1 + \theta_2 + r + 3 + 3c\tfrac{\psi^2}{4}(1 - \tfrac{\psi}{4}) + c\tfrac{\psi^3}{16}  } \\
  g(0) & =  \frac{\theta_1g(-1) + \theta_2g(-2) +  c\tfrac{3\psi^4}{32}g(2)  +  \left(c\left(\psi^2  -  \tfrac{\psi^3 }{2} - \tfrac{\psi^4}{8}\right) + 4\right)g(1)}{c\tfrac{3\psi^4}{32} + c\left(\psi^2 - \tfrac{\psi^3}{2} - \tfrac{\psi^4}{8}\right) +  c\left( \tfrac{\psi^2}{2} - \tfrac{\psi^3}{4} - \tfrac{\psi^4}{16} \right) + 6 + c\tfrac{\psi^3}{4}\left(1 - \tfrac{\psi}{4} \right) + c\tfrac{\psi^4}{16} + \theta_1 + \theta_2 }%%
  \end{split}%
  \end{displaymath}%

  In view of expression  \eqref{eq:EKiKjKiKjgeneral},  one obtains
  \begin{equation}%
    \label{eq:PAAsamelocus}%
    \begin{split}%
      \PP{A_{12}^{(\ell)}\cap A_{12}^{(\ell)}} & =  \PP{A_{12}^{(\ell)}} =    \frac{\theta_\ell}{\theta_\ell + 1 + c\psi^2/4 }, \\
      \PP{A_{12}^{(\ell)}\cap A_{13}^{(\ell)} } & = \frac{\theta_\ell}{\tfrac{3\theta_\ell}{2} + \lambda_3} +  \tfrac{ \lambda_2}{\tfrac{3\theta_\ell}{2} + \lambda_3}\frac{\theta_\ell}{\theta_\ell + \lambda_2 }, \\ 
      \PP{A_{12}^{(\ell)}\cap A_{34}^{(\ell)}} & = \frac{2\theta_\ell}{2\theta_\ell + \lambda_4}\frac{\theta_\ell}{2\theta_\ell + \lambda_4 } +  \frac{\lambda_{4;2}}{2\theta_\ell + \lambda_4}\left( \frac{\theta_\ell/2 }{\tfrac{3\theta_\ell}{2} + \lambda_3 } +  \left(\frac{\theta_\ell/2}{\tfrac{3\theta_\ell}{2} + \lambda_3}\right)^2 \right).
    \end{split}%
    \end{equation}%
    The event $A_{12}^{(\ell)}\cap A_{34}^{(\ell)}$
    \eqref{eq:PAAsamelocus} occurs if the first two events in the
    history of the four sequences are mutations on appropriate
    ancestral lineages, or if lineages labelled 2 and 3 coalesce,
    followed by appropriately-placed mutations.

\clearpage
\pagebreak
\newpage

\begin{table}%
  \caption{Estimates $\overline{R}_i$ of the expected values $E[R_i]$ of the ratios $R_i := L_i/L$ for $1 \le i \le 4$ at one marginal locus, along with estimates $\widehat{R}_i$ of the standard deviations of $R_i$.  Estimates are obtained from $10^5$ simulated gene genealogies. }
  \label{tab:ERi}
\begin{tabular}{llllllllllll}%
  \hline \\
   $\psi$ &$c$ & $n$ & $\overline{R}_1$ & $\overline{R}_2$  & $\overline{R}_3$  & $\overline{R}_4$  & $\widehat{R}_1$  & $\widehat{R}_2$ &  $\widehat{R}_3$   & $\widehat{R}_4$ \\
   \hline
   -- & 0 & 6 & $0.466$    & $ 0.219$   & $ 0.138$  & $ 0.100$ & $0.183$    &   $0.167$ &  $0.198$  & $0.124$ \\
   & & 10 & $0.378$ &  $0.180$   & $0.117$   & $ 0.085$ & $0.156$  & $0.132$  &  $0.120$    &  $0.110$ \\
   && 20 & $0.300$    &  $0.146$   & $ 0.096$    & $ 0.070$ & $0.119$  & $ 0.097$   & $ 0.088$  & $ 0.081$ \\
   && 50 & $0.235$  & $ 0.116$   & $ 0.077$    & $ 0.057$ & $0.080$    & $ 0.063$   &   $0.058$    & $0.055$ \\
   \hline
 $0.005$ & 1& 6 & $0.466$ &  $0.219$  &  $0.138$  & $ 0.100$ & $0.183$  &  $0.167$  &  $0.198$  &  $0.124$ \\
 & & 10 & $0.377$  &  $0.181$  &  $0.117$  &  $0.085$ & $0.156$  &  $0.133$   &  $0.120$  &  $0.111$ \\
 && 20 & $0.299$   &  $0.146$   &  $0.095$  &  $0.071$ & $0.118$    &  $0.097$   &  $0.088$    &  $0.082$ \\
 && 50 & $0.234$    &  $0.116$  &  $0.076$  &  $0.057$ & $0.080$    &  $0.064$   &  $0.057$   &  $0.054$ \\
 & $1000$& 6 & $0.467$    &  $0.219$   &  $0.137$    &  $0.100$ & $0.182$  &  $0.167$   &  $0.198$  &  $0.124$ \\
 && 10 & $0.377$  &  $0.181$  &  $0.117$  &  $0.085$ & $0.156$  &  $0.133$   &  $0.120$  &  $0.110$ \\
 && 20 & $0.299$  &  $0.146$ &  $0.095$    & $0.071$ & $0.119$    & $0.097$   & $0.088$    & $0.082$ \\
 && 50 & $0.235$  &  $0.116$ &  $0.077$    & $0.057$ & $0.080$  &  $0.064$   &  $0.058$    & $0.054$ \\
 $0.5$ & 1 & 6 &  $0.468$    &  $0.217$  & $0.138$  &  $0.099$ & $0.184$  & $0.166$ & $0.199$ & $0.124$ \\
 &&10 & $0.381$  & $0.179$   &  $0.115$  & $0.085$ & $0.157$    &  $0.132$  & $0.120$   &  $0.110$ \\
 &&20& $0.304$  &  $0.145$ &  $0.095$  &  $0.070$ & $0.120$   &  $0.097$   &  $0.088$   &  $0.081$ \\
 &&50& $0.242$    &  $0.117$   &  $0.077$  &  $0.056$ & $0.081$   &  $0.064$   &  $0.058$    &  $0.054$ \\
 & 1000& 6 & $0.541$  &  $0.173$ &  $0.116$  &  $0.089$ & $0.184$  & $ 0.152$   &  $0.177$  &  $0.116$ \\
 && 10 & $0.566$  &  $0.117$   &  $0.078$    &  $0.058$ & $0.159$  &  $0.101$   &  $0.090$  &  $0.082$ \\
 && 20 & $0.743$  &  $0.101$   &  $0.035$    &  $0.022$ & $0.084$  &  $0.053$   &  $0.033$  &  $0.027$ \\
 && 50 & $0.576$  &  $0.195$   &  $0.089$    &  $0.046$ & $0.058$  &  $0.051$   &  $0.037$  &  $0.026$ \\
  \hline
  \end{tabular}%
\end{table}%

\clearpage\pagebreak\newpage

\begin{sidewaystable}%%
\caption{Estimates of the correlation \corr{X^{(1)}}{Y^{(2)}} between $X^{(1)}$ and $Y^{(2)}$, where $X^{(1)}$ represents a statistic for locus 1, and $Y^{(2)}$ for locus 2, as follows: the time $T$ until most recent common ancestor at a locus; $L$ the total length of the gene genealogy at a locus, and $R_i := L_i/L$, in which $L_i$ denotes the total length of branches ancestral to $i$ sequences.   Estimates are based on $10^5$ simulated ancestral recombination graphs each for a sample of size $50$.}%%
\label{tab:corXYa}%%
\begin{tabular}{llllllllllll}%
  \hline \\
   $c$ & $\psi$&  $r$ &   \corr{T^{(1)}}{T^{(2)}} &  \corr{L^{(1)}}{L^{(2)}} & \corr{L^{(1)}_1}{L^{(2)}_1} & \corr{L^{(1)}_2}{L^{(2)}_2} & \corr{L^{(1)}_3}{L^{(2)}_3} & \corr{L^{(1)}_4}{L^{(2)}_4}  \\ \hline
\hline
0 & $-$ & 1 & $0.311$  & $0.418$  & $0.586$  & $0.501$  &  $0.434$  &  $0.378$ \\
&   &10 & $0.016$  & $0.058$   &  $0.169$   & $0.089$   &  $0.047$  & $0.036$  \\
\hline
1 & $0.005$& 1  & $0.306$  & $0.415$   &  $0.588$   &  $0.508$   &  $0.431$  & $0.380$ \\
& & 10 & $0.015$  &  $0.055$   &  $0.171$ &  $0.090$  &  $0.049$  &  $0.034$ \\
1000&$0.005$& 1& $0.308$  & $0.419$   &  $0.585$   &  $0.509$   &  $0.438$  &  $0.376$ \\
&& 10& $0.013$  & $0.051$   &  $0.168$   &  $0.093$ &  $0.052$  & $0.030$ \\
1 & $0.5$&1& $0.328$  & $0.447$   &  $0.601$  & $0.516$   &  $0.449$  &  $0.389$ \\
1 && 10 & $0.024$  & $0.085$ &  $0.193$   & $0.107$   &  $0.064$  & $0.036$ \\
1000&&1 & $0.982$  & $0.995$ &  $0.976$   & $0.950$   &  $0.918$  & $0.879$ \\
&&10& $0.924$  & $0.947$   &  $0.763$   &  $0.623$   &  $0.503$  & $0.396$ \\
\hline
$c$ & $\psi$ & $r$ & \corr{L^{(1)}_1}{L^{(2)}_2} & \corr{L^{(1)}_1}{L^{(2)}_3} & \corr{L^{(1)}_1}{L^{(2)}_4} & \corr{L^{(1)}_2}{L^{(2)}_3} & \corr{L^{(1)}_2}{L^{(2)}_4} & \corr{L^{(1)}_3}{L^{(2)}_4}\\
\hline
0 & $-$&1& $-0.031$  &  $-0.031$  &  $-0.021$  & $-0.005$  &  $-0.018$  & $0.009$ \\
&& 10& $0.005$  & $-0.006$   &  $-0.001$   & $0.012$   &  $0.005$  & $0.013$  \\
\hline
1 &$0.005$&1& $-0.035$ & $-0.025$ & $-0.021$   &   $-0.001$   &  $-0.019$  & $0.009$ \\
&&10 & $0.000$  & $-0.002$   & $0.008$ &  $0.009$  &  $0.005$  &   $0.014$ \\
1000 &$0.005$&1& $-0.036$  & $-0.029$   & $-0.021$   & $-0.006$   &  $-0.018$  & $0.010$ \\
&&10& $-0.002$  &  $-0.003$   &  $0.003$   & $0.014$   &  $0.004$  &  $0.005$ \\
1 &$0.5$&1& $-0.022$  & $-0.014$   &  $-0.007$   & $0.004$   &  $-0.004$  & $0.023$ \\
&&10& $0.009$  & $0.006$  & $0.010$  & $0.022$   & $0.014$  &  $0.025$  \\
1000&&1& $0.326$  &  $0.314$   & $0.305$   & $0.238$   & $0.218$  & $0.176$ \\
&&10& $0.311$  & $0.284$   &  $0.266$   &  $0.289$   & $0.239$  & $0.262$ \\
\hline
\end{tabular}\end{sidewaystable}%
%%%%

\begin{sidewaystable}%%
\caption{Estimates of the correlation \corr{X^{(1)}}{Y^{(2)}} between $X^{(1)}$ and $Y^{(2)}$, where $X^{(1)}$ represents a statistic for locus 1, and $Y^{(2)}$ for locus 2, as follows: 
 the time $T$ until most recent common ancestor at a locus; $L$ the total length of the gene genealogy at a locus, and $R_i := L_i/L$, in which $L_i$ denotes the total length of branches ancestral to $i$ sequences.  Estimates are based on $10^5$ simulated ancestral recombination graphs each for a sample of size $50$.}%%
\label{tab:corXYb}%
\begin{tabular}{llllllllllll}%
  \hline \\
 $c$ & $\psi$ & $r$ & \corr{R^{(1)}_1}{R^{(2)}_1} & \corr{R^{(1)}_2}{R^{(2)}_2} & \corr{R^{(1)}_3}{R^{(2)}_3} & \corr{R^{(1)}_4}{R^{(2)}_4} \\
  \hline
0&$-$&1& $0.570$  &  $0.548$  &  $0.486$   &  $0.431$  \\
&&10& $0.116$  & $0.089$  & $0.052$   & $0.042$  \\  
\hline
1&$0.005$&1& $0.566$  & $0.552$  &  $0.487$   &  $0.435$ \\
&&10& $0.115$  &  $0.091$  &  $0.054$   &  $0.035$ \\
1000&$0.005$&1& $0.570$  &  $0.551$  &  $0.491$   &  $0.434$ \\
&&10& $0.115$  & $0.095$  &  $0.059$   & $0.031$ \\
1&$0.5$&1& $0.583$  &  $0.557$  &  $0.504$   &  $0.447$ \\
&&10& $0.135$  & $0.102$  &  $0.063$   &  $0.038$  \\
1000&$0.5$&1& $0.955$  &  $0.927$  &  $0.900$   &  $0.866$ \\
&&10& $0.679$  & $0.469$  &  $0.384$   &  $0.304$  \\
\hline
 $c$ & $\psi$ & $r$ & \corr{R^{(1)}_1}{R^{(2)}_2} & \corr{R^{(1)}_1}{R^{(2)}_3} & \corr{R^{(1)}_1}{R^{(2)}_4} & \corr{R^{(1)}_2}{R^{(2)}_3} & \corr{R^{(1)}_2}{R^{(2)}_4} & \corr{R^{(1)}_3}{R^{(2)}_4} \\
  \hline
0& $-$ & 1 & $-0.023$ & $-0.040$ &  $-0.042$ & $-0.026$   &  $-0.042$   & $-0.014$ \\  
 & & 10 &  $-0.022$   & $-0.023$   &  $-0.020$   &  $0.003$   &  $-0.005$   & $0.005$ \\
  \hline
1&$0.005$&1& $-0.024$   & $-0.038$   & $-0.042$   & $-0.023$   & $-0.046$   & $-0.014$ \\
&&10& $-0.027$   & $-0.018$   &  $-0.015$   &  $0.001$   & $-0.007$   &  $0.011$ \\
1000&$0.005$&1& $-0.028$   & $-0.038$   & $-0.038$   &  $-0.031$   &  $-0.043$   &  $-0.012$ \\
&&10& $-0.030$   &  $-0.024$   &  $-0.016$   &  $0.003$   &  $-0.008$   & $-0.001$ \\
1&$0.5$&1 & $-0.023$ & $-0.035$   &  $-0.035$   &  $-0.028$   &  $-0.034$   & $-0.007$ \\
1&$0.5$&10& $-0.029$   &  $-0.023$   & $-0.015$   &  $0.004$   &  $0.000$   &  $0.016$ \\
1000&&1& $-0.622$   & $-0.348$   &  $-0.112$   &  $-0.100$   &  $-0.038$   &  $-0.016$ \\
&&10& $-0.330$   &  $-0.255$   &  $-0.135$   &  $0.009$   &  $0.004$   &  $0.096$ \\
\hline
  \end{tabular}%
\end{sidewaystable}%

\clearpage
\pagebreak
\newpage

\begin{figure}
  \caption{The probabilities $h(2)$, $h(1)$, and $h(0)$ as functions of 
    $\psi$ (lines) for different values of $r$ and $c$. 
    Values of $h(\cdot)$ obtained from the usual Moran model are shown 
    for reference (symbols).}
    \label{figure1}  
  % Figure 1.  The probabilities $h(2)$, $h(1)$, and $h(0)$ as functions
  % of $\psi$ (lines) for different values of $r$ and $c$.  Values of $h(\cdot)$
  % obtained from the usual Moran model are shown for reference (symbols).
  % %%% R script: probhp
  \begin{picture}(50,50)% figure1.pdf
    \put(-40,-450){\includegraphics[height=6.5in,width=6.5in,angle=0,scale=1.3]{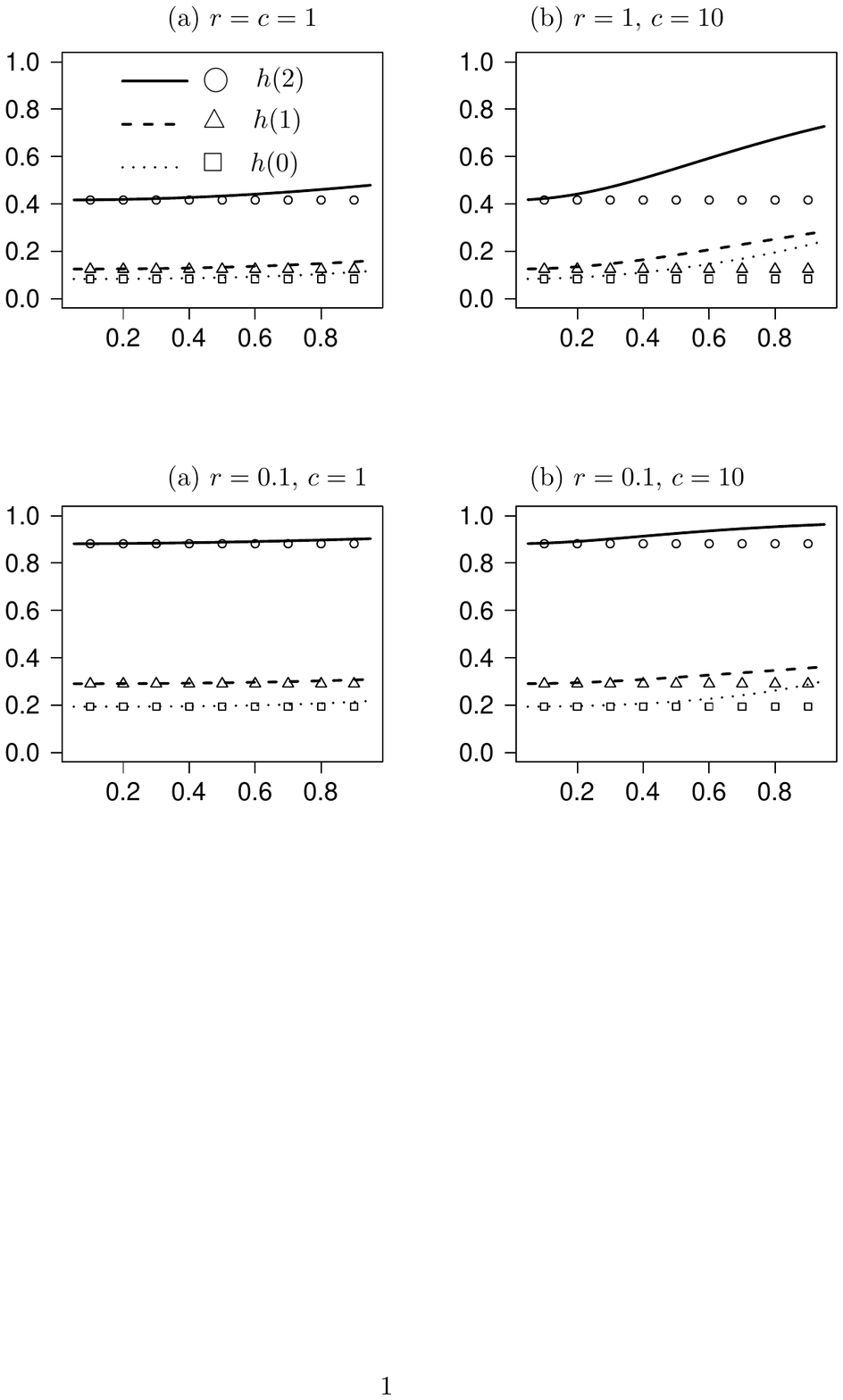}}%
    %\put(80,-50){(a) $r = c = 1$}\put(230,-50){(b) $r = 1$, $c = 10$}%
    %\put(80,-240){(c) $r = 0.1$, $c = 1$}\put(230,-240){(d) $r = 0.1$, $c = 10$}%
    %\put(0,-400){\includegraphics[height=4in,width=4in,angle=0,scale=1.3]{probh0}}%
    %\put(95,-75){$\bigcirc \quad h(2)$}\put(95,-92){$\triangle\quad h(1)$}%
    %\put(95,-110){$\square \quad h(0)$}%
    %\put(100,-200){$\psi$}\put(290,-200){$\psi$}\put(100,-385){$\psi$}\put(290,-385){$\psi$}%
  \end{picture}%
\end{figure}

\clearpage
\pagebreak
\newpage

\begin{figure}[T!]
  \caption{The expected time $\mathbb{E}[T_s^{(i)}]$ as a function of
    $\psi$ for different values of $c$ and $r$.  Values of
    $\mathbb{E}[T_s^{(i)}]$ associated with the case $c = 0$ are shown for
    reference (symbols).}
  \label{figure2}%
  % Figure 2.  The expected time $\mathbb{E}[T_s^{(i)}]$ as a function of
  % $\psi$ for different values of $c$ and $r$.  Values of
  % $\mathbb{E}[T_s^{(i)}]$ associated with the case $c = 0$ are shown for
  % reference (symbols).  
  % %%%% R script: STs
  \begin{picture}(50,50)%
     \put(-40,-450){\includegraphics[height=6.5in,width=6.5in,angle=0,scale=1.3]{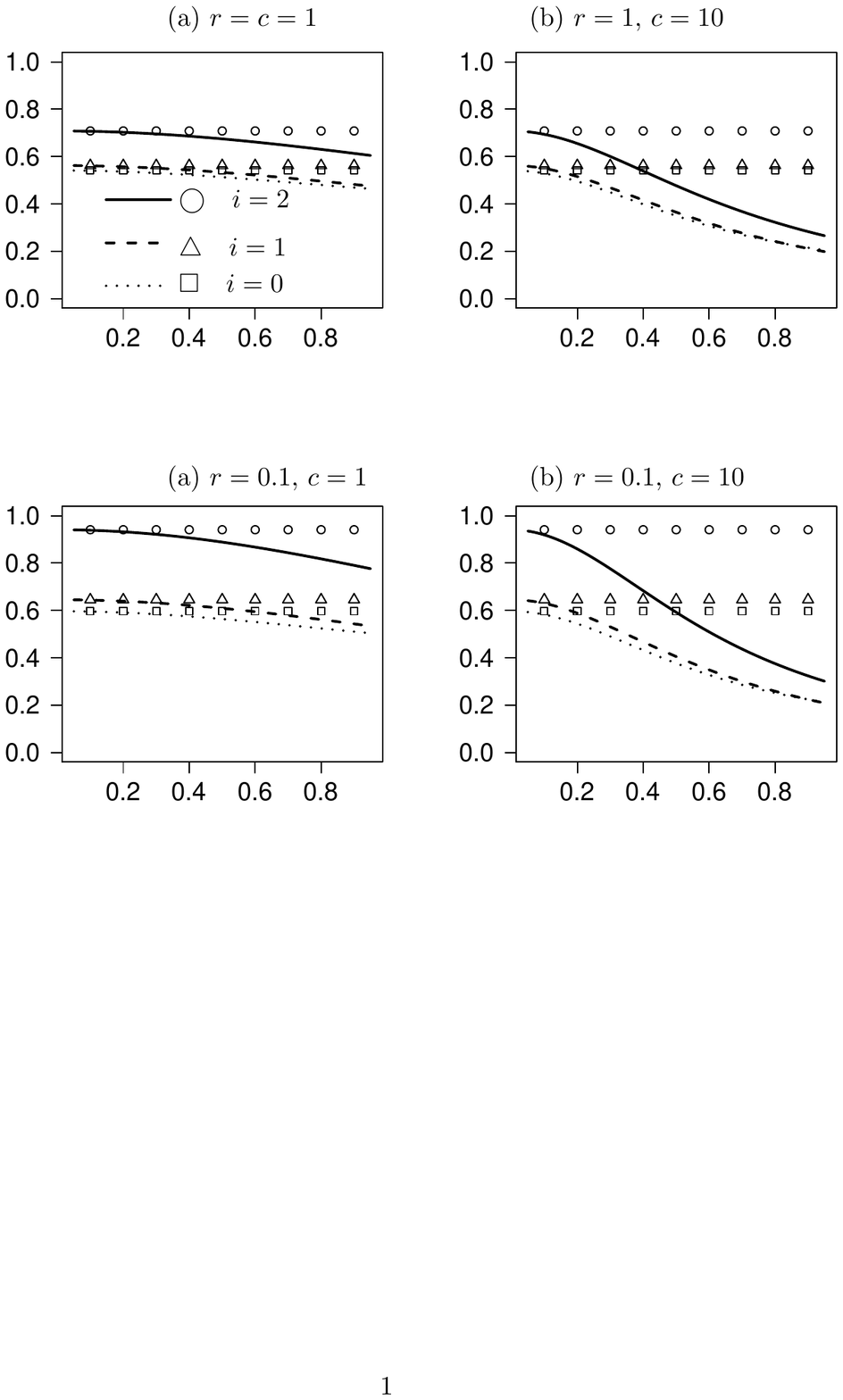}}%
    %\put(80,-50){(a) $r = c = 1$}\put(230,-50){(b) $r = 1$, $c = 10$}%
    %\put(80,-240){(c) $r = 0.1$, $c = 1$}\put(230,-240){(d) $r = 0.1$, $c = 10$}%
    %\put(0,-400){\includegraphics[height=4in,width=4in,angle=0,scale=1.3]{ETs0}}%
    %\put(85,-125){$\bigcirc \quad h(2)$}\put(85,-145){$\triangle\quad h(1)$}%
    %\put(85,-160){$\square \quad h(0)$}%
    %\put(100,-200){$\psi$}\put(290,-200){$\psi$}\put(100,-385){$\psi$}\put(290,-385){$\psi$}%
  \end{picture}%
  
\end{figure}

\clearpage
\pagebreak
\newpage

\begin{figure}[T!]
   \caption{The expected time $\mathbb{E}[T_l^{(i)}]$ as a function of
    $\psi$ for different values of $c$ and $r$.  For explanation of
    symbols, see Figure~\ref{figure2}.}
  \label{figure3}%%  
  % %%% R script ETl
  % Figure 3.  The expected time $\mathbb{E}[T_l^{(i)}]$ as a function of
  % $\psi$ for different values of $c$ and $r$.  For explanation of
  % symbols, see Figure~2.
  \begin{picture}(50,50)%
     \put(-40,-450){\includegraphics[height=6.5in,width=6.5in,angle=0,scale=1.3]{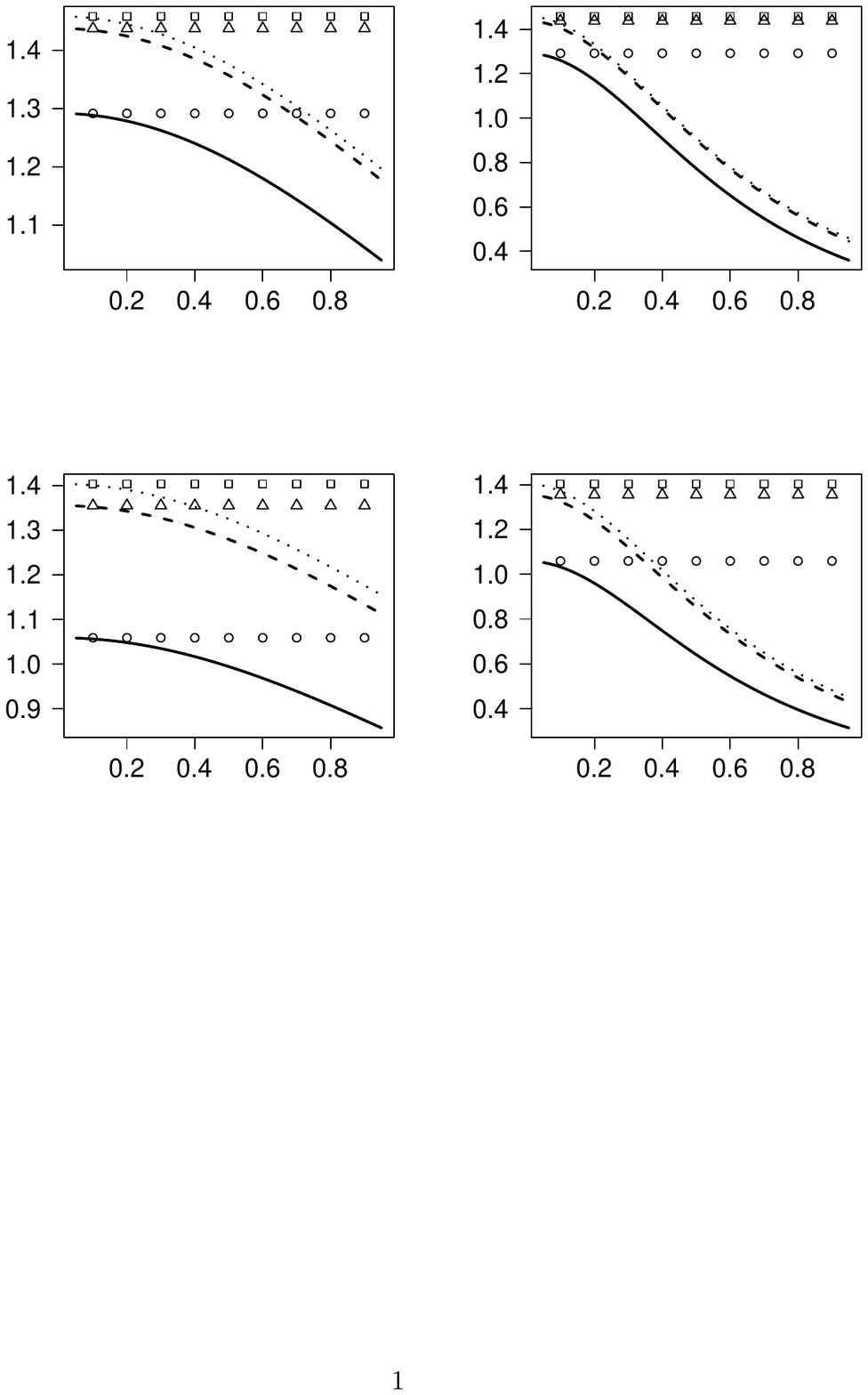}}%
    \put(120,0){(a) $r = c = 1$}\put(320,0){(b) $r = 1$, $c = 10$}%
    \put(120,-140){(c) $r = 0.1$, $c = 1$}\put(320,-140){(d) $r = 0.1$, $c = 10$}%
    %\put(0,-400){\includegraphics[height=4in,width=4in,angle=0,scale=1.3]{ETla}}%
    \put(160,-258){$\psi$}\put(360,-258){$\psi$}%
  \end{picture}%
 
\end{figure}

\clearpage
\pagebreak
\newpage

\begin{figure}[T!]
   \caption{Correlation of the time to coalescence at two loci as a
    function of $\psi$, for different values of $c$ and $r$.  For
    explanation of symbols, see Figure~\ref{figure2}.}
  \label{figure4}  
  % %%%% covTaTb( )
  % Figure 4.  Correlation of the time to coalescence at two loci as a
  % function of $\psi$, for different values of $c$ and $r$.  For
  % explanation of symbols, see Figure~2.
  \begin{picture}(50,50)%
     \put(-40,-450){\includegraphics[height=6.5in,width=6.5in,angle=0,scale=1.3]{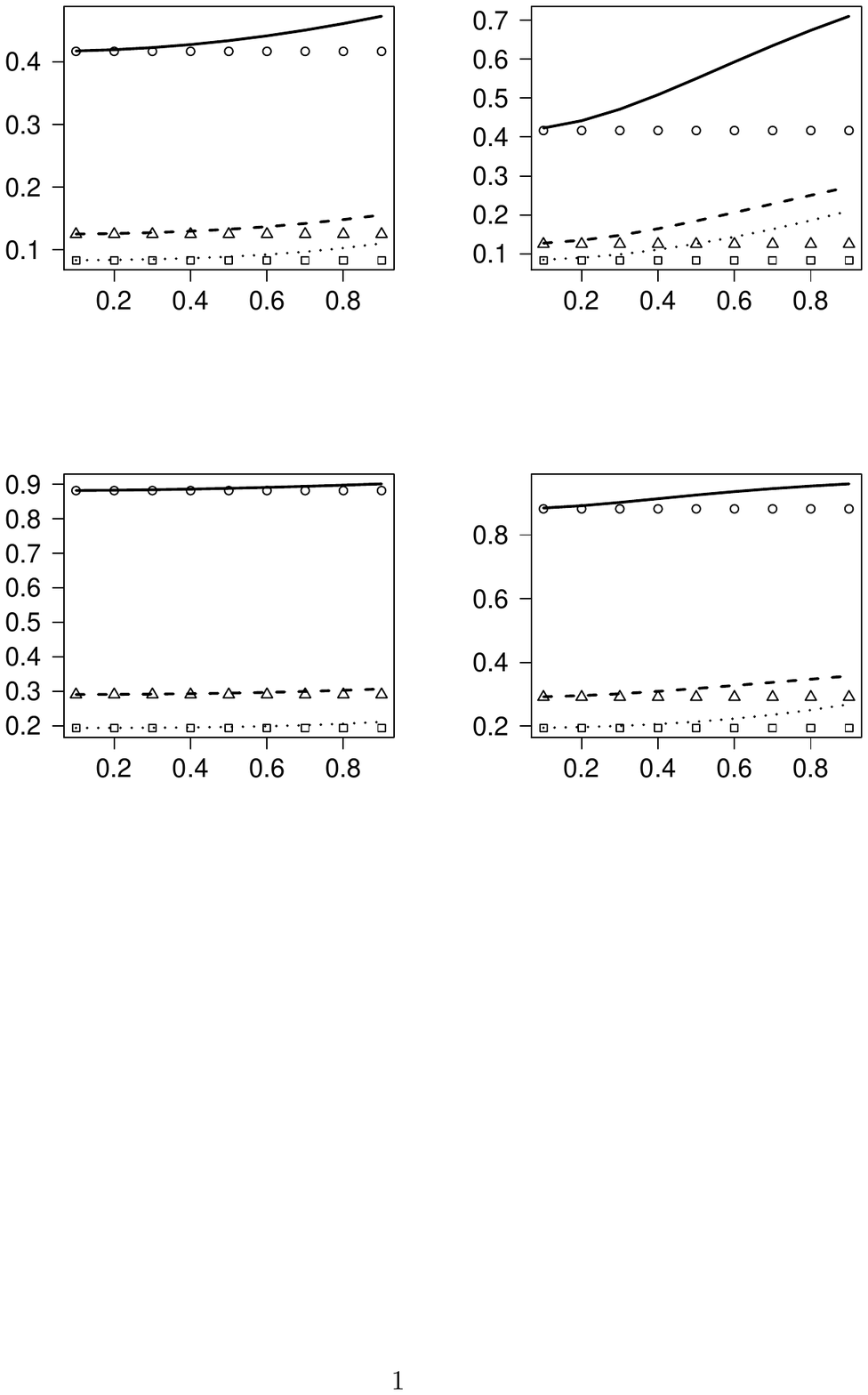}}%
      \put(120,0){(a) $r = c = 1$}\put(320,0){(b) $r = 1$, $c = 10$}%
    \put(120,-140){(c) $r = 0.1$, $c = 1$}\put(320,-140){(d) $r = 0.1$, $c = 10$}%
    \put(160,-258){$\psi$}\put(360,-258){$\psi$}%
    %\put(80,-50){(a) $r = c = 1$}\put(230,-50){(b) $r = 1$, $c = 10$}%
    %\put(80,-240){(c) $r = 0.1$, $c = 1$}\put(230,-240){(d) $r = 0.1$, $c = 10$}%
    %\put(0,-400){\includegraphics[height=4in,width=4in,angle=0,scale=1.3]{covTaTb}}%
    %\put(100,-200){$\psi$}\put(290,-200){$\psi$}\put(100,-385){$\psi$}\put(290,-385){$\psi$}%
  \end{picture}%
 
\end{figure}

\clearpage
\pagebreak
\newpage

\begin{figure}
  % %% teiknasigma2d.R
  \caption{The estimate $\mathfrak{D}$ of the expected value
    $\mathbb{E}[r^2]$ as a function of $\psi$ for different values of $c$
    (see panels), and $r$.  The solid lines represent the value of
    $\mathfrak{D}$ associated with the usual Wright-Fisher model.}
  \label{figure7}
  % Figure 7.  The estimate $\mathfrak{D}$ of the expected value
  % $\mathbb{E}[r^2]$ as a function of $\psi$ for different values of $c$
  % (see panels), and $r$.  The solid lines represent the value of
  % $\mathfrak{D}$ associated with the usual Wright-Fisher model.   
  \begin{picture}(50,50)%
     \put(-40,-440){\includegraphics[height=6.5in,width=6.5in,angle=0,scale=1.3]{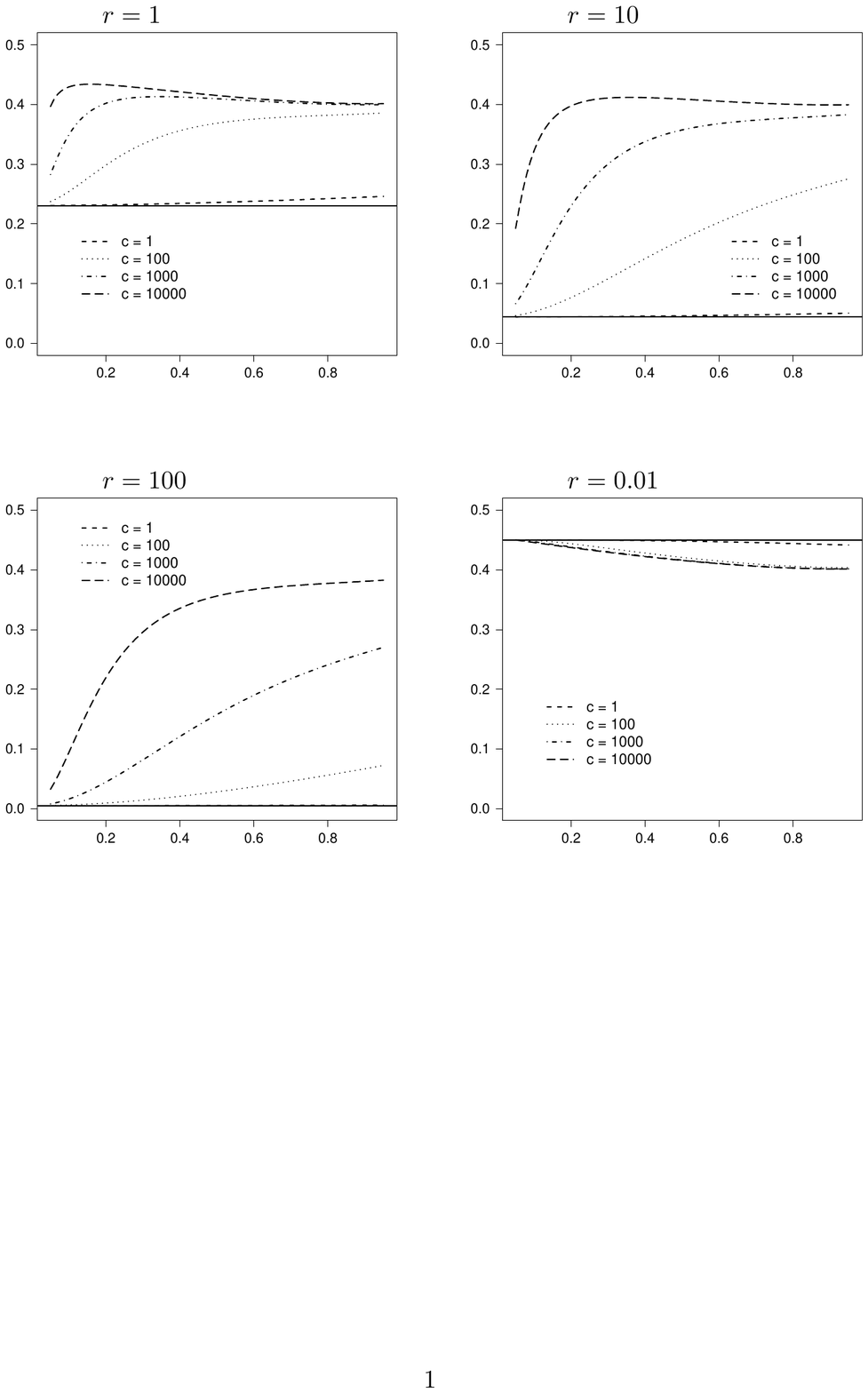}}%
     \put(150,-260){$\psi$}\put(350,-260){$\psi$}%
  \end{picture}%
\end{figure}%%
\clearpage\pagebreak\newpage

\begin{figure}%
  \caption{The prediction $\mathfrak{D}$ of linkage disequilibrium
    obtained from the ARG associated with the Beta$(\vartheta,\gamma)$
    dist.  The different lines represent different values of $\gamma$
    (upper panels) or $\vartheta$ (lower panels).   The broken horizontal line represents the prediction obtained from the usual ARG.   }%
  \label{fig:LDgammatheta}%
   \begin{picture}(50,50)%
     \put(20,-360){\includegraphics[height=4.5in,width=4.5in,angle=0,scale=1.3]{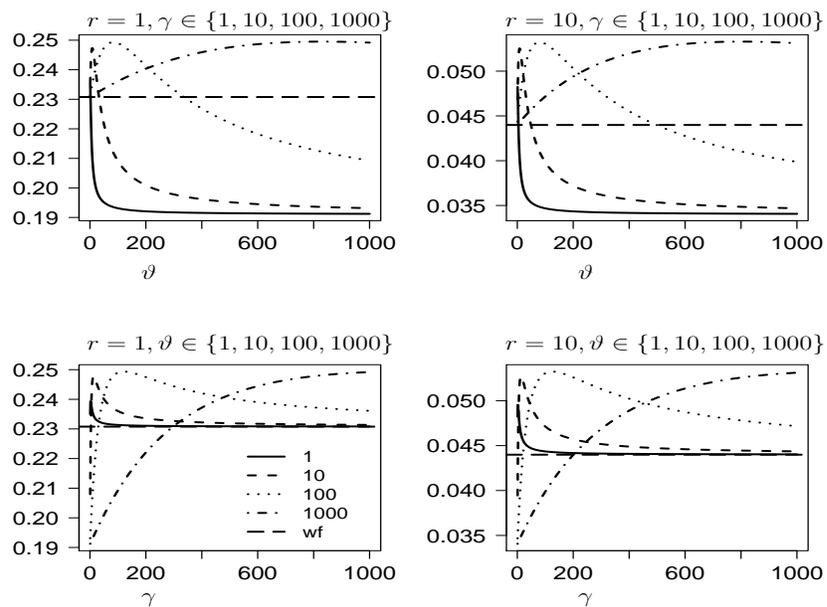}}%
  \end{picture}%
  \end{figure}%

\clearpage\pagebreak\newpage

\begin{figure}%
  \caption{The expected variance of pairwise differences for sample
    size 50 as a function of the recombination rate $r$ for different
    values of the parameters $c$, $\psi$, and $\theta$ as shown.  }%
  \label{fig:pairvarrho0}%
  \begin{picture}(50,50)
   \put(20,-430){\includegraphics[height=7in,width=7in]{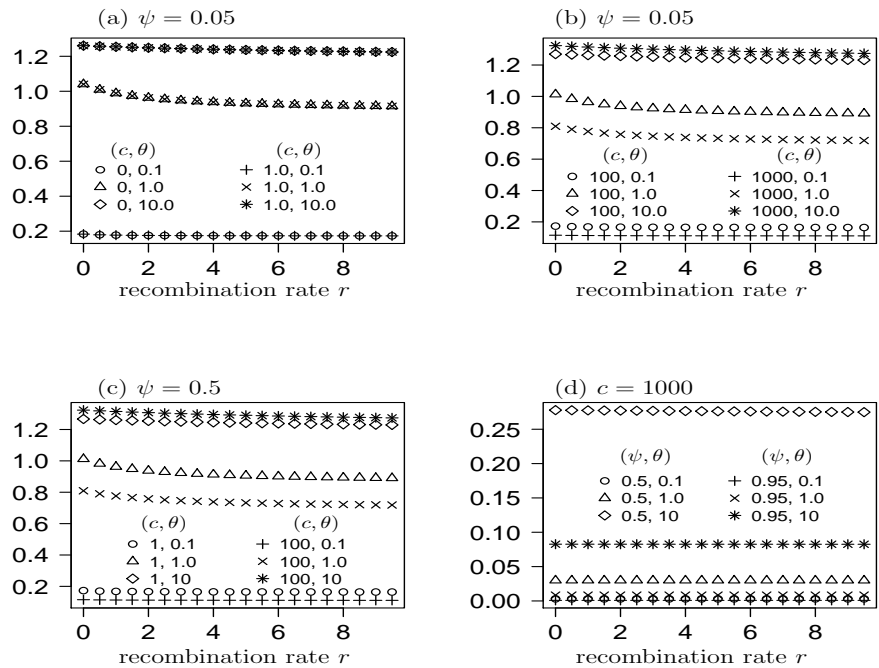}}%
   \end{picture}%
  \end{figure}%

\clearpage\pagebreak\newpage

\begin{figure}%
  \caption{The expected variance of pairwise differences as a function of sample
    size for different values of the parameters $c$, $\psi$,  $r$, and $\theta$ as
    shown.  }%
  \label{fig:pairwnn0}%
  \begin{picture}(50,50)
   \put(20,-430){\includegraphics[height=7in,width=7in]{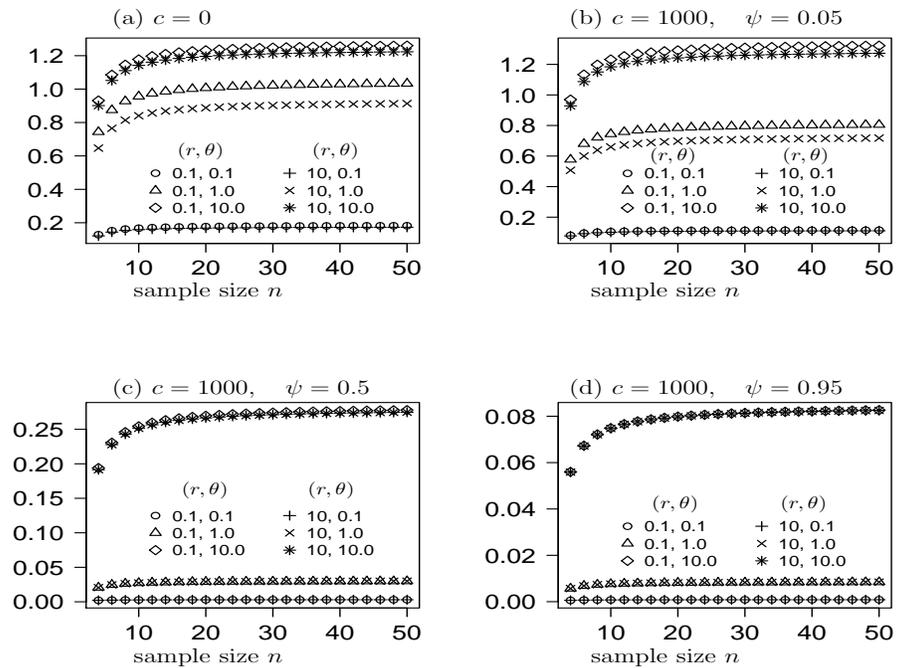}}%
   \end{picture}%
  \end{figure}%

\end{document}